\newcounter{theo}[section] \setcounter{theo}{0}
\newtheorem{thm}{Theorem}
\newtheorem{dfn}{Definition}
\newtheorem{lem}{Lemma}
\newtheorem{ex}{Example}
\newtheorem{prop}{Proposition}
\newtheorem{coro}{Corrollary}
\newenvironment{sketch}{{\sc Proof sketch.}\rm }{\hfill $\Box$ }
\newcounter{newct}
\newenvironment{customthm}[1]
  {\innercustomthm}
  {\endinnercustomthm}
\newcommand{\bv}{\begin{array}}
\newcommand{\appCoro}[3][]{
\vspace{1mm} {\sc Corollary~\ref{#2}}{\ifthenelse{\equal{#1}{}}{}{ ({\bf #1})}. }{\em #3 \vspace{1mm}}}
\newcommand{\mainmessage}[3][]{
\vspace{1mm} {\bf Main Message~{#2}}{\ifthenelse{\equal{#1}{}}{}{ ({\bf #1})}. }{\em #3 \vspace{1mm}}}
\newcommand{\myparagraph}[1]{\vspace{2mm}\noindent {\bf\boldmath #1}}
\newcommand{\cal}{\mathcal }
\newcommand{\ma}{\mathcal A}
\newcommand{\calZ}{\mathcal Z}
\newcommand{\calY}{\mathcal Y}
\newcommand{\calX}{\mathcal X}
\newcommand{\calT}{\mathcal T}
\newcommand{\calG}{\mathcal G}
\newcommand{\calN}{\mathcal N}
\newcommand{\calV}{\mathcal V}
\newcommand{\ra}{\rightarrow}
\newcommand{\cor}{{\overline r}} 
\newcommand{\tb}{f}
\newcommand{\fw}{\text{\sc FPD}}
\newcommand{\hist}{\text{Hist}}
\newcommand{\histset}[3]{{\mathbb H}_{#1,#2}^{#3}}
\newcommand{\equal}{\text{Eq}}
\newcommand{\Omit}[1]{}
\newcommand{\calD}{\mathcal D}
\newcommand\boxit[1]{{\begin{tabular}{|@{\hspace{.5mm}}c@{\hspace{.5mm}}|}\hline$#1$\\ \hline\end{tabular}} }
\newcommand{\mn}{{\mathcal N}}
\newcommand{\approval}{\text{app}}
\newcommand{\sgroup}{{\cal S}_{\ma}}
\newcommand{\closure}[1]{\overline{#1}}
\newcommand{\CL}{\text{CL}}
\newcommand{\ano}{\text{\sc Ano}}
\newcommand{\neu}{\text{\sc Neu}}
\newcommand{\res}{\text{\sc Res}}
\newcommand{\anr}{\text{\sc ANR}}
\newcommand{\prefspace}{{\cal E}}
\newcommand{\decspace}{{\cal D}}
\newcommand{\listset}[1]{{\cal L}_{#1}}
\newcommand{\committee}[1]{{\cal M}_{#1}}
\newcommand{\commonpref}[1]{\text{\sc C}_{#1}}
\newcommand{\commondec}[1]{\text{\sc S}_{#1}}
\newcommand{\card}[1]{
\ifthenelse{\equal{#1}{}}{Empty.}{Nonempty.}
}
\newcommand{\anrposs}[1][]{\text{\sc ANR-possibility}
\ifthenelse{\equal{#1}{}}{}{\text{-}#1}
}
\newcommand{\anrcert}[1][]{\text{\sc ANR-cert}
\ifthenelse{\equal{#1}{}}{}{\text{-}#1}
}
\newcommand{\reducesto}{\leq_T^P}
\newcommand{\GI}{\text{GI}}
\newcommand{\GA}{\text{GA}}
\newcommand{\CS}[1]{\text{\sc CS}_{#1}}
\newcommand{\scsetting}{{(\prefspace,\decspace)}}
\newcommand{\ap}{\text{\bf ap}}
\newcommand{\apv}{\text{\sc APV}}
\newcommand{\cf}{\text{\bf cf}}
\newcommand{\cl}{\text{\bf cl}}
\newcommand{\cllex}{\text{\bf cl}_\text{lex}}
\newcommand{\rs}{\text{\bf rs}}
\newcommand{\merv}{\text{MERV}}
\newcommand{\merev}{\text{MEReV}}
\newcommand{\veri}{\textcircled{v}}
\newcommand{\verified}[1]{
#1_{\hspace{-.5mm}\tiny\veri}
}
\newcommand{\twolines}[2]{\begin{tabular}{@{}c@{}}#1\\#2\end{tabular}}
\newcommand{\CLR}[1][]{
\ifthenelse{\equal{#1}{\text{CLR}}}
{}{\text{CLR}_{#1}}
}
\newcommand{\CLTB}[1][]{
\ifthenelse{\equal{#1}{\text{CLTB}}}
{}{\text{CLTB}_{#1}}
}
\newtcolorbox{KRQ}[2][]{
                lower separated=false,
                colback=white,
colframe=black,fonttitle=\bfseries,
colbacktitle=black!70,
coltitle=white,
enhanced,
attach boxed title to top left={yshift=-0.1in,xshift=0.15in},
                 boxed title style={boxrule=0pt,colframe=white,},
title=#2,#1}
\title[Computing Most Equitable Voting Rules]{Computing Most Equitable Voting Rules}
\author{
Lirong Xia}  
\affiliation{
\institution{Rutgers University-New Brunswick and DIMACS} 
\country{USA} 
}
\email{lirong.xia@rutgers.edu}
\begin{document}


\begin{abstract}  
How to design fair and (computationally) efficient voting rules is a central challenge in Computational Social Choice. In this paper, we aim at designing efficient algorithms for computing {\em most equitable rules} for large classes of preferences and decisions, which optimally satisfy two fundamental fairness/equity axioms: {\em anonymity} (every voter being treated equally) and {\em neutrality} (every alternative being treated equally). By revealing a natural connection to the graph isomorphism problem and leveraging recent breakthroughs by~Babai [2019], we design quasipolynomial-time algorithms that compute  {\em most equitable rules with verifications}, which also compute  verifications about whether anonymity and neutrality are satisfied at the input profile. Further extending this approach, we propose the {\em canonical-labeling tie-breaking}, which runs in quasipolynomial-time and optimally breaks ties to preserve anonymity and neutrality. As for the complexity lower bound, we prove that even computing verifications for most equitable rules is $\GI$-complete (i.e., as hard as the graph isomorphism problem), and sometimes $\GA$-complete (i.e., as hard as the graph automorphism problem), for many commonly studied combinations of preferences and decisions. To the best of our knowledge, these are the first problems in computational social choice that are known to be complete in the class $\GI$ or $\GA$.
\end{abstract}


\maketitle
 
\setcounter{page}{1} 
\section{Introduction}


In classical social choice settings, it is often assumed that agents' preferences are linear orders over all alternatives and the collective decision is a single winning alternative.  Motivated by the increasing public need for fair, efficient, and trustworthy collective decisions in modern society~\citep{Mancini2015:Why-it-is-time}, the applications of social choice theory have been significantly expanded to settings with a large variety of preferences and decisions~\citep{Xia2022:Group}. Modern applications often allow agents to express much richer types of preferences, such as {\em list preferences} (linear orders over a subset of alternatives, e.g., in information retrieval~\citep{Liu11:Learning}, recommender systems~\citep{Dwork01:Rank}, and crowdsourcing~\citep{Mao12:Social}) and {\em committee preferences} (sets of alternatives, e.g., in approval voting), and the collective decision can also be a list or a committee as well.

Among all  equity/fairness axioms, {\em anonymity} (all agents being treated equally)  and {\em neutrality} (all alternatives being treated equally) are broadly viewed as ``{\em minimal demands}'' and ``{\em uncontroversial}''~\citep{Myerson2013:Fundamentals,Zwicker2016:Introduction,Brandt2019:Collective}. Therefore, it is natural and desirable to design voting rules that simultaneously satisfy anonymity, neutrality, and {\em resolvability} (the rule must choose a single decision). Such rules are called {\em ANR  rules}. 
Unfortunately, no voting rule always satisfies ANR under the classical social choice settings~\citep{Moulin1983:The-Strategy}. This negative result is known as the {\em ANR impossibility}, which is ``{\em among the most well-known results in social choice theory}''~\citep{Ozkes2021:Anonymous}. Knowing that no rule can satisfy ANR for all profile, a natural goal is to design voting rules that {\em optimally} satisfies ANR, for example by maximizing the likelihood of ANR satisfaction according to some probability distributions over the profiles. But which distribution/model should be used, given that {\em ``all models are wrong''}~\citep{Box1979:Robustness}?

Surprisingly, the optimal rules are insensitive to the underlying distribution~\citep{Xia2023:Most}. Such rules are called {\em most equitable rules}, which satisfy ANR at every {\em ANR-possible profile}, where ANR can be satisfied by {\em some} rule.  In other words, a most equitable rule maximizes the likelihood of ANR satisfaction under every distribution over the profile. \citet{Xia2023:Most} proved that most equitable  rules always exist and one of them can be computed in polynomial time when agents' preferences are linear orders over all alternatives. This addresses the computational challenge discussed by~\citet{Bubboloni2021:Breaking}, but how to efficiently compute a most equitable rule for  other types of preferences, especially list preferences over a subset of alternatives and committee preferences, is left open. This is the key research question we address in this paper.

\begin{KRQ}{Key Research Question}
\begin{center}
How can we compute  most equitable rules for general preferences and decisions?
\end{center}
\end{KRQ}
Additionally, it is also important to provide a {\em verification} of ANR satisfaction, which informs the agents about the quality of the decision (i.e., whether ANR is satisfied) as well as its optimality (i.e., if any voting rule can do better). Such verifications help improve the transparency and trustworthiness of collective decision making processes, and have gained much attention from theoreticians and practitioners. For example, 
the subdomain of social choice that analyzes satisfaction of social choice axioms based on models and real-world data~\citep{Diss2021:Evaluating} can be viewed as providing verifications of desirable axioms, in particular the {\em Condorcet criterion}.  FairVote.org provides verifications of Condorcet criterion for Ranked Choice Voting to justify its usage~\footnote{\url{https://fairvote.org/resources/data-on-rcv/}}.  Surprisingly, despite the significance of ANR, we are not aware of a previous work on computing ANR verifications.

\subsection{Our Contributions}
We design efficient algorithms for {\em most equitable rules with verifications ($\merv$s)}, denoted by $\verified{r}$, for a large class of commonly studied social choice settings. Given any profile $P$, $\verified{r}(P)=(d,v)$, where $d$ is the decision and $v\in\{0,1\}$ is the verification that indicates whether ANR is satisfied at $P$. The verification can be viewed as a strong guarantee, in the sense that $v=1$ means that ANR is satisfied, while $v=0$ means that ANR cannot be satisfied at $P$ no matter what rule is used. Our overall approach is based on revealing and leveraging a natural connection between preferences and graphs. In a nutshell, we show 
$$\boxit{\GI}
\reducesto
\boxit{\text{computing  ANR verifications}}
\reducesto
\boxit{\text{computing a $\merv$}}
\reducesto
\boxit{\CL}
 $$
where $\GI$ is the {\em graph isomorphism problem}, which asks whether two given undirected unweighted graphs are isomorphic.  $\reducesto$ represents a polynomial-time Turing reduction.  $\CL$ is the {\em graph canonical labeling problem}, which computes a {\em canonical form} of a given graph. A canonical form is a graph that can be viewed as the ``representative'' of graphs that are isomorphic to each other.

\myparagraph{Social choice settings.}  We consider a large class of social choice settings $(\prefspace,\decspace)$, where $\prefspace$ and $\decspace$ denote the preference space and decision space, respectively. 
For any $m\in\mathbb N$, let  $\ma =[m] = \{1,\ldots,m\}$ denote the set of $m$ alternatives. For any  $i\le m$, let $\committee i$ denote the set of all {\em committees} over $i$ alternatives (subsets of $i$ alternatives); let $\listset i$ denote the set of all {\em lists} (linear orders) over all committees of $i$ alternatives;  and let  $\commonpref{m}\triangleq \bigcup_{i\le m} (\listset i\cup\committee i)$ denote the {\em common preferences}, which consist of all lists and committees. We call $\prefspace$ {\em common preferences}, if  $\prefspace = \commonpref{m}$; and we call  $\decspace$  {\em common decisions}, if $\decspace$ is $\listset k$ or $\committee k$ for some $k\le m$. 

\myparagraph{Easier than $\CL$.} Leveraging the connection revealed in this paper and~\citet{Babai2019:Canonical}'s quasipolynomial-time algorithm for $\CL$, our first main message is:

\mainmessage[Alg.~\ref{alg:CLVR-general}, Thm.~\ref{thm:CLVR-general}, and Coro.~\ref{coro:MERC-quasipoly}]{1}{For common preferences and common decisions, a $\merv$ can be computed in   quasipolynomial time.}

The message is quite positive, as  quasipolynomial-time algorithms, though slower than polynomial-time algorithms, are generally viewed as efficient. We further extend our methodology to design a new class of tie-breaking mechanisms called {\em canonical-labeling tie-breaking ($\CLTB$)}, which break ties to optimally satisfy anonymity and neutrality. Our second (positive) message is that such tie-breaking mechanisms with verifications can be efficiently computed as well. 

\mainmessage[Alg.~\ref{alg:CLTB-general}, Thm.~\ref{thm:CLTB-general}, and Coro.~\ref{coro:MERC-quasipoly}]{2}{For common preferences and common decisions,  a most equitable tie-breaking mechanism with verification can be computed in quasipolynomial time.}

Our approaches can further benefit from faster $\CL$ algorithms. This is possible because whether $\CL=P$ is an open question---while $\CL$ is harder than $\GI$, whether it is stricter harder is an open question, and whether $\GI = P$ is an open question. 
More discussions can be found in Section~\ref{sec:related}.

\myparagraph{Harder than $\GI$.} Because a most equitable rule can be computed in polynomial time when $m$ is bounded (Theorem~\ref{thm:bounded-m}), we consider a series of common social choice settings for variable $m$. One example is $(\committee \iota, \committee \kappa)$, where $\iota$ and $\kappa$ are functions $\mathbb N\ra\mathbb N$ such that for any  $m$, agents' preferences are $\iota(m)$ committees and the decision space are $\kappa(m)$ committees. We prove that computing verifications for $\merv$s, which is equivalent to computing  whether any rule (equivalently, any most equitable rule) satisfies ANR at any given profile, denoted by $\anrposs$, is $\GI$-complete or $\GA$-complete. This is our third main message.

\mainmessage[Table~\ref{tab:summary}]{3}{In most common settings, $\anrposs$ is $\GI$-C or $\GA$-C.}

\begin{table}[htp]
\begin{tabular}{@{}c@{ }c @{}}
\begin{NiceTabular}{|@{\ \small}wr{4.5em}@{\hspace{.1em}}|@{\small}c@{}|@{\small}c@{}|}
\cline{1-3}
\diagbox{\small $\prefspace =\committee\iota$}{\small $\decspace = \committee\kappa$}&  \twolines{\multirow{2}{*}{\small $1,\ldots,m-1$}}{\ } &\small $m$\\
\hline
 \small$1$ &  \small P (Thm.~\ref{thm:iota=1})& { }\multirow{4}{*}{\small\twolines{YES}{(Thm.~\ref{thm:m-committee})}} { }\\
\cline{1-2}
\small  $2,\ldots,m-2$ &  { }\small GI-C (Thm.~\ref{thm:GI-hardness-MM}) { }&  \\
\cline{1-2}
\small$m-1$ & \small  P (Thm.~\ref{thm:iota=1}) & \\
\cline{1-2}
\small $m$ & \small NO (Thm.~\ref{thm:m-committee}) &   \\
\hline
\end{NiceTabular}
&
\setlength{\tabcolsep}{3pt}
\begin{NiceTabular}{|@{\ }wr{4.5em}@{\hspace{.1em}}| c  |  c  |  c |}
\cline{1-4}
\diagbox{\small $\prefspace =\committee\iota$}{\small $\decspace = \listset\kappa$} &\small $m- \Omega(m^\epsilon)$& \twolines{\multirow{2}{*}{\small \twolines{\multirow{2}{*}{other}}}}{\ } &\small {$m-1$}{\& $m$}\\
\hline
 \small $1$ &   \multicolumn{3}{c|}{\small P (Thm.~\ref{thm:iota=1})}\\
 \cline{1-4}
\small  $2,\ldots,m-2$ &  \small GI-C (Thm.~\ref{thm:GI-hardness-ML}) &\small GA-H (Thm.~\ref{thm:GA-H}) & \small GA-C (Thm.~\ref{thm:GA-H})  \\
\hline
\small $m-1$ &   \multicolumn{3}{c|}{\small P (Thm.~\ref{thm:iota=1})}\\
\hline
\small $m$ &  \multicolumn{3}{c|}{\small NO (Thm.~\ref{thm:m-committee})}\\
\hline
\end{NiceTabular}
\\
\small (a) $(\prefspace,\decspace) = (\committee \iota, \committee \kappa)$ 
&
\small (b) $(\prefspace,\decspace) = (\committee \iota, \listset \kappa)$
\\
\ \\
\begin{NiceTabular}{|@{\ }wr{4.5em}@{\hspace{.1em}}|@{}c@{}|@{}c@{}|}
\cline{1-3}
\diagbox{\small $\prefspace =\listset\iota$}{\small $\decspace = \committee\kappa$}   &   \twolines{\multirow{2}{*}{\small $1,\ldots,m-1$}}{\ } & \small$m$\\
\hline
 $1$ & \small P (Thm.~\ref{thm:iota=1}) &{ }\small \multirow{4}{*}{\twolines{YES}{(Thm.~\ref{thm:m-committee})}} { }\\
\cline{1-2}
\small $m- \Omega(m^\epsilon)$&{ }\small {GI-C} {(Thm.~\ref{thm:GI-hardness-LM})}{ } &  \\
\cline{1-2}
 \small other  & ? & \\
\cline{1-2}
 \small $m-O(1)$& \small  P (Thm.~\ref{thm:const-unranked}) &   \\
\hline
\end{NiceTabular}
&
\setlength{\tabcolsep}{3pt}
\begin{NiceTabular}{|@{\ }wr{4.5em}@{\hspace{.1em}}| c | c | c |}
\cline{1-4}
\diagbox{\small $\prefspace =\listset\iota$}{\small $\decspace = \listset\kappa$} &\small $m- \Omega(m^\epsilon)$& \twolines{\multirow{2}{*}{\small \twolines{\multirow{2}{*}{other}}}}{\ } &\small  {$m-1$}{\& $m$}\\
\hline
 $1$ &   \multicolumn{3}{c|}{\small P (Thm.~\ref{thm:iota=1})}\\
 \cline{1-4}
\small $m- \Omega(m^\epsilon)$  &  \small  GI-C (Thm.~\ref{thm:GI-hardness-LL})& \small GA-H (Thm.~\ref{thm:GA-H})& \small GA-C (Thm.~\ref{thm:GA-H}) \\
\cline{1-4}
other & \multicolumn{2}{c}{?} & \small GA-H (Thm.~\ref{thm:GA-H})\\
\hline
\small $m-O(1)$&   \multicolumn{3}{c|}{\small P (Thm.~\ref{thm:const-unranked})}\\
\hline
\end{NiceTabular}
\\
\small (c) $(\prefspace,\decspace) = (\listset \iota, \committee \kappa)$ 
&
\small (d) $(\prefspace,\decspace) = (\listset \iota, \listset \kappa)$
\end{tabular}
\caption{Complexity of   $\anrposs$ for any constant $0<\epsilon\le 1$. Each of the subscripts $\iota$ and $\kappa$ represents a series of preference settings and decision settings, respectively, one for each $m$. The first column and the first row represent choices of $\iota$ and $\kappa$, respectively. ``YES'' or``NO'' means that the answer is   always ``YES'' or ``NO'', respectively. ``P'' stands for polynomial-time computable. ``GI-C'' stands for $\GI$-complete. ``GA-H'', ``GA-C'', and ``GA-E'' stand for $\GA$-hard, $\GA$-complete, and $\GA$-easy, respectively.  ``?'' stands for open question.
\label{tab:summary}}
\end{table}
As shown in  Table~\ref{tab:summary}, in all four types of combinations of preference space and decision space, $\anrposs$ is $\GI$-C  when  $\iota$ is neither too large nor too small, and $\kappa$ is not too large. For example, $\anrposs$ is $\GI$-C under $(\committee \iota, \committee\kappa)$  where $2\le \iota(m)\le m-1$ and $\kappa(m)\le m-1$ (Table~\ref{tab:summary} (a)). For $\GA$-completeness,  Table~\ref{tab:summary} (b) and (d) show that when $\iota$ is neither too large nor too small and $\kappa$ is $m-1$ or $m$, $\anrposs$ is $\GA$-C.  To the best of our knowledge, these are the first set of problems in (Computational) Social Choice, perhaps even in Economics and Computation, that are known to be $\GI$-C or $\GA$-C.


\subsection{Related Work and Discussions}
\label{sec:related}
\noindent{\bf Social choice settings.} As discussed in~\citep{Xia2023:Most}, the list/committee preferences and decisions cover many commonly studied social choice problems: $(\listset \ell,\listset k)$ as in Arrowian framework, $(\listset \ell,\committee k)$ as in multi-winner voting, $(\committee \ell,\listset k)$ as in approval voting, $(\committee \ell,\committee k)$ as in approval-based committee voting. The proposed polynomial-time most equitable rule in~\citep{Xia2023:Most} only works for $\prefspace = \listset m$. Our $\CL$-easiness results for $\CLR[\cl]$ and $\CLTB[\cl]$ consider a more general preference space that consists of all lists and committees. Our $\GI$-hardness results (computing an ANR verification) consider a different problem setting, where we are given a series of social choice settings for variable $m$. 

\myparagraph{Anonymity, neutrality, and the ANR impossibility.} Anonymity and neutrality are fundamental notions of equity  and fairness, whose formal studies can be dated back to as early as~\citet{May52:Set}'s seminal work that characterizes the majority rule. Conditions on $m$ and $n$ for the ANR impossibility to hold have been obtained for various commonly studied combinations of preferences and decisions studied in this paper: $(\prefspace,\decspace) = (\listset m,\listset k)$~\citep{Moulin1983:The-Strategy,Campbell2015:The-finer,Dogan2015:Anonymous,Ozkes2021:Anonymous,Bubboloni2014:Anonymous,Bubboloni2015:Symmetric} and  $(\prefspace,\decspace) = (\listset m,\committee k)$~\citep{Bubboloni2016:Resolute,Bubboloni2021:Breaking}. Recently, the characterizations were  generalized to all $(\prefspace,\decspace)\in \CS{m}\times \CS{m}$~\citep{Xia2023:Most}.  

\myparagraph{Tie-breaking mechanisms.} Some previous work focused on designing tie-break mechanisms to preserve as much ANR satisfaction as possible for    $(\prefspace,\decspace) = (\listset m,\listset 1)$~\citep{Dogan2015:Anonymous,Bubboloni2016:Resolute}.  \citet{Xia2020:The-Smoothed} showed that the commonly studied agenda tie-breaking and the fixed-voter tie-breaking are far from being optimal under a large class of semi-random models, proposed a tie-breaking mechanism, and proved that it is asymptotically optimal for even $n$'s. \citet{Xia2023:Most} proposed a polynomial-time computable  most equitable tie-breaking mechanism   for $\listset m$ preferences. Our $\CLTB[\cl]$ generalizes this tie-breaking mechanism to  $\listset{m-O(1)}$-preferences   (Theorem~\ref{thm:bounded-m}), and it runs in quasipolynomial time for a much larger class of preferences, i.e., $\commonpref m$.

\myparagraph{Computing voting rules and verifications.} The study of computational complexity of voting rules is a classical topic that shaped the field of Computational Social Choice~\citep{Brandt2016:Handbook}. There is a large body of previous work on investigating computational complexity and designing algorithms for commonly-studied voting rules,  for example for $\decspace = \committee 1$ (e.g., Dodgeson~\citep{Bartholdi89:Voting,HHR97}, Kemeny~\citep{Hemaspaandra05:Complexity},  Young~\citep{Rothe2003:Exact}, RCV/STV~\citep{Conitzer09:Preference}) and $\listset m$ (e.g., Kemeny ranking and Dodgeson ranking~\citep{Bartholdi89:Voting,Hemaspaandra05:Complexity}, Young ranking~\citep{Rothe2003:Exact}, and Slater~\citep{Conitzer06:Slater}) and for $\decspace = \committee k$ (e.g.,  Chamberlin-Courant and Monroes' rule~\citep{Procaccia07:Multi}, budgeted social choice~\citep{Lu2011:Budgeted}, and OWA-based rules~\citep{Skowron2016:Finding}).

Another closely related recent line of research aims to achieve proportional representation for committee voting, i.e., $\decspace = \committee k$. Much research has focused on proposing desirable axioms and computing a decision that satisfies them~\cite{Aziz2017:Justified,Cheng2020:Group,Peters2020:Proportionality,Skowron2021:Proportionality,Brill2023:Robust}, as well as computing verifications for their satisfaction~\citep{Aziz2017:Justified,Aziz2018:On-the-Complexity,Aziz2020:The-expanding,Janeczko2022:The-complexity,Brill2023:Robust}.  See~\citep{Lackner2023:Multi-Winner} for a comprehensive survey. While the high-level goal of computing winners and verifications (of satisfaction of axioms) in social choice is not new, we are not aware of previous work that computes most ANR rules for general preferences and decisions (except~\citep{Xia2023:Most}, which tackles $\prefspace = \listset m$), and we are not aware of a previous work that computers ANR verifications.
 
\myparagraph{Graph isomorphism and computation.} $\GI$ is a fundamental and well-studied problem that plays an important role in complexity theory, partly because it is {\em ``one of the few natural problems in the complexity class NP that could neither be classified as being hard (NP-complete) nor shown to be solvable with an efficient algorithm''}~\citep{Grohe2020:The-graph}.  It is also widely believed that $\GI$ is not NP-complete, because otherwise the polynomial hierarchy NP collapses to its second level, which is believed to be unlikely. It is well-known that $\GA\reducesto\GI\reducesto \CL$, yet no reduction in any reverse direction is known. There is a large body of literature on algorithms for $\GI$, which can be found in various surveys over the past 50 years, e.g., \citep{Booth1977:Problems,Zemlyachenko1985:Graph,Grohe2020:The-graph}. The state of the art is the breakthrough of the quasipolynomial-time algorithm for $\GI$ by~\citet{Babai2018:Group}, which was later extended to a quasipolynomial-time algorithm for $\CL$ by~\citet{Babai2019:Canonical}. The positive messages of our paper (easiness of computing $\merv$ and  most equitable tie-breaking mechanism with verification) leverage the latter work, and they can benefit from any improvement in $\CL$ algorithms in light of the connections revealed in this paper.  
We believe that computing $\merv$ and computing $\anrposs$ are interesting to the theory community as well because it is a natural problem to develop efficient algorithms and build complexity theories.

 \section{Preliminaries}
\label{sec:prelim}

\myparagraph{\bf Social choice setting $(\prefspace,\decspace)$.} Let $\ma=[m] = \{1,\ldots, m\}$ denote the set of $m\ge 2$ {\em alternatives}.  There are  $n\in\mathbb N$ agents, each using an element in the {\em preference space} $\prefspace$ to represent his or her preferences, called a {\em vote}.  For any  $i\le m$, let $\listset i$ denote the set of all {\em $i$-lists}  (linear orders over $i$ alternatives) and let $\committee i$ denote the set of all {\em $i$-committees} (subsets of $i$ alternatives).  Let the {\em common preferences} $\commonpref{m}$ denote the set of all $\ell$-lists and $\ell$-committees over $\ma$, that is, $\commonpref m\triangleq \bigcup_{\ell\le m} (\listset \ell\cup\committee \ell)$.  We also define a lexicographic order $1\rhd\cdots\rhd m$ and extend it to $\commonpref m$ in a natural way, such that elements within $\committee {i}$ and elements within $\listset {i}$ are compared lexicographically, and the former have higher priorities than the latter; and for all $i_1<i_2$, elements in $\committee {i_1}$ (respectively, $\listset {i_1}$) have higher priority than elements in $\committee {i_2}$ (respectively, $\listset {i_2}$).

Given  $\prefspace$, the vector of $n$ agents' votes, denoted by $P$, is called a {\em (preference) profile}, sometimes called an $\prefspace$-profile. Let $\calD$ denote the decision space. Let $\commondec m\triangleq\{\listset{i},\committee{i}:1\le i\le m\}$. We say that $\calD$ is a {\em common decision}, if  $\decspace\in \commondec m$. For any $m\in\mathbb N$, the {\em Common Setting}~\citep{Xia2023:Most}, is defined as $\CS{m}\triangleq \commondec m \times \commondec m$. Notice  $\commondec m$ consists of {\em sets} of preferences,  while $\commonpref{m}$ consists of preferences.

\myparagraph{Histograms.} Let $\hist(P)\in {\mathbb Z}_{\ge 0}^{|\prefspace|}$ denote the {\em histogram} of $P$, which can be viewed as the anonymized $P$ that contains the number of occurrences of each type of preferences in  $P$.  We sometimes represent a histogram $\vec h$ in a compact format $(h_1\times R_1,\ldots, h_{n'}\times R_{n'})$, where $h_j$'s are positive integers such that $\sum_{j=1}^{n'}h_j = n$, and $R_j\in \prefspace$ are sorted in lexicographic order. Let $\histset{m}{n}{\prefspace}$ denote the set of all histograms of  $\prefspace$-profiles of $n$ votes. The lexicographic priority order $\rhd$ can be naturally extended to $\histset{m}{n}{\commonpref m}$.


\myparagraph{\bf Voting rules.} Given a social choice setting $(\prefspace,\decspace)$, an {\em  irresolute} voting rule $\cor: \prefspace^n\ra(2^\decspace\setminus\emptyset)$ is a
mapping from an $\prefspace$-profile to a non-empty subset of $\decspace$. For example, when $(\prefspace,\decspace)=(\commonpref m,\listset 1)$, for any $t\in [m]$ let $\closure{\approval_t}$  denote the irresolute  {\em top-$t$-approval rule}  such that for every $R\in \committee \ell$, every alternative in $R$ receives $1$ point, and for every $R\in \listset \ell$, only the alternatives ranked within top $t$ in $R$ receive $1$ point. The winners are the alternatives with maximum total points.   
 A {\em resolute}  voting rule $r: \prefspace^n\ra \decspace$ is an irresolute rule that always chooses a single decision. We slightly abuse the notation by using $r(P)=d$ and $r(P)= \{d\}$ interchangeably. A voting rule $\cor'$ is a {\em refinement} of another voting rule $\cor$ if for all profiles $P$, $\cor'(P)\subseteq \cor(P)$.   
 
\myparagraph{\bf Tie-breaking mechanisms.}  Many commonly studied resolute voting rules are defined as the result of a {\em tie-breaking mechanism}   applied to the outcome of an irresolute rule. A tie-breaking mechanism $\tb$ is a mapping from a profile $P$ and a non-empty set $D\subseteq \calD$, which represents co-winners, to a single decision in $D$.  For example,   the {\em lexicographic tie-breaking} breaks ties in favor of alternatives ranked higher w.r.t.~a pre-defined   priority order $\rhd$ over $\decspace$. 
Let $\tb\ast\cor$ denote the refinement of $\cor$ by applying $f$. That is, for any profile $P$, $(\tb\ast\cor)(P) =  \tb(P,\cor(P)) $. 

\myparagraph{Permutations.} Let $\sgroup$ denote the set of all permutations over $\ma$. A permutation can be represented by its cycle form. For example, when $m=4$, $\sigma= (1,2)(3,4)$ denote the permutation that exchanges $1$ and $2$, and exchanges $3$ and $4$.   Any permutation $\sigma\in \sgroup$ can be naturally extended to  $k$-committees, $k$-lists, profiles, and histograms over $\ma$ as follows. For any $k$-committee $M = \{a_1,\ldots,a_k\}$, let $\sigma(M)\triangleq \{\sigma(a_1),\ldots,\sigma(a_k)\}$; for any $k$-list $R = [a_1\succ \cdots \succ a_k]$, let $\sigma(R)\triangleq [\sigma(a_1)\succ \cdots\succ\sigma(a_k)]$; for any profile $P = (R_1,\ldots, R_n)$, let $\sigma(P) \triangleq (\sigma(R_1),\ldots, \sigma(R_n))$; 
and for any histogram $\vec h \in \histset{m}{n}{\prefspace}$, let $\sigma(\vec h)$ be the histogram such that for every ranking $R$, $[\sigma(\vec h)]_R = [\vec h]_{\sigma^{-1}(R)}$, where $[\sigma(\vec h)]_R$ is the value of $R$-component in $\sigma(\vec h)$. For any pair of histograms $\vec h_1,\vec h_2$, if $\sigma(\vec h_1)=\vec h_2$, then we say that $\vec h_1$ is {\em isomorphic} to $\vec h_2$, denoted by $\vec h_1\cong\vec h_2$, and $\sigma$ is called an {\em isomorphism} between $\vec h_1$ and $\vec h_2$.

\noindent\begin{minipage}[t][][b]{.6\textwidth}
\begin{ex}
\label{ex:perm}
Let $m=4$, $n=4$,  and $P_1$ and $P_2$ denote any $\committee 2$-profiles whose histograms $\vec h_1$ and $\vec h_2$ are indicated in  table on the right. Let $\sigma= (1,2)(3,4)$. Then $\sigma(\vec h_1)=\vec h_2$, i.e., $\vec h_1\cong\vec h_2$, and $\sigma$ is an isomorphism  between them.
\end{ex}
\end{minipage}
\begin{minipage}[t][][b]{.41\textwidth}
\centering 
\begin{tabular}{|@{}c@{}|@{}c@{}|@{}c@{}|@{}c@{}|@{}c@{}|@{}c@{}|@{}c@{}|}
\hline $\committee 2$& $\{1,2\}$& $\{1,3\}$& $\{1,4\}$ &$\{2,3\}$ & $\{2,4\}$ & $\{3,4\}$ \\
\hline $\vec h_1$ & $1$&$0$& $1$& $1$& $1$&$0$  \\
\hline $\vec h_2$& $1$&$1$& $1$& $1$& $0$&$0$  \\
\hline
\end{tabular}
\end{minipage}

\myparagraph{\bf Anonymity, neutrality, and resolvability.} 
For any rule $\cor$ and any profile $P$, we define $\ano(\cor,P) \triangleq  1$ if  for any profile $P'$ with $\hist(P')=\hist(P)$,   $\cor(P')=\cor(P)$; otherwise $\ano(\cor,P) \triangleq  0$. We define $\neu(\cor,P) \triangleq  1$ if   for every permutation $\sigma$ over $\ma$, we have $\cor(\sigma(P))=\sigma(\cor(P))$; otherwise $\neu(\cor,P) \triangleq 0$. We define $\res(\cor,P) \triangleq  1$ if $|\cor(P)|=1$; otherwise $\res(\cor,P) \triangleq  0$. Notice that a resolute rule outputs a single {\em decision}, which may not be a single alternative---for example when $\decspace = \committee 2$, a decision is a set of two alternatives. If $\ano(\cor,P) = 1$ (respectively, $\neu(\cor,P) = 1$ or $\res(\cor,P) = 1$), then we say that $\cor$ satisfies anonymity (respectively, neutrality or resolvability) at $P$. We further define $\anr(\cor,P)  \triangleq \ano(\cor,P)\times \neu(\cor,P)\times \res(\cor,P)$. That is, $\anr(\cor,P)=1$ ($\cor$ satisfies {\em ANR} at $P$) if and only if $\cor$ satisfies all three properties at $P$.  Next, we recall the definition of ANR-possible profiles and most equitable rules~\citep{Xia2023:Most}. 
\begin{dfn}[{\bf ANR-possible profiles and most equitable rules}{}]
\label{dfn:anr-possible}
For any social choice setting, a profile $P$ is  {\em ANR-possible}, if there exists a voting rule $r^*$ such that $\anr(r^*,P)=1$. A resolute rule $r$ is {\em most equitable}, if it satisfies ANR at all ANR-possible profiles.
\end{dfn}
In other words, if a profile $P$ is not ANR-possible, then no voting rule can satisfy ANR at $P$. Any most equitable rule (optimally) satisfies ANR at every profile.

\myparagraph{Canonical form and canonical labeling.}
A {\em canonical form} for graphs is a mapping $\cf$ from graphs to graphs such that (1) for every graph $G$, $\cf(G)$ is isomorphic to $G$, denoted by $\cf(G)\cong G$, and (2) for any pairs of graphs $G$ and $G'$, $\cf(G)=\cf(G')$ if and only if $G\cong G'$. Given a $\cf$, a {\em canonical labeling} is a function $\cl$ that maps each graph $G$ to a permutation of its vertices that brings $G$ to $\cf(G)$. 
In this paper, we will make use of the lexicographic canonical labeling defined as follows.

\begin{dfn}[{\bf Lexicographic canonical labeling}{}]
Let $\cllex$ denote the {\em lexicographic canonical labeling} that maps any graph $G$ over $\ma$ to a graph that has the highest priority order (among all graphs that are isomorphic to $G$), where the set of edges are compared w.r.t.~the natural lexicographic extension of order $1\rhd2\rhd\cdots\rhd m$. We write $G_1\rhd G_2$, if $G_1$ has a higher priority than $G_2$.
\end{dfn}
For example, when $m=4$,  $\{1,2\}\rhd \{1,3\}\rhd \{1,4\}\rhd \{2,3\}\rhd \{2,4\}\rhd \{3,4\}$.

\noindent\begin{minipage}[t][][b]{.6\textwidth}
\begin{ex}[{\bf Lexicographic canonical labeling}{}]
\label{ex:graph}   
Let $G_1$ and $G_2$ denote the graphs in the figure on the right. 
We have $G_2\rhd G_1$, because both graphs contain the top-priority edge $\{1,2\}$, and only $G_2$ contains the second-priority edge $\{1,3\}$. We also have $G_1\cong G_2$. In fact, $\cllex(G_1)=G_2$.
\end{ex}
\end{minipage}
\ \ \ 
\begin{minipage}[t][][b]{.35\textwidth}
\includegraphics[width=1\textwidth]{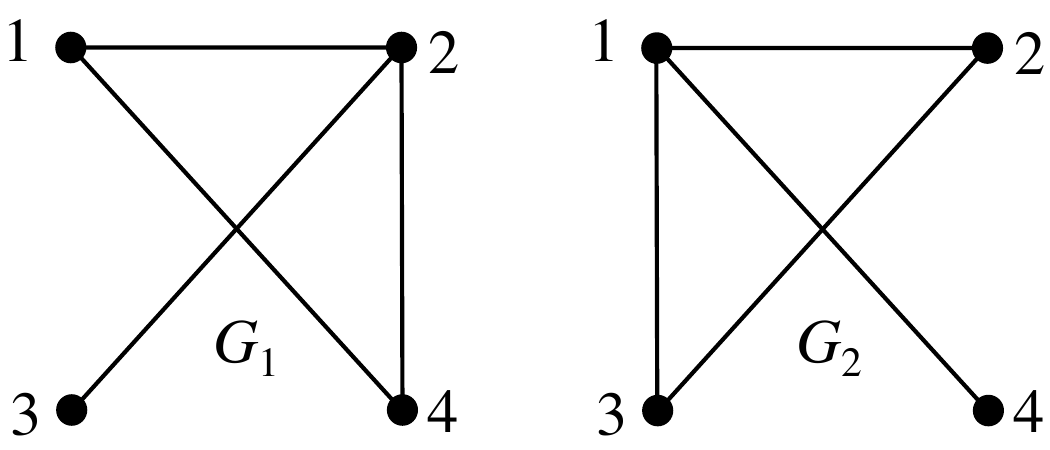} 
\centering
\end{minipage}

\begin{dfn}[{\bf Computational problems: $\GI$, $\GA$, and $\CL$}{}]
\label{dfn:GI}
In the {\em graph isomorphism ($\GI$)}  problem, the inputs are two undirected unweighted graphs $G_1,G_2$, and the output is YES if $G_1$ and $G_2$ are isomorphic; and is NO otherwise. In the {\em graph automorphism ($\GA$)}  problem, the input is a graph $G$ and the output is YES if and only if there is a non-trivial automorphism, which is an isomorphism from $G$ to itself  such that at least one vertex is mapped to different vertex.  In the {\em graph canonical labeling  ($\CL$)} problem for a given canonical form, the input is a graph $G$ and the output is a permutation that maps $G$ to its canonical form.
\end{dfn}

\myparagraph{Organization of the paper.} In Section~\ref{sec:characterizaion-MER}, we convert the problem of computing a most equitable rule to two computational problems, which  will be later shown to be closely related to $\CL$ and $\GI$, respectively. In Section~\ref{sec:simple-case}, we focus on the $\scsetting=(\committee 2,\committee 1)$ case to give the reader a taste of our approach. Due to the space constraint, the more  general and challenging cases will be tackled in Appendix~\ref{sec:general-easy} ($\CL$-easiness) and Appendix~\ref{sec:general-hard} ($\GA$/$\GI$-hardness). 

\section{Characterization of Most Equitable Rule with Verification}
\label{sec:characterizaion-MER}
We first recall a few formal definitions to present the characterization.

\begin{dfn}[{\bf Most equitable rule with verification ($\merv$)}{}] 
Given $(\prefspace,\decspace)$, a {\em {\bf M}ost {\bf E}quitable  {\bf R}ule with {\bf V}eification  ($\merv$)}  for $n$ voters is a mapping
$\verified{r}: \prefspace^n\ra \decspace\times \{0,1\}$ that consists of  two functions $\langle r,\veri \rangle$, such that (1) for any $\prefspace$-profile $P$, $\verified{r}(P) = (r(P),\veri(P))$, (2) $r$ is a most equitable rule, and (3) $\veri$ is a {\em verification function} such that for any profile $P$, $\veri(P)=1$ if and only if $P$ is ANR-possible. 
\end{dfn}

\begin{dfn}[{\bf Representative selection function}{}]
\label{dfn:rs}
For $\commonpref m$-histograms, a {\em representative selection function} is a mapping $\rs: \histset{m}{n}{\commonpref m}  \ra \sgroup$ such that for any pair of isomorphic histograms $\vec h_1\cong \vec h_2$, we have $[\rs(\vec h_1)](\vec h_1) = [\rs(\vec h_2)](\vec h_1)$.  For any $\commonpref m$-profile $P$, we define $\rs(P)\triangleq  \rs(\hist(P))$.
\end{dfn}
That is, for any histogram $\vec h$, $\rs$ outputs a permutation over $\ma$, denoted by $\sigma$, that maps $\vec h$ to a ``representative'' histogram. This enables a consistent way to make decisions for profiles whose histograms are isomorphic to each other. Still, we need to select a decision at a representative histogram to satisfy ANR if possible. Such decisions are called {\em fixed-point decisions}~\citep{Xia2023:Most}.

\begin{dfn}[{\bf Fixed-point decisions for $\commonpref m$}{}]
\label{dfn:fixed-point-decisions}
For any  $\decspace\in\commondec m$, a decision $d\in\decspace$ is a {\em fixed-point decision} of a $\commonpref m$-histogram $\vec h$, if for every permutation $\sigma$ such that $\sigma(\vec h)=\vec h$, we have $\sigma(d) = d$. Let $\fw_{\decspace}(\vec h)$ denote the set of all fixed-point decisions. 
\end{dfn}

\begin{ex}
\label{ex:fpd}
Consider the setting $(\committee 2,\committee 1)$ in Example~\ref{ex:perm}. Let $\sigma'\triangleq (1,4)$, which exchanges $1$ and $4$. It is not hard to verify that $\sigma'$ is the only permutation that maps $\vec  h_1$ to itself. Therefore, $\fw_{\decspace}(\vec h_1) = \{2,3\}$.
\end{ex}

It is known that a profile $P$ is ANR-possible if and only if $\fw_{\decspace}(\hist(P))\ne\emptyset$, and in order to satisfy ANR at $P$, any most equitable rule must choose a fixed-point decision~\citep[Lemma~4]{Xia2023:Most}. The next proposition is a direct corollary on~\citep[Lemma~4]{Xia2023:Most} to $\commonpref m$-profiles.

\begin{prop}[{\bf Characterization: Most equitable  rules}{}]
\label{prop:chara-mevr}
A voting rule $r$ for $\commonpref m$ preferences is most equitable if and only if there exists a representative selection function $\rs$ and a voting rule $r^*$ such that for every ANR-possible profile $P$, let $\sigma  \triangleq \rs(P)$, then 
$$\text{(1) } r^*(\sigma(P))\in \fw_{\decspace}(\sigma(\hist(P)))\text{, and (2) }r(P) = \sigma^{-1}(r^*(\sigma(P)))$$
\end{prop}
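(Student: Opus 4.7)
The plan is to prove both directions of the equivalence by instantiating Lemma~4 of~\citet{Xia2023:Most} for the common preferences setting $\prefspace = \commonpref m$. The decomposition is natural: $\rs$ handles the orbit structure of histograms under $\cong$ (thus encoding anonymity/neutrality equivalence), while $r^*$ selects, at each representative histogram, a single fixed-point decision, which is precisely what Lemma~4 identifies as necessary and sufficient to satisfy ANR at any ANR-possible profile.

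\textbf{Forward direction ($\Rightarrow$).} Assume $r$ is most equitable. First, partition $\histset{m}{n}{\commonpref m}$ into equivalence classes under $\cong$, pick a canonical representative $\vec h^*$ in each class (e.g., the $\rhd$-maximal one), and define $\rs(\vec h)$ to be any fixed permutation with $[\rs(\vec h)](\vec h) = \vec h^*$; the compatibility requirement of Definition~\ref{dfn:rs} then becomes a matter of choosing, for each pair $\vec h_1\cong\vec h_2$, permutations whose actions on $\vec h_1$ coincide, which is always possible. Next, define $r^*(Q) \triangleq r(Q)$ whenever $\hist(Q)$ is a selected representative and extend $r^*$ arbitrarily elsewhere. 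Condition~(1) is then immediate from the cited Lemma~4, which forces a most equitable rule to output a fixed-point decision at every ANR-possible profile. Condition~(2) follows from neutrality of $r$ at ANR-possible $P$: writing $\sigma = \rs(P)$, we have $r(\sigma(P)) = \sigma(r(P))$ and hence $r(P) = \sigma^{-1}(r^*(\sigma(P)))$.

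\textbf{Backward direction ($\Leftarrow$).} Given $\rs$ and $r^*$ satisfying~(1) and~(2), I would check the three components of ANR at an arbitrary ANR-possible $P$. Resolvability is immediate since $r^*$ is single-valued. Anonymity is immediate because $\rs$ depends only on $\hist(P)$: two profiles with the same histogram get the same $\sigma$ and the same $r^*(\sigma(P))$. For neutrality, fix $\tau\in\sgroup$; since $\hist(\tau(P)) = \tau(\hist(P))\cong\hist(P)$, the defining property of $\rs$ shows that $\rs(P)$ and $\rs(\tau(P))\circ\tau$ both act on $\hist(P)$ to produce the same representative histogram $\vec h^*$, so their ratio $\alpha$ lies in the automorphism group of $\vec h^*$. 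Condition~(1) guarantees that $r^*(\sigma(P))\in\fw_\decspace(\vec h^*)$ is a fixed-point decision, hence $\alpha$ fixes it, which after unwinding yields $r(\tau(P)) = \tau(r(P))$.

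\textbf{Expected obstacle.} The main subtlety lies in the neutrality step of the backward direction: one must carefully translate the representative-selection property of Definition~\ref{dfn:rs} into the statement that $\rs(P)$ and $\rs(\tau(P))\circ\tau$ differ by an element of the automorphism group of the representative histogram, and then exploit the fixed-point property from~(1) to absorb that residual automorphism. Once this bookkeeping is in place, everything else reduces to routine rewriting, and the result follows as a direct corollary of Lemma~4 of~\citet{Xia2023:Most}.
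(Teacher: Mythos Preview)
Your approach is essentially the paper's: both obtain the proposition as a direct corollary of Lemma~4 of \citet{Xia2023:Most}, and your decomposition into $\rs$ (orbit selection) plus $r^*$ (fixed-point decision at the representative) is exactly the intended reading.

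One small caveat in the backward direction: you assert that ``two profiles with the same histogram get \ldots\ the same $r^*(\sigma(P))$,'' but the proposition as stated does not require $r^*$ to be anonymous, so $\sigma(P)$ and $\sigma(P')$ may share a histogram yet receive different $r^*$-values. The same issue resurfaces in your neutrality computation, which needs $r^*(\sigma(P)) = r^*(\sigma'\tau(P))$ for two profiles with histogram $\vec h^*$. This is not a flaw in your strategy so much as an implicit hypothesis in the statement (and in Lemma~4 of the cited work, where the rule at representatives is effectively a function of histograms); you should make it explicit, e.g., by stipulating that $r^*$ depends only on $\hist(\cdot)$, or by noting that one may w.l.o.g.\ replace any $r^*$ satisfying~(1)--(2) by an anonymous one agreeing with it on one profile per representative histogram.
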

Proposition~\ref{prop:chara-mevr} implies that a most equitable rule can be computed in the following steps. First, use $\rs$ to convert $\hist(P)$ to a ``representative'' histogram $\sigma(\hist(P))$, then use $r^*$ to choose a fixed-point decision   if one exists, and finally map the decision back to $P$ by  $\sigma^{-1}$. In fact,  the second step computes an ANR verification as well---$P$ is ANR-possible if and only if there exists a fixed-point decision at $P^*$. We next show that computing fixed-point decisions can be reduced to computing the {\em automorphism partition}, defined as follows.

\begin{dfn}[{\bf\boldmath Automorphism partition for $\commonpref m$}{}]
\label{dfn:ap-alternatives}For any ${\commonpref m}$-profile $P$ (respectively, ${\commonpref m}$-histogram $\vec h$), its {\em automorphism partition}, denoted by  $\ap(P)$ (respectively, $\ap(\vec h)$), is a partition of $\ma$ such that  $a,b$ are in the same set if and only if there exists an automorphism $\sigma$ of $\hist(P)$ (respectively, $\vec h$) such that $b = \sigma(a)$.
\end{dfn}

We say that a set $D$ in a partition of $\ma$ is a {\em singleton}, if $|D|=1$. The following two propositions characterize fixed-point decisions and ANR-possible profiles by properties of $\ap$.
\begin{prop}[{\bf Characterization: Fixed-point decisions}{}]
\label{prop:verification}
For any  $\prefspace\in\commondec m\cup \{\commonpref m\}$, any $\decspace\in\commondec m$, and any $\prefspace$-profile $P$,  a decision $ d\in \decspace$ is a fixed-point decision if and only if
\begin{itemize}
\item [\bf (i)] {\bf when $\decspace = \listset k$:} alternatives in $d$ are singletons in $\ap(P)$;
\item [\bf (ii)] {\bf when $\decspace = \committee k$:} $d$ is the union of some sets in $\ap(P)$;
\end{itemize}
\end{prop}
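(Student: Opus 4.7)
The plan is to prove both directions of each case by leveraging the fact that the blocks of $\ap(P)$ are precisely the orbits of the automorphism group $G(P) \triangleq \{\sigma \in \sgroup : \sigma(\hist(P)) = \hist(P)\}$ acting on $\ma$. The key preliminary observation I will record up front is that any $\sigma \in G(P)$ preserves each orbit setwise: if $a$ lies in the orbit $C$, then $\sigma(a) \in C$ since $\sigma$ itself is an automorphism witnessing that $a$ and $\sigma(a)$ are in the same block of $\ap(P)$.

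For case (ii), $\decspace = \committee k$, I will argue as follows. For the ``$\Leftarrow$'' direction, assume $d = \bigcup_{j} C_j$ is a union of blocks of $\ap(P)$. Then by the preliminary observation, every $\sigma \in G(P)$ satisfies $\sigma(C_j) = C_j$, so $\sigma(d) = d$, making $d$ a fixed-point decision. For the ``$\Rightarrow$'' direction, suppose $d$ is a fixed-point decision. Take any $a \in d$ and any $b$ in the block containing $a$; by Definition~\ref{dfn:ap-alternatives} there exists $\sigma \in G(P)$ with $\sigma(a) = b$, and since $\sigma(d) = d$ we get $b \in d$. Hence the entire block of $a$ is contained in $d$, so $d$ is a union of blocks.

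For case (i), $\decspace = \listset k$, write $d = [a_1 \succ \cdots \succ a_k]$. Recall that $\sigma(d) = [\sigma(a_1) \succ \cdots \succ \sigma(a_k)]$, so $\sigma(d) = d$ if and only if $\sigma(a_i) = a_i$ for every $i \in [k]$. For ``$\Leftarrow$'', if each $a_i$ is a singleton in $\ap(P)$, then its orbit under $G(P)$ is $\{a_i\}$, so every $\sigma \in G(P)$ fixes each $a_i$ pointwise, giving $\sigma(d) = d$. For ``$\Rightarrow$'', if $d$ is a fixed-point decision, then for every $\sigma \in G(P)$ and every $i$ we have $\sigma(a_i) = a_i$, meaning the orbit of $a_i$ is $\{a_i\}$, so $\{a_i\}$ is a singleton block of $\ap(P)$.

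I do not anticipate a serious obstacle: the argument is essentially a repackaging of the orbit–stabilizer structure of $G(P)$ acting on $\ma$, transferred through the natural extensions of permutations to committees and lists defined in the preliminaries. The only minor care needed is to confirm that the definition of $\ap$ in Definition~\ref{dfn:ap-alternatives} indeed coincides with the orbit partition of $G(P)$ on $\ma$ (which is immediate from the definition, since $a \sim b$ iff some automorphism of $\hist(P)$ sends $a$ to $b$), and to verify that the domain $\prefspace \in \commondec m \cup \{\commonpref m\}$ plays no special role beyond guaranteeing that $\sgroup$ acts on histograms in the manner specified earlier, so the proof is uniform across these preference spaces.
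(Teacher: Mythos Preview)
Your proposal is correct and follows essentially the same approach as the paper: both arguments reduce the characterization to how the stabilizer group $G(P)=\{\sigma:\sigma(\hist(P))=\hist(P)\}$ acts on $\ma$, noting that for lists ``fixed'' means pointwise-fixed and for committees ``fixed'' means setwise-fixed. Your write-up is in fact more explicit than the paper's, which merely asserts the correspondence without spelling out the orbit argument you give for the $\Rightarrow$ direction in case (ii).
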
 
\begin{proof}
When $\decspace = \listset k$, a fixed-point decision (which is a ranking over a $k$-committee denoted by $A\subseteq \ma$) exists if and only if no permutation $\sigma$ in the stabilizer of $\hist(P)$ maps any alternative in $A$ to another alternative. This corresponds to the condition for $\decspace = \listset k$ in the statement of the proposition. When $\decspace = \listset k$, a fixed-point decision (which is a $k$-committee $A\subseteq \ma$) exists if and only if no permutation $\sigma$ in the stabilizer of $\hist(P)$ maps any alternative in $A$ to an alternative that is not in $A$. This corresponds to the condition for $\decspace = \committee k$ in the statement of the proposition.
\end{proof}

\begin{prop}[{\bf Characterization: ANR-possible profiles}{}]
\label{prop:anr-possibility}
For any  $\prefspace\in\commondec m\cup \{\commonpref m\}$, any $\decspace\in\commondec m$,  an $\prefspace$-profile $P$ is ANR-possible  if and only if
\begin{itemize}
\item [\bf (i)] {\bf when $\decspace = \listset k$:} $\ap(P)$ contains at least $k$ singletons;
\item [\bf (ii)] {\bf when $\decspace = \committee k$:} the union of some sets in $\ap(P)$ contain exactly $k$ alternatives;
\end{itemize}
\end{prop}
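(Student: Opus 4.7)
The plan is to derive this characterization as an almost immediate corollary of the preceding Proposition (characterization of fixed-point decisions) combined with the fact, already recalled in the text citing \citep[Lemma~4]{Xia2023:Most}, that an $\prefspace$-profile $P$ is ANR-possible if and only if $\fw_{\decspace}(\hist(P))\neq\emptyset$. Thus it suffices to translate ``a fixed-point decision $d\in\decspace$ exists'' into the stated structural conditions on $\ap(P)$.

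First I would handle the list case, $\decspace = \listset k$. By the previous proposition, a $k$-list $d = [a_1 \succ \cdots \succ a_k]$ is a fixed-point decision if and only if each $a_i$ is a singleton in $\ap(P)$. Therefore some fixed-point decision exists if and only if one can pick $k$ distinct alternatives that are all singletons in $\ap(P)$, i.e., if and only if $\ap(P)$ has at least $k$ singletons. (Note that once we have $k$ such singletons, any of the $k!$ linear orders on them is a valid fixed-point list, so both directions are immediate.)

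Next I would handle the committee case, $\decspace = \committee k$. Again by the previous proposition, a $k$-committee $d\in\committee k$ is a fixed-point decision if and only if $d$ is the union of some collection of parts of $\ap(P)$. Hence a fixed-point decision exists if and only if there is a subcollection of parts of $\ap(P)$ whose total cardinality is exactly $k$, which is precisely condition (ii).

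The main obstacle is essentially absent: the hard combinatorial content, namely the equivalence between fixed-point decisions and stabilizer-invariant sets/rankings, has already been established in the previous proposition. The only care needed is bookkeeping to confirm that the ``some union of parts'' characterization for committees does not require using \emph{all} parts, and that in the list case the $k$ singletons can be ordered in any way, both of which follow directly from the quantifier structure of the fixed-point condition.
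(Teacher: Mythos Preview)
Your proposal is correct and follows essentially the same approach as the paper: both reduce to the cited fact that $P$ is ANR-possible iff a fixed-point decision exists, and then translate this into the structural conditions on $\ap(P)$. If anything, your version is slightly cleaner, since you explicitly invoke the preceding proposition on fixed-point decisions, whereas the paper's proof re-derives that characterization inline rather than citing it.
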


\begin{proof}
The proposition follows after the Lemma for general preferences~\citep{Xia2023:Most}, which states that a profile is ANR-possible, if there exists a {\em fixed-point decision}, defined as follows.  

When $\decspace = \listset k$, a fixed-point decision (which is a ranking over $k$ alternatives, denoted by $A$) exists if and only if no permutation $\sigma$ in the stabilizer of $\hist(P^*)$ maps any alternative in $A$ to another alternative. This corresponds to the condition for $\decspace = \listset k$ in the statement of the lemma. When $\decspace = \listset k$, a fixed-point decision (which is a size-$k$ committee $A\subseteq \ma$) exists if and only if no permutation $\sigma$ in the stabilizer of $\hist(P^*)$ maps any alternative in $A$ to an alternative that is not in $A$. This corresponds to the condition for $\decspace = \committee k$ in the statement of the lemma.
\end{proof}

\begin{ex}
\label{ex:ap-fd}
In the setting of Example~\ref{ex:fpd}, we have $\ap(P_1)= \ap(\vec h_1) = \{\{1,4\}, \{2\}, \{ 3\}\}$. Because $\committee 1= \listset 1$, we can apply Proposition~\ref{prop:verification} (i) or (ii), which implies that $\fw_{\decspace}(\vec h_1) = \{2,3\}$. If $\decspace = \committee 2$, then $\fw_{\decspace}(\vec h_1) = \{\{1,4\},\{2,3\}\}$ according to Proposition~\ref{prop:verification} (ii). If $\decspace = \listset 3$, then $\fw_{\decspace}(\vec h_1) = \emptyset$ according to Proposition~\ref{prop:verification} (i), which means that $P_1$  is not ANR-possible. Alternatively, Proposition~\ref{prop:anr-possibility} (i) also implies that $P_1$  is not ANR-possible.
\end{ex}
Concluding this section, we have 
the following key observation.

\begin{KRQ}{Key Observation of Section~\ref{sec:characterizaion-MER}}
\begin{center}
A most equitable rule with verification can be computed by constructing a representative selection (Def.~\ref{dfn:rs}) and an automorphism partition (Def.~\ref{dfn:ap-alternatives}).
\end{center}
\end{KRQ}

\section{The $(\committee 2, \committee 1)$ Setting: The Connection, $\CL$-Easiness, and $\GI$-Completeness}
\label{sec:simple-case}
For the purpose of illustration, in this section we  further assume that all $\committee 2$ profiles and histograms are {\em simple}, in the sense that each $2$-committee appears at most once in the profiles/histograms.


\subsection{The Connection:  $\committee 2$-Histograms $\Leftrightarrow$ Graphs}

\begin{dfn}[{\bf  $\committee 2$-histograms $\Leftrightarrow$  graphs}]
\label{dfn:M2-hist-graph} For any $\committee 2$-histogram $\vec h$, let $G_{\vec h}$ denote the graph in which the vertices are $\ma$, and there is an edge  $\{a,b\}$ for each $\{a,b\}\in \vec h$.  For any   graph $G= (V,E)$, let $\vec h_{G}$ denote the histogram whose alternatives are $V$ and for each $\{a,b\}\in E$, there is a vote $\{a,b\}$ in $\vec h$.
\end{dfn}

 \begin{ex}
 \label{ex:connection-M2}
The histograms of $P_1$ and $P_2$  in Example~\ref{ex:perm} correspond  to $G_1$ and $G_2$  in Example~\ref{ex:graph}, respectively. That is, $G_1 = \calG_{\vec h_1}$, $G_2 = \calG_{\vec h_2}$, $\vec h_1 = \vec h_{G_1}$, and $\vec h_2 = \vec h_{G_2}$. 
 \end{ex}

In Lemma~\ref{lem:simple-M2} below, we establish two connections between $\committee 2$-histograms and graphs: the first is between automorphism partitions for histograms (Definition~\ref{dfn:ap-alternatives}) and automorphism partitions for graphs (Definition~\ref{dfn:ap-graphs} below), and the second is between representative selection function for histograms (Definition~\ref{dfn:rs}) and canonical labeling for graphs. The proof follows immediately after the definitions.
\begin{dfn}[{\bf\boldmath Automorphism partition for graphs}{}]
\label{dfn:ap-graphs}
Given an unweighted graph $G = (V,E)$, its {\em automorphism partition}, denoted by $\ap(G)$, is a partition of $V$ such that $a,b$ are in the same set if and only if there exists an automorphism $\mu$ of $G$ such that $b= \mu(a)$.
\end{dfn}

\begin{lem}[{\bf The connection:  $(\committee 2,\committee 1)$ setting
}{}]  
\label{lem:simple-M2}
For any simple $\committee 2$-histogram $\vec h$, any undirected unweighted graph $G$, any representative selection function $\rs$, and any canonical labeling $\cl$, 
\begin{itemize}
\item [\bf (i) ] {\bf\boldmath  $\ap\Leftrightarrow \ap$}: $\ap(G_{\vec h}) = \ap(\vec h)$ and $\ap(\vec h_{G}) = \ap(G)$;
\item [\bf (ii)] {\bf \boldmath  $\rs\Leftrightarrow \cl$}: $\rs$ is a canonical labeling for graphs and  $\cl$ is a representative selection function for simple $\committee 2$-histograms.
\end{itemize}
\end{lem}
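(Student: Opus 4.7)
The plan is to observe that $\vec h \mapsto G_{\vec h}$ (equivalently, $G \mapsto \vec h_G$) is a bijection between simple $\committee 2$-histograms over $\ma$ and undirected unweighted graphs with vertex set $\ma$, and that this bijection is \emph{equivariant} under the action of $\sgroup$. Concretely, I would first verify by unpacking Definition~\ref{dfn:M2-hist-graph} that for every permutation $\sigma \in \sgroup$ and every simple $\committee 2$-histogram $\vec h$,
\[
G_{\sigma(\vec h)} \;=\; \sigma(G_{\vec h}) \qquad\text{and}\qquad \vec h_{\sigma(G)} \;=\; \sigma(\vec h_{G}),
\]
where $\sigma$ acts on graphs by relabeling vertices. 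This step is essentially a one-line unfolding: the edge set of $G_{\vec h}$ is the support of $\vec h$, and the simplicity assumption makes $\sigma$ act on histograms in exactly the same way it acts on edge sets.

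Given equivariance, part (i) follows immediately. I would argue that $\sigma$ is an automorphism of $\vec h$ iff $\sigma(\vec h) = \vec h$ iff $\sigma(G_{\vec h}) = G_{\sigma(\vec h)} = G_{\vec h}$, i.e.\ iff $\sigma$ is an automorphism of $G_{\vec h}$. Hence the automorphism groups of $\vec h$ and $G_{\vec h}$ coincide as subgroups of $\sgroup$, so their orbits on $\ma$ coincide. Invoking Definitions~\ref{dfn:ap-alternatives} and~\ref{dfn:ap-graphs}, this yields $\ap(\vec h) = \ap(G_{\vec h})$; the identity $\ap(G) = \ap(\vec h_G)$ follows by the symmetric argument using the inverse map.

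For part (ii), I would again use equivariance to transfer the ``same output on isomorphic inputs'' property between the two worlds. In one direction, given a representative selection function $\rs$, I define $\cl$ on graphs by $\cl(G) \triangleq \rs(\vec h_G)$; for any two isomorphic graphs $G_1 \cong G_2$, the corresponding histograms satisfy $\vec h_{G_1} \cong \vec h_{G_2}$, so Definition~\ref{dfn:rs} gives $[\rs(\vec h_{G_1})](\vec h_{G_1}) = [\rs(\vec h_{G_2})](\vec h_{G_1})$, and applying the bijection $\vec h\mapsto G_{\vec h}$ to both sides shows that $\cl$ maps both $G_1$ and $G_2$ to the same canonical graph, i.e.\ $\cl$ is a canonical labeling. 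The converse direction, defining $\rs(\vec h) \triangleq \cl(G_{\vec h})$, is symmetric.

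The only ``obstacle'' is purely notational: one must be careful that the action of $\sigma$ on a simple $\committee 2$-histogram corresponds exactly to the action of $\sigma$ on the edge set of the associated graph, which is where the simplicity assumption is used (without it, multiedges would have to be tracked, but the argument generalizes routinely). Everything else is a direct consequence of the definitions.
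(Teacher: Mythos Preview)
Your proposal is correct and essentially matches the paper's approach: the paper simply states that ``the proof follows immediately after the definitions'' and gives no further argument, so your equivariance-based unfolding is precisely the kind of verification the paper leaves implicit. Your explicit identification of the automorphism groups via $G_{\sigma(\vec h)}=\sigma(G_{\vec h})$ and the transfer of the canonical-form property through the bijection are exactly the right steps.
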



\subsection{Computing a Most Equitable Rule with Verification}
Recall from the Key Observation of Section~\ref{sec:characterizaion-MER} that to efficiently compute a $\merv$, it suffices to efficiently compute   $\rs$ and $\ap$ for histograms. In light of the connection revealed in Lemma~\ref{lem:simple-M2}, the two functions can be obtained by first converting the profile to a graph, and then applying $\cl$ and $\ap$ for graphs, respectively. In fact, because graph automorphism partition is $\GI$-complete~\citep{Booth1977:Problems} and $\GI\reducesto\CL$, $\ap$ can be computed by polynomially many calls to $\cl$. This leads to our   {\em canonical-labeling rule ($\CLR[\cl]$)} in Algorithm~\ref{alg:CLVR-M2}, which is parameterized by a canonical labeling algorithm $\cl$.  Notice that when $D^*=\emptyset$ (step~\ref{step:return-0}), $P$ is not ANR-possible, which means an arbitrary decision can be returned. Otherwise,  step~\ref{step:comp-d*} chooses a fixed-point decision of the ``representative'' histogram $\sigma(\vec h)$ with the highest priority according to $\rhd$.

\begin{algorithm}[htp]
\caption{{\bf  ($\CLR[\cl]$: simple $\committee 2$-profiles)} Given  a simple  $\committee 2$-profile $P$, return  $(d,c)\in\committee 1\times \{0,1\}$.\label{alg:CLVR-M2}}
\begin{algorithmic}[1]  
\STATE Compute $\vec h \triangleq \hist(P)$ and $G_{\vec h}$ according to Definition~\ref{dfn:M2-hist-graph}. 
\label{step:clr-m2-1}
\STATE Compute $\ap(G_{\vec h})$ using $\cl$. Let $D^*$ denote the set of singletons in $\ap(G_{\hist(P)})$.
\label{step:clr-m2-2}
\STATE Compute $\sigma \triangleq \cl(\hist(P))$.
\label{step:comp-d*}
\STATE  {\bf if }$D^*=\emptyset$ {\bf then return }$(\arg\max_{d\in \decspace}^{\rhd}\sigma (d), 0)$.
\label{step:return-0}
\STATE {\bf else return }$(\arg\max_{d\in D^*}^{\rhd}\sigma (d), 1)$. 
\end{algorithmic}
\end{algorithm}

\begin{ex}
In the setting of Examples~\ref{ex:perm}--\ref{ex:fpd}, let $P_1$ be the input to Algorithm~\ref{alg:CLVR-M2} with $\cl=\cllex$. Then, step~\ref{step:clr-m2-1} computes $\vec h_1$ (Examples~\ref{ex:perm}) and $G_1$ (Examples~\ref{ex:graph} and~\ref{ex:connection-M2}). Step~\ref{step:clr-m2-2} computes $\ap(G_1)=\{\{1,4\}, \{2\}, \{ 3\}\}$ and $D^* = \{2,3\}$ (Example~\ref{ex:ap-fd}). Step~\ref{step:comp-d*} computes $\sigma = (1,2)(3,4)$ (Examples~\ref{ex:perm}). Finally, since $D^*\ne\emptyset$ and $\sigma(2) = 1\rhd 4 = \sigma(3)$, the algorithm returns $(2,1)$.
\end{ex}

\begin{customthm}{\ref{thm:CLVR-general}$^-$}[{\bf\boldmath  $(\committee 2,\committee 1)$ case of Theorem~\ref{thm:CLVR-general}}{}]
\label{thm:CLVR-M2}
Under simple $\committee 2$-preferences and $\committee 1$-decisions, for any canonical labeling function $\cl$, $\CLR[\cl]$ is a $\merv$ with polynomially many calls to $\cl$.
\end{customthm}
The proof naturally follows after combining Proposition~\ref{prop:chara-mevr}, Proposition~\ref{prop:verification}, and Lemma~\ref{lem:simple-M2}, and is therefore omitted. Using~\citet{Babai2019:Canonical}'s quasipolynomial canonical labeling algorithm as $\cl$,  $\CLR[\cl]$ can be computed in quasipolynomial time as well. This also holds for more general preferences and decisions, as will be formally stated in Corollary~\ref{coro:MERC-quasipoly} in Appendix~\ref{sec:CLVR-general}.

 \subsection{Computing a Most Equitable Tie-Breaking with Verification}

In this subsection, we extend the idea behind $\CLR[\cl]$ to define tie-breaking mechanisms that optimally preserve $\anr$ for any given irresolute rule that satisfies anonymity and neutrality. The outcome after tie-breaking is called a {\em most equitable refinement}~\citep{Xia2023:Most}, and we aim at designing a  tie-breaking mechanism that computes an ANR verification as well. Recall that  an irresolute voting rule $\cor_1$ is a {\em refinement} of another irresolute voting rule $\cor_2$ if for all profiles $P$, $\cor_1(P)\subseteq \cor_2(P)$.

\begin{dfn}[{\bf Most equitable refinement with verification (\merev)}{}]
A {\em most equitable refinement}  of  an anonymous and neutral irresolute voting rule $\cor$ is a resolute voting rule $r$ that refines $\cor$, such that for every profile $P$, if there exists a refinement $r'$ of $\cor$ that satisfies ANR at $P$, then $\anr(r,P)=1$. A {\em {\bf M}ost {\bf E}quitable {\bf Re}finement with {\bf V}erification ($\merev$)}  of $\cor$ is denoted by $\verified{r} = \langle r,\veri \rangle$, where $r$ is a most equitable refinement of $\cor$ and $\veri(P)=1$ if and only if $\anr(r,P)=1$; otherwise $\veri(P)=0$. 
\end{dfn}
That is, $\veri$ verifies whether ANR is satisfied by the most equitable refinement. If $\veri(P)=0$, then no refinement of $\cor$ can satisfy ANR at $P$. Next, we present   {\em canonical-labeling tie-breaking ($\CLTB[\cl]$)} in Algorithm~\ref{alg:CLTB-M2}. The main difference with $\CLR[\cl]$ (Algorithm~\ref{alg:CLVR-M2}) is that  a fixed-point decision must be chosen from the co-winners  $D$.

\begin{algorithm}[htp]
\caption{{\bf  ($\CLTB[\cl]$: simple $\committee 2$-profiles)}  Given a  simple $\committee 2$-profile $P$ and  $D\subseteq\decspace$, return $(d,c)\in D \times \{0,1\}$.
\label{alg:CLTB-M2}}
\begin{algorithmic}[1]  
\STATE Compute $\vec h \triangleq \hist(P)$ and $G_{\vec h}$ according to Definition~\ref{dfn:M2-hist-graph}.
\label{step:cltb-m2-1}
\STATE Compute $\ap(G_{\vec h})$ using $\cl$. Let $D^*\triangleq D\cap \{\text{singletons in } \ap(G_{\vec h})\}$.
\label{step:cltb-m2-2}
\STATE Compute $\sigma \triangleq \cl(\vec h)$.
\label{step:cltb-m2-3}
\STATE  {\bf if }$D^*=\emptyset$ {\bf then return }$(\arg\max_{d\in D}^{\rhd}\sigma (d), 0)$.
\label{step:cltb-m2-5}
\STATE {\bf else return }$(\arg\max_{d\in D^*}^\rhd \sigma (d), 1)$. 
\end{algorithmic}
\end{algorithm}

\begin{ex}
We consider the $(\committee 2,\committee 1)$ setting under the top-$2$-approval rule ($\cor=\closure{\approval_2}$). Let 
$P_1$ and $P_2$ denote any pair of profiles whose histograms correspond to $G_1$ and $G_2$ in the figure below,\\
\noindent\begin{minipage}{.6\textwidth}
according to Definition~\ref{dfn:M2-hist-graph}.  Let $D\triangleq \closure{\approval_2}(P_1)= \{1,3,4,5\}$ and let $P_1$ and $D$ be the input to Algorithm~\ref{alg:CLTB-M2} with $\cl=\cllex$. Then, step~\ref{step:cltb-m2-2} computes $\ap(G_1) = \{\{1,5\},\{2,6\},\{3\},\{4\}\}$, so $D^* = \{3,4\}$. In step~\ref{step:cltb-m2-3}, $\sigma = (1,2,5,3)$, which maps $G_1$ to $G_2$.
 Because  $\sigma(3)=1\rhd 4 =\sigma(4)$, the output is $(3,1)$.\end{minipage}
\ \ \ \begin{minipage}{.36\textwidth}
\includegraphics[width=1\textwidth]{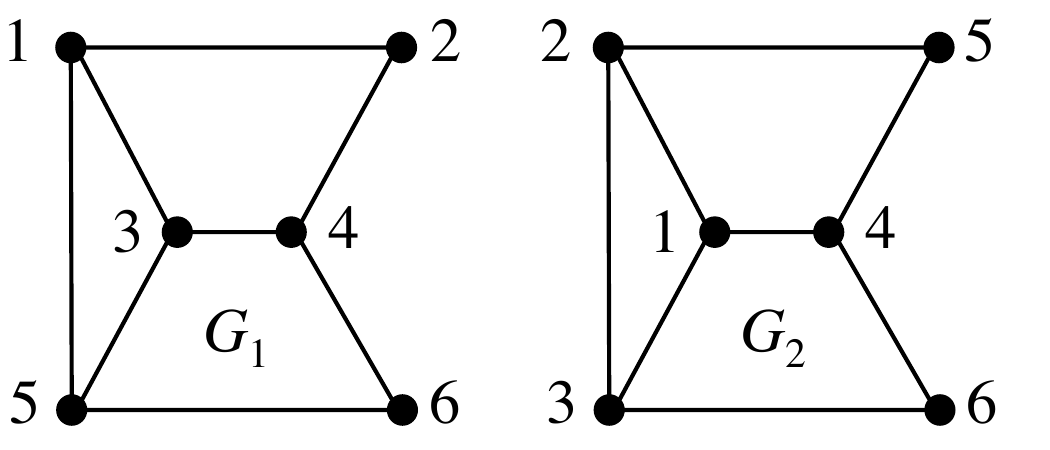} 
\end{minipage}
\end{ex}

\begin{customthm}{\ref{thm:CLTB-general}$^-$}[{\bf\boldmath  The $(\committee 2,\committee 1)$  case of Theorem~\ref{thm:CLTB-general}}{}] 
\label{thm:CLTB-M2}
Under simple $\committee 2$ preferences and $\committee 1$ decisions, for any canonical labeling function $\cl$, $\CLTB[\cl]$ computes a $\merev$ for any anonymous and neutral rule with polynomially many calls to $\cl$.
\end{customthm}
Again, the proof naturally follows after combining Proposition~\ref{prop:chara-mevr}, Proposition~\ref{prop:verification}, and Lemma~\ref{lem:simple-M2}, and by noticing that a most equitable refinement must choose a fixed-point decision from the set of co-winners~\citep[Lemma~4]{Xia2023:Most}.


\subsection{Computing an ANR Verification for Most Equitable Rules}
\label{sec:anr-cert-simple}
Recall that most equitable rules optimally satisfy ANR at every profile. Therefore, computing ANR verifications for most equitable rules is equivalent to computing whether a profile is ANR-possible (Def.~\ref{dfn:anr-possible}), formally defined as follows.
\begin{dfn}
Given a social choice setting $(\prefspace,\decspace)$, in the $\anrposs[(\prefspace,\decspace)]$ problem ($\anrposs$ for short), we are given an $\prefspace$-profile $P$, and the output is $1$ (YES) if $P$ is ANR-possible; otherwise the output is $0$ (NO).
\end{dfn}
According to Lemma~\ref{lem:simple-M2} (i) and Proposition~\ref{prop:anr-possibility} (i) or (ii) (because $\committee 1 = \listset 1$), and because computing graph automorphism partition is GI-complete~\citep{Booth1977:Problems}, we know that $\anrposs$ for simple $\committee 2$ preferences and $\committee 1$ decisions is GI-easy. The following theorem, which is a special case of Theorem~\ref{thm:GI-hardness-MM}, shows that the problem is in fact GI-complete.

\begin{customthm}{\ref{thm:GI-hardness-MM}$^-$}[{\bf\boldmath The $(\committee 2,\committee 1)$ case of Theorem~\ref{thm:GI-hardness-MM}}{}]
\label{thm:GIC-C2C1}
$\anrposs[(\committee 2,\committee 1)]$   is GI-complete.
\end{customthm}
\begin{sketch}
The GI-easiness of the problem follows after Lemma~\ref{lem:simple-M2} (i) and Proposition~\ref{prop:anr-possibility}. To prove the GI-hardness, we give a Turing reduction that uses polynomially many calls to $\anrposs$ to solve $\GI$. Let  $(G_1,G_2)$, which are graphs over $[m]$, denote a $\GI$ instance.  Let $G_1^*$ denote the graph that is obtained from $G_1$ by appending a cycle of length $m+1$ to vertex $1$. For every $i\in [m]$, let $G_2^i$ denote the graph that is obtained from $G_2$ by appending a cycle' of length $m+1$ to vertex $i$. For example, Let $G_1$ and $G_2$ denote the graphs in Example~\ref{ex:graph}. $G_1^*,G_2^1,G_2^2,G_2^3,G_2^4$ are shown in the following figure.
\begin{figure}[htp]
\includegraphics[width =\textwidth]{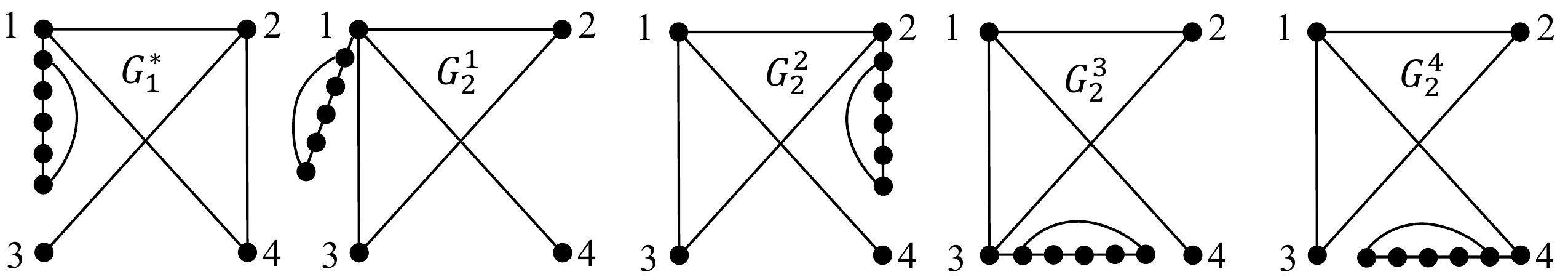}
\end{figure}

For every $i\in [m]$, let  $P_i$ denote any $\committee 2$-profile  whose histogram is $\vec h_{G_1^*+G_2^i}$ according to Definition~\ref{dfn:M2-hist-graph}, where $G_1^*+G_2^i$ is the disjoint union of  $G_1^*$ and $G_2^i$. We then solve $\anrposs[(\committee 2,\committee 1)]$ for $P_1,\ldots,P_m$, and if the answer is NO for any of them, then we output YES for the $\GI$ instance; otherwise  we output NO. The correctness of the proof can be found in Appendix~\ref{app:proof-thm:GIC-C2C1}.
\end{sketch}

\section{General $(\prefspace,\decspace)$ Settings: Challenges and Solutions}
\myparagraph{$\CL$-Easiness.}  A natural idea to extend the connection revealed (Lemma~\ref{lem:simple-M2})  to general (non-simple) $\committee \ell$-preferences is  to construct a multi-hyper-graph by converting a  committee preference $A$ to a hyper-edge that consists of vertices in $A$ with the same multiplicity. However, this idea does not work because it is unclear whether $\ap$ and $\rs$ ($\cl$) are preserved, and whether efficient canonical labeling algorithms exist for multi-hyper-graphs. 
We address this challenge by constructing an almost bipartite graph for the more general $\commonpref m$ preferences, where one side represents the vertices (denoted by $\ma$)  and the other side  (denoted by $\calN$) represents the $n'$ types of preferences in the given histogram $\vec h$  to represent the $n'$ types of preferences in $\vec h$. The connections between $\ma$ and each $R\in \mn$ (denoted by $\calV$) represent  $R$'s type, and $R$ is connected to a tail (denoted by $\calT$) whose length equals to the multiplicity of $R$ in $\vec h$. We also have two sets of auxiliary vertices $\calX$ and $\calY$, which are used to identify the two sides. The formal definition can be found in Def.~\ref{dfn:histogram->graph} in Appendix~\ref{sec:general-easy}.  An example of $\commonpref 4$-histogram $\vec h$ and $G_{\vec h}$ are shown in Figure~\ref{fig:ex-histogram-mapping}.

\begin{figure}[htp]
\centering 	
\includegraphics[width = .7\textwidth]{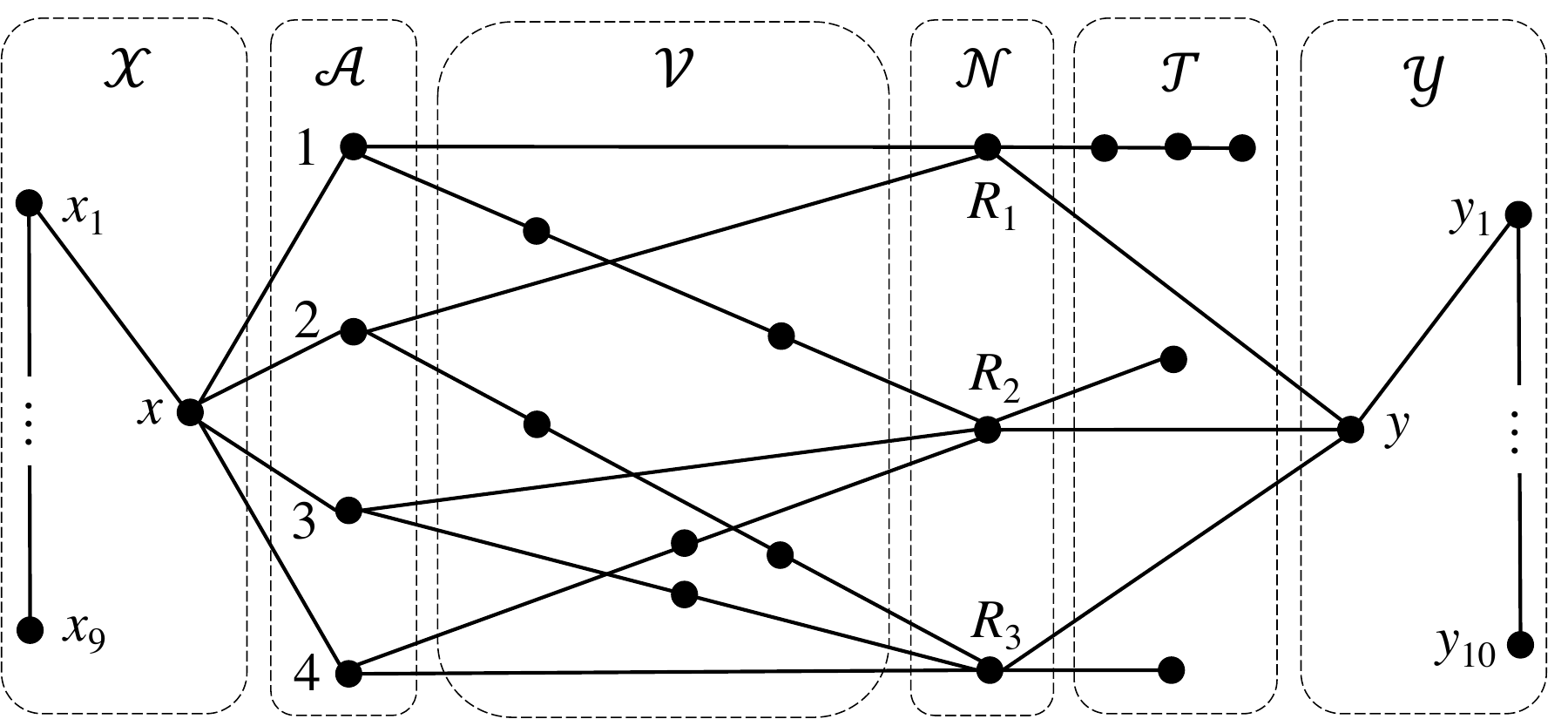}
\caption{\small $G_{\vec h}$ for $\vec h=(4\times  {R_1}, 2\times {R_2}, 2\times {R_3})$, where $R_1 = {\{1,2\}}, R_2 = {[ 3\succ 4\succ 1]} , R_3 =  {[4\succ 3\succ 2]}$, $m=4$, $n=8$. \label{fig:ex-histogram-mapping}}
\end{figure}
Then, we take advantage of the specific structures of the graph in Definition~\ref{dfn:histogram->graph} to establish a one- way connection from graph $\ap$ to histogram $\ap$, and a one-way connection from $\cl$ to $\rs$ in Lemma~\ref{lem:connection-general} in Appendix~\ref{sec:general-easy}.  These connections play a central role in designing efficient algorithms for $\merv$ and $\merev$ under general settings, in Alg.~\ref{alg:CLVR-general} and Alg.~\ref{alg:CLTB-general} in Appendix~\ref{sec:CLVR-general}, respectively.

\myparagraph{$\GI$/$\GA$-Hardness.} Recall that the $\GI$-hardness proof for Theorem~\ref{thm:GI-hardness-MM}$^-$ converts any $\GI$ instance to multiple $\anrposs[(\committee 2,\committee 1)]$ instances. However, converting an $\GI$ instance to multiple $\anrposs[(\prefspace,\decspace)]$ instances is more challenging than it appears due to the lack of two-way correspondence between $\commonpref m$-histograms and graphs. The challenge will be handled by case-by-case constructions in Appendix~\ref{sec:general-hard}.

\section{Future Work}
There are a few social choice settings under which the complexity of $\anrposs$ are open questions (marked by ``?'' in Table~\ref{tab:summary}). The exact complexity of computing a most equitable rule (without verification) and the exact complexity of computing a most equitable tie-breaking mechanism (without verification) are open questions. They are easier than $\CL$ and we conjecture that they are  $\GA$-hard or $\GI$-hard. We believe that extending the algorithms and complexity analysis to other social choice settings (such as unions of decision spaces in $\commondec m$), other desiderata  (such as Pareto optimality and monotonicity), and other problem settings (such as fair division) are interesting and important directions for future work.

\bibliographystyle{plainnat}
\bibliography{references}

\newpage
\tableofcontents
\appendix

\section{General Settings: $\CL$-Easiness}
\label{sec:general-easy}
{\bf The challenge.} There are two natural ideas to extend the connection revealed in the last section to general (non-simple) $\committee \ell$-preferences. The first idea is to construct a multi-hyper-graph by converting a  committee preference $A$ to a hyper-edge that consists of vertices in $A$ with the same multiplicity. The second idea is to construct a bipartite graph, where one side represents the vertices and the other side represents the votes, and there is an edge between $a$ on the left and $R$ on the right if and only if $a\in R$. However, none of them work directly, because it is unclear whether $\ap$ and $\rs$ ($\cl$) are preserved under such constructions, and whether efficient canonical labeling algorithms exist for multi-hyper-graphs. Moreover, it is unclear how the constructions can be done for $\listset\ell$-preferences and the even more general $\commonpref m$-preferences.

\subsection{The Connection: $\commonpref m$-Histograms $\Rightarrow$ Graphs}
\label{sec:connection-general}
Our construction is similar to the second idea discussed above. For any $\commonpref m$-histogram $\vec h$, we use one side (denoted by $\ma$) to represent  the alternatives and the other side (denoted by $\calN$) to represent the $n'$ types of preferences in $\vec h$. The connections between $\ma$ and each $R\in \mn$ (denoted by $\calV$) represent  $R$'s type, and $R$ is connected to a tail (denoted by $\calT$) whose length equals to the multiplicity of $R$ in $\vec h$. We also have two sets of auxiliary vertices $\calX$ and $\calY$, which are used to identify the two sides. Formally, the construction is defined as follows.
\begin{dfn}[{\bf\boldmath  ${\commonpref m}$-histograms$\Rightarrow$graphs}{}]
\label{dfn:histogram->graph}
Given an $\commonpref m$-histogram $\vec h = (h_1\times R_1,\ldots,h_{n'}\times R_{n'})$ with $\sum_{j=1}^{n'} h_{j}=n$, we define a graph $G_{\vec h}$ as follows:
\begin{align*}
\text{\bf Vertices:}&\begin{cases}
\text{$\ma = \{ 1,\ldots,  m\}$   represents the alternatives,}\\
\text{$\mn = \{R_1,\ldots, R_{n'}\}$   represents different types of votes in $\vec h$,}\\
\text{$\calV$: for each $j\le n'$, if $R_j\in \listset\ell$ for some $2\le \ell\le m$, then there are $(\ell-1)\ell/2$ vertices}\\
\hfill\text{ $\{v_j^{\ell',i'}:1\le i'\le \ell'\le \ell-1\}$}\\
\text{$\calT$: for each $j\le n'$, there are $h_j$ vertices $\{t_{j}^1,\ldots, t_j^{h_j}\}$,}\\
\text{Auxiliary vertices $\calX = \{x, x_1,\ldots,x_{n+1}\}$ and $\calY=\{y,y_1,\ldots,y_{n+2}\}$.}
\end{cases}
\\
\text{\bf Edges:}&\begin{cases}  
\text{Between $\ma$ and $\mn$: for every $j\le n'$.} \\
\text{\hspace{5mm}$\bullet$ if $R_j\in \committee \ell$, then for every $a_i\in R_j$,  there is an edge  $ i-v_j$;}\\
\text{\hspace{5mm}$\bullet$  if $R_j\in \listset \ell$, then for $i\in\ma$ that is ranked at the $\ell'$-th position in $R_j$,}\\
\text{\hspace{25mm}there is a path $i -v_j^{\ell',1}-\cdots-v_j^{\ell’, \ell'-1}-R_j$}\\
\text{On $\ma$'s side: $x$ is connected to all vertices in $\ma$, plus a tail $x-x_1-\cdots-x_{n+1}$}\\
\text{On $\mn$'s side: every $j\le n'$ is connected to a tail $R_j-t_j^1-\cdots-t_j^{h_j}$, and} \\ 
\text{\hspace{18mm}$y$ is connected to all vertices in $\mn$, plus a tail $y-y_1-\cdots-y_{n+2}$.}\\ 
\end{cases}
\end{align*}
\end{dfn}
In the  $\calV$ part, when $R_j\in\committee \ell$, $\{v_j^{\ell',i'}:1\le i'\le \ell'\}$ are used to represent the $(\ell'+1)$-th ranked alternative in $R_j$. If $R_j\in \committee \ell$, then $R_j$ is directly connected to all $a\in R_j$. In the $\calT$ part of  vertices, $\{t_{j}^1,\ldots, t_j^{h_j}\}$ represent the  multiplicity of $R_j$ in $\vec h$.  An example of $\commonpref 4$-histogram $\vec h$ and $G_{\vec h}$ are shown in Figure~\ref{fig:ex-histogram-mapping}.


Unlike Definition~\ref{dfn:M2-hist-graph} (for simple $\committee 2$-profiles), Definition~\ref{dfn:histogram->graph} is a one-way correspondence from $\commonpref m$-histogram $\vec h$ to graph $G_{\vec h}$ (which contains more vertices than the alternatives). Therefore, an automorphism partition  (respectively, a canonical labeling) for $G_{\vec h}$ cannot be directly used as an automorphism partition  (respectively, a representative selection function) for $\vec h$. 

We address this challenge by taking advantage of the specific structures of the graph in Definition~\ref{dfn:histogram->graph}. For the connection between automorphism partitions, notice that any automorphism $\mu$ for $G_{\vec h}$ must preserve the auxiliary nodes (i.e., $\calX\cup\calY$), and thus must map $\ma$ to itself. Therefore, the restriction of   graph automorphism partition $\ap(G_{\vec h})$ on $\ma$ is well-defined, which turns out to be an automorphism partition for $\vec h$ as proved later in Lemma~\ref{lem:connection-general} (i). 

The connection between the representative selection function and canonical labeling may seem straightforward in light of Definition~\ref{dfn:histogram->graph}, and a natural idea for obtaining a representative selection function for a given histogram $\vec h$ would be to first convert it to $G_{\vec h}$, apply  $\cl$, and then map it back to a canonical form of $\vec h$. However, because after applying $\cl$, $G_{\vec h}$ may not be a graph that corresponds to another histogram, it is unclear how the map-back part works. 

Our solution is based on the observation that, according to the construction in Definition~\ref{dfn:histogram->graph}, any graph $G$ that is isomorphic to $G_{\vec h}$ has certain structures that can be used to uniquely ``decode'' the alternatives, the preferences, and their multiplicity. This is formally presented in the following decoding algorithm, which relies on a lexicographic priority order $\rhd$ over $\commonpref m$.

\begin{algorithm}[htp]
\caption{{\bf (Decoding$_\cl$)} Given a $\commonpref m$-histogram $\vec h$, output a mapping $\sigma$ over $\ma$.\label{alg:decoding}}
\begin{algorithmic}[1] 
\STATE Compute $G_{\vec h}$ according to Definition~\ref{dfn:histogram->graph}, then compute $\mu = \cl(G_{\vec h})$ and $G=\mu(G_{\vec h})$.\label{step:compute-G}
\STATE Order vertices in $\mu(\ma)$ according to $\rhd$ and mark them $\{1,\ldots,m \}$.\label{step:identify-x}
\STATE For vertices in $\mu(\mn)$, use marks in step~\ref{step:identify-x} to identify the preferences, and then mark the vertices with $\{R_1,\ldots, R_{n'}\}$ according to $\rhd$.
\label{step:identify-v}
\STATE  Mark the vertices in $\mu(\calT)$ to identify multiplicities of the preferences.\label{step:identify-multiplicities}
\STATE Identify the histogram $\vec h^*$ using results of step~\ref{step:identify-v} and step~\ref{step:identify-multiplicities}.
\STATE {\bf return} the mapping from $\vec h$ to $\vec  h^*$ based on step~\ref{step:identify-x}.
\end{algorithmic}
\end{algorithm}
%
 
Using Algorithm~\ref{alg:decoding}, we establish the connection in the next lemma.

\begin{lem}[{\bf The connection}{}]  
\label{lem:connection-general}
For any $\commonpref m$-histogram $\vec h$ and any canonical labeling $\cl$, 
\begin{itemize}
\item [\bf (i) ] {\bf (Graph AP$\Rightarrow$Histogram AP)}: $\ap(\vec h) = \ap(G_{\vec h})|_{\ma}$;
\item [\bf (ii)] {\bf \boldmath ($\cl\Rightarrow$rep.~selection)}:  Algorithm~\ref{alg:decoding} computes a representative selection function for $\histset{m}{n}{\commonpref m}$.
\end{itemize}
\end{lem}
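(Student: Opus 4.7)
The plan is to handle the two parts of the lemma separately, with both arguments relying on the same key structural observation: the auxiliary vertices in $\calX\cup\calY$ are ``locked in place'' by their unique degree and tail-length signatures, which forces any automorphism of $G_{\vec h}$ to preserve the partition of vertices into $\ma$, $\calN$, $\calV$, $\calT$, $\calX$, and $\calY$. Concretely, I would first argue that $x$ is the unique vertex incident to a pendant path of length exactly $n+1$ whose other endpoint has degree $m+1$, and similarly $y$ is uniquely pinned by its pendant path of length $n+2$. This forces any automorphism $\mu$ of $G_{\vec h}$ to fix $x$, $y$, and each auxiliary tail pointwise, hence to permute $\ma$ setwise (as the non-tail neighbors of $x$) and $\calN$ setwise (as the non-tail neighbors of $y$). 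An analogous book-keeping argument based on the $\calV$-path lengths $1,2,\ldots,\ell-1$ recovers the rank of each alternative inside any list preference, and the tails $t_j^1,\ldots,t_j^{h_j}$ recover the multiplicity $h_j$.

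For part (i), once this structural lemma is in hand, I would prove the two inclusions of $\ap(\vec h)=\ap(G_{\vec h})|_{\ma}$. For $\supseteq$, given an automorphism $\mu$ of $G_{\vec h}$ with $\mu(a)=b$, the restriction $\sigma\triangleq\mu|_{\ma}$ is a permutation of $\ma$; I would verify that $\sigma(\vec h)=\vec h$ by using $\mu$'s induced bijection on $\calN$ (respecting $\calV$ and $\calT$) to match each $h_j\times R_j$ to a $h_{j'}\times R_{j'}$ with $h_j=h_{j'}$ and $R_{j'}=\sigma(R_j)$. For $\subseteq$, given an automorphism $\sigma$ of $\vec h$ with $\sigma(a)=b$, I would explicitly extend $\sigma$ to a graph automorphism of $G_{\vec h}$: map each $R_j$ to the unique $R_{j'}$ with $R_{j'}=\sigma(R_j)$ (which exists with the same multiplicity because $\sigma$ fixes $\vec h$), transport the $\calV$ and $\calT$ gadgets attached to $R_j$ to those attached to $R_{j'}$, and fix $\calX\cup\calY$ pointwise.

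For part (ii), I would first show $\vec h_1\cong \vec h_2 \Longrightarrow G_{\vec h_1}\cong G_{\vec h_2}$ by lifting any $\tau$ with $\tau(\vec h_1)=\vec h_2$ via the construction in Def.~\ref{dfn:histogram->graph}. Hence $\cl(G_{\vec h_1})$ and $\cl(G_{\vec h_2})$ produce the same canonical graph $G^*$. Then I would argue that Algorithm~\ref{alg:decoding} produces a deterministic output from $G^*$ alone: the structural lemma lets steps~\ref{step:identify-x}--\ref{step:identify-multiplicities} identify $\mu(\ma)$, $\mu(\calN)$, $\mu(\calV)$, $\mu(\calT)$ purely from degrees and pendant-path lengths, and the lexicographic order $\rhd$ removes the remaining ambiguity when two auxiliary gadgets are interchangeable. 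Since both $G_{\vec h_1}$ and $G_{\vec h_2}$ land on the same $G^*$ and hence on the same decoded histogram $\vec h^*$, the permutations returned by Algorithm~\ref{alg:decoding} satisfy the defining equation of Def.~\ref{dfn:rs}.

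I expect the main obstacle to be the structural lemma underpinning both parts, specifically the case analysis ruling out ``cross-side'' automorphisms that might mix $\ma$ with $\calN$ or internal vertices of $\calV$-paths with $\calT$-tails. The degree-and-tail-length accounting should work because $x$ has degree $m+1$, $y$ has degree $n'+1$, the two distinguishing tails have incommensurable lengths $n+1$ and $n+2$, and the short $\calV$-paths have length at most $m-1<n+1$; writing this out cleanly, particularly handling edge cases where $R_j\in\committee\ell$ (no $\calV$-path) coexists with $R_j\in\listset\ell$, will be the bulk of the work.
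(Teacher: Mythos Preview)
Your proposal is correct and follows essentially the same approach as the paper: the paper packages your ``structural lemma'' into a separate Proposition (showing that any isomorphism $\mu$ between $G_{\vec h}$ and $G_{\vec h'}$ fixes the auxiliary vertices, hence maps $\ma$ to $\ma$ and $\mn$ to $\mn$, and that $\mu|_{\ma}$ is a histogram isomorphism; conversely, every histogram isomorphism lifts to a graph isomorphism), and then derives parts~(i) and~(ii) exactly as you outline. The only cosmetic difference is that the paper identifies $y$ first (unique tail of length $n+2$) and then $x$ among the remaining vertices (unique tail of length $n+1$), rather than using your degree-plus-tail-length signature for $x$ directly.
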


 \begin{proof}
To prove (i), let $\vec h =\vec h' = \hist(P)$ and consider the isomorphism from $G_{\vec h}$ to $G_{\vec h'}$. For any $a,b$ that belong to the same set in the partition $\ap(P)$, suppose isomorphism $\sigma$ (over $\ma$) maps $a$ to $b$. Then, by Proposition~\ref{prop:properties-GP} (iii), there exists an isomorphism $\mu$ that maps $G_{\vec h}$  to itself, i.e., $\mu$ is a automorphism, such that $\mu|_\ma  = \sigma$. This means that $a$ and $b$ are in the same set in the partition $\ap(G_{\hist(P)})|_{\ma}$.

For any $a,b$ that belong to the same set in the partition $\ap(G_{\hist(P)})|_{\ma}$, suppose isomorphism $\mu$ (from $G_{\vec h}$ to $G_{\vec h'}$) maps $a$ to $b$. Then, by Proposition~\ref{prop:properties-GP} (ii), $\mu|_{\ma}(a) = b$. This means that $a$ and $b$ are in the same set in the partition $\ap(P)$.

\myparagraph{(ii)} We first prove some properties of $G_{\vec h}$.

\begin{prop}[{\bf\boldmath Properties of $G_{\vec h}$}{}]
\label{prop:properties-GP}
For any pair of histograms $\vec h$ and $\vec h'$,

(1) $\vec h\cong\vec h' \Longleftrightarrow G_{\vec h}\cong G_{\vec h'}$;

(2) for any permutation $\mu$ such that $\mu(G_{\vec h}) = G_{\vec h'}$, we have $\mu(\ma) = \ma$ and $\mu(\mn)=\mn$. Let $\mu|_\ma$ be the restriction of $\mu$ on $\ma$. Then, $\mu|_\ma(\vec h) = \vec h'$. 

(3) for any isomorphism $\sigma$ such that $\sigma(\vec h) = \vec h'$, there exists an isomorphism $\mu$ such that $\mu(G_{\vec h}) = G_{\vec h'}$ and $\mu|_{\ma} = \sigma$.
\end{prop}
\begin{proof}
Part (1) $\Rightarrow$. For any $\sigma_\ma$ such that $\sigma_\ma(\vec h) = \vec h'$, we construct a mapping $\mu$ for $G_{\vec h}$ as follows.
\begin{itemize}
\item For any $i\in \ma$, $\mu(i) = \sigma(i)$.
\item For any $j\le n$, suppose $\sigma(v_{j})=v_{j'}$ for some $j'\le n'$. Then, define $\mu(v_j) = \sigma(v_{j})$, $\mu(t_j^s) = t_{j'}^s$, and $\mu(v_j^{\ell,\ell'})=v_{j'}^{\ell,\ell'}$.

\item For any $a\in\calD$, $\mu(a)=a$.
\end{itemize}

Part (1) $\Leftarrow$. It suffices to prove Part (2), which explicitly constructs an isomorphism $\mu|_{\ma}$ between $\vec h$ and $\vec h'$. Let us first make a few observations on $\mu$ in order. Because $G_{\vec h}$ and $G_{\vec h'}$ share the same set of alternatives $\ma$ and auxiliary vertices, we let $\ma\cup\mn\cup\calT\cup\calV\cup\calD$ denote the vertices in $G_{\vec h}$ and let $\ma\cup\mn'\cup\calT'\cup \calV'\cup\calD$ denote the vertices in $G_{\vec h'}$. First, notice that the tail attached to $y$ is the only tail with length $n+2$. Therefore, $\mu(y)=y$ and $\mu_{y_j}=y_j$. Then, among the remaining vertices, the tail attached to $x$ is the only tail of length $n+1$, which means that $\mu(x)=x$ and $\mu_{x_j}=x_j$. Notice that $\ma$ are the only remaining vertices that are directly attached to $x$. Therefore, $\mu(\ma) = \ma$. Consequently,  $\mu|_{\ma}$ is well-defined. 

Moreover, $\mu$ can be viewed as a mapping from $\vec h$ to $\vec h'$ in the following sense. First, similarly, $\mn' = \mu(\mn)=\mn$. Notice that $\calT$ consists of tails attached to $\mn$, they must be mapped to tails attached to $\mn'$ by $\mu$. This means that $\calT'=\mu(\calT)$. Then, for any $i\in\ma$ and $j\le n'$ with a path of length $s$, let $i' = \mu(i)$ and $v_{j'} = \mu(v_j)$, there is a path of length $s$ as well. This means that $\mu(\calV)=\calV'$, and $i$ being ranked at the $s$-th position in $R_j$ in $\vec h$ if and only if $i'$ being ranked at the $s$-th position in $R_{j'}$ in $\vec h'$, and $h_j = h_{j'}$ (because the lengths of tails attached to $v_j$ in $G_{\vec h}$ and to $v_{j'}$ in $G_{\vec h'}$ are the same). 

To prove Part (2), it remains to show that $\mu|_{\ma}(\vec h) =\vec h'$. Suppose for the sake of contradiction that this is not true. Then, there exists $j\le n'$ such that either 
\begin{itemize}
\item [] (i) $\mu|_{\ma}(R_j)$ does not appear in $\vec h$, i.e., $\mu|_{\ma}(R_j)\notin\{a\}$, or 
\item [] (ii) $\mu|_{\ma}(R_j)$ appears in $\vec h$, but the multiplicity does not match. That is, there exists $j'\le n'$ such that $R_{j'}=\mu|_{\ma}(R_j)$ but $h_{j'}\ne h_j$.
\end{itemize}
Either leads to a contradiction by considering $\mu(v_j)$ in $G_{\vec h'}$. More precisely, based on the observations on $\mu$ above, let $v_{j'}=\mu(v_j)$, which is a contradiction to (i). Also, the tails attached to $v_j\in G_{\vec h}$ must be mapped to the tail attached to $v_{j'}\in G_{\vec h'}$ by $\mu$, which is a contradiction to (ii). This proves Part (2) as well as (1) $\Leftarrow$.

Part (3) naturally follows after the definition in Definition~\ref{dfn:histogram->graph}:  $\mu$ does not change vertices in $\calD$. For any $a\in\ma\cup \mn$, let $\mu(a) = \sigma(a)$. Then, each vertex in $\calT\cup\calV$ (that is associated with vertex $v\in\mn$ and $i\in\ma$ are matched to corresponding vertices associated with $\mu(v)$ and $\mu(i)$. It is not hard to verify that $\mu$ is an isomorphism between $G_{\vec h}$ and $G_{\vec h'}$, and $\mu|_{\ma} = \sigma$.
\end{proof}

Part (1) of Proposition~\ref{prop:properties-GP} states that the conversion of histograms to graphs in Definition~\ref{dfn:histogram->graph} preserves the isomorphism relation. Part (2) states that any isomorphism between two graphs generated from histograms can be used to obtain an isomorphism between the two histograms. This is the key observation behind the algorithms in this section. Part (3) states that any isomorphism between the histograms can be obtained in a way described in Part (2).

Let $\sigma$ denote the output of Algorithm~\ref{alg:decoding}. It is not hard to verify that $G\cong G_{\sigma(\vec h)}$. It remains to show that for any pair of isomorphic histograms $\vec h\cong \vec h'$, Algorithm~\ref{alg:decoding} maps $\vec h$ and $\vec h'$ to the same histogram. Let $\sigma_{\vec h}$ and  $\sigma_{\vec h'}$ denote the output of Algorithm~\ref{alg:decoding} on $\vec h$ and $\vec h'$ respectively. It follows from Proposition~\ref{prop:properties-GP} that $G_{\vec h}\cong G_{\vec h'}$, which means that $\cl(G_{\vec h})= \cl(G_{\vec h'})$, which means that step~\ref{step:compute-G} of Algorithm~\ref{alg:decoding} computes the same graph $G$ on inputs $\vec h$ and $\vec h'$. Therefore, $\sigma_{\vec h}(\vec h) = \sigma_{\vec h'}(\vec h')$. This completes the proof of Lemma~\ref{lem:connection-general}.
\end{proof}

\subsection{Most Equitable Rule with Verifications and Tie-Breaking}
\label{sec:CLVR-general}

Based on  Lemma~\ref{lem:connection-general}, we propose $\CLR[\cl]$ for $\commonpref m$-preferences in  Algorithm~\ref{alg:CLVR-general}.

\begin{algorithm}[htp]
\caption{{\bf ($\CLR[\cl]$: $\commonpref m$-profiles)} Given  a $\commonpref m$-profile $P$, return $(d,c)\in\decspace \times \{0,1\}$.\label{alg:CLVR-general}}
\begin{algorithmic}[1]  
\STATE Compute $\vec h \triangleq \hist(P)$ and let $\rs^*$ denote the application of Algorithm~\ref{alg:decoding} to $\vec h$.
\STATE Compute $G_{\vec h}$ according to Definition~\ref{dfn:histogram->graph}. Compute $\ap(G_{\vec h})$ using $\cl$. Then, compute

\hfill$D^* \triangleq   \fw_\decspace(P) \text{ by applying Proposition~\ref{prop:verification} to $\ap(G_{\hist})$ }$\hfill
\STATE  {\bf if }$D^*=\emptyset$ {\bf then return }$(\arg\max_{d\in \decspace}^{\rhd}\rs^*(d), 0)$.
\STATE {\bf else return }$(\arg\max_{d\in D^*}^\rhd \rs^*(d), 1)$. 
\end{algorithmic}
\end{algorithm} 


\begin{thm}
\label{thm:CLVR-general}
For any graph canonical labeling  $\cl$, $\CLR[\cl]$ (Algorithm~\ref{alg:CLVR-general}) computes a $\merv$ under common profiles and common decisions with polynomially many calls to $\cl$.
\end{thm}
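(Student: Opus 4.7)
The plan is to verify the three requirements in Proposition~\ref{prop:chara-mevr} (that the rule component is most equitable), to verify correctness of the verification bit via Proposition~\ref{prop:anr-possibility}, and to bound the number of $\cl$-calls. I will leverage Lemma~\ref{lem:connection-general} throughout to translate between histograms and the graphs $G_{\vec h}$ produced by Definition~\ref{dfn:histogram->graph}.

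First, I would argue correctness of $D^{*}$. By Lemma~\ref{lem:connection-general}(i), $\ap(\vec h)=\ap(G_{\vec h})|_{\ma}$, so the partition of $\ma$ needed to apply Proposition~\ref{prop:verification} is exactly the restriction to $\ma$ of a graph automorphism partition. Since graph automorphism partition reduces to $\GI$~\citep{Booth1977:Problems}, and since $\GI\reducesto\CL$, we can compute $\ap(G_{\vec h})$ (and hence $D^{*}=\fw_{\decspace}(P)$) with polynomially many calls to $\cl$. Applying Proposition~\ref{prop:verification} gives $D^{*}=\fw_{\decspace}(P)$, which by~\citep[Lemma~4]{Xia2023:Most} (recorded in the text above Proposition~\ref{prop:verification}) is nonempty if and only if $P$ is ANR-possible. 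This also matches Proposition~\ref{prop:anr-possibility}, so the returned verification bit $c$ is $1$ iff $P$ is ANR-possible; hence $\veri$ is a correct verification function.

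Next, I would verify that the selection step implements Proposition~\ref{prop:chara-mevr}. Let $\rs^{*}$ denote the mapping produced by Algorithm~\ref{alg:decoding}. By Lemma~\ref{lem:connection-general}(ii), $\rs^{*}$ is a representative selection function for $\commonpref m$-histograms, i.e.\ $\rs^{*}(\vec h_{1})(\vec h_{1})=\rs^{*}(\vec h_{2})(\vec h_{1})$ whenever $\vec h_{1}\cong\vec h_{2}$. Define $r^{*}$ on ``representative'' histograms by $r^{*}(\sigma(P))\triangleq \arg\max^{\rhd}_{d\in \sigma(D^{*})}d$; this is well-defined because all isomorphic histograms are sent to a common image by $\rs^{*}$. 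The output of Algorithm~\ref{alg:CLVR-general} at an ANR-possible $P$ is $\arg\max^{\rhd}_{d\in D^{*}}\rs^{*}(d)$, which equals $(\rs^{*})^{-1}(r^{*}(\rs^{*}(P)))$ by construction, matching condition~(2) of Proposition~\ref{prop:chara-mevr}. Condition~(1), that $r^{*}(\rs^{*}(P))$ is a fixed-point decision of $\rs^{*}(\hist(P))$, follows since fixed-point decisions are preserved under permutations (the set $\sigma(D^{*})$ is precisely $\fw_{\decspace}(\sigma(\vec h))$). Hence $r$ is a most equitable rule, so $\verified{r}=\langle r,\veri\rangle$ is a $\merv$.

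Finally, the complexity claim: Algorithm~\ref{alg:CLVR-general} constructs $G_{\vec h}$ (polynomial in $|P|$), then invokes the graph automorphism partition routine once, which uses polynomially many calls to $\cl$; the decoding Algorithm~\ref{alg:decoding} invokes $\cl$ once; and the $\arg\max^{\rhd}$ step is polynomial in $|\decspace|$, which is polynomial in $m$ for $\decspace\in\commondec m$. Therefore the total number of $\cl$-calls is polynomial in the input size. The main obstacle, already isolated in Lemma~\ref{lem:connection-general}, is justifying that the graph-side objects transfer correctly to histogram-side objects; once that lemma is in hand the present theorem is essentially a bookkeeping verification of Proposition~\ref{prop:chara-mevr} together with Proposition~\ref{prop:verification}/Proposition~\ref{prop:anr-possibility}.
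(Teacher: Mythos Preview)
Your proof is correct and follows essentially the same approach the paper indicates (the paper omits the proof of Theorem~\ref{thm:CLVR-general}, noting for the special case that it ``naturally follows after combining Proposition~\ref{prop:chara-mevr}, Proposition~\ref{prop:verification}, and Lemma~\ref{lem:simple-M2}''; the general case swaps in Lemma~\ref{lem:connection-general}). Your decomposition into (a) correctness of the verification bit via Proposition~\ref{prop:anr-possibility}, (b) verification of the two conditions in Proposition~\ref{prop:chara-mevr} using Lemma~\ref{lem:connection-general}(ii), and (c) a count of $\cl$-calls is exactly what the paper intends.

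There is, however, one genuine error in your complexity paragraph. You write that ``the $\arg\max^{\rhd}$ step is polynomial in $|\decspace|$, which is polynomial in $m$ for $\decspace\in\commondec m$.'' This is false: $|\committee{\lfloor m/2\rfloor}|=\binom{m}{\lfloor m/2\rfloor}$ and $|\listset m|=m!$ are both exponential in $m$, and $|D^{*}|$ can likewise be exponential (e.g.\ when $\decspace=\listset k$ and $\ap(P)$ has many singletons). The $\arg\max^{\rhd}$ step is still polynomial, but it must be argued structurally rather than by enumeration. For $\decspace=\listset k$, one simply lists the singletons of $\ap(P)$, applies $\sigma=\rs^{*}$, and picks the $k$ with lexicographically smallest images in order. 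For $\decspace=\committee k$, one runs a subset-sum style dynamic program over the at most $m$ parts of $\ap(P)$ (each of size at most $m$) to find a union of total size exactly $k$, combined with a greedy pass over the parts sorted by $\sigma$-image to achieve lex-maximality. The same remark applies to ``compute $D^{*}$'' in step~2 of Algorithm~\ref{alg:CLVR-general}: this should be read as computing $\ap(P)$ and testing $D^{*}\ne\emptyset$ via Proposition~\ref{prop:anr-possibility}, not as materializing $D^{*}$.
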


The canonical-labeling tie-breaking for $\commonpref m$-profiles is defined similarly. 

\begin{algorithm}[H]
\caption{{\bf ($\CLTB[\cl]$: $\commonpref m$-profiles)} Given a   $\commonpref m$-profile $P$ and  $D\subseteq\decspace$, return $(d,c)\in D \times \{0,1\}$.\label{alg:CLTB-general}}
\begin{algorithmic}[1]  
\STATE Compute $\vec h \triangleq \hist(P)$ and  a representative selection function $\rs^*$ by applying Algorithm~\ref{alg:decoding} to $\vec h$.
\STATE Compute $G_{\vec h}$ according to Definition~\ref{dfn:histogram->graph}. Compute $\ap(G_{\vec h})$ using $\cl$. Then, compute

\hfill$D^* \triangleq D\cap \fw_\decspace(P) \text{ by applying Proposition~\ref{prop:verification} to $\ap(G_{\vec h})$ }$\hfill 
\STATE  {\bf if }$D^*=\emptyset$ {\bf then return }$(\arg\max_{d\in \decspace}^{\rhd}\rs^*(d), 0)$.
\STATE {\bf else return }$(\arg\max_{d\in D^*}^\rhd \rs^*(d), 1)$. 
\end{algorithmic}
\end{algorithm}


\begin{thm}
\label{thm:CLTB-general}
For any graph canonical labeling $\cl$, $\CLTB[\cl]$ (Algorithm~\ref{alg:CLTB-general}) computes a $\merev$ for any anonymous and neutral irresolute rule under common profiles and common decisions with polynomially many calls to $\cl$.
\end{thm}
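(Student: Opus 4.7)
The plan is to reduce Theorem~\ref{thm:CLTB-general} to (a) the correctness of the two connection results in Lemma~\ref{lem:connection-general} and (b) the known characterization of most equitable refinements in terms of fixed-point decisions from~\citep[Lemma~4]{Xia2023:Most}, in the same way that Theorem~\ref{thm:CLTB-M2} was obtained from Lemma~\ref{lem:simple-M2}, Proposition~\ref{prop:chara-mevr}, and Proposition~\ref{prop:verification}. I would therefore argue the statement along three lines: correctness of $D^{*}$, correctness of the tie-breaking, and the complexity bound.

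First I would verify that $D^{*}$ is exactly $D \cap \fw_{\decspace}(\vec h)$. By Lemma~\ref{lem:connection-general}~(i), $\ap(\vec h) = \ap(G_{\vec h})|_{\ma}$, so restricting the graph automorphism partition to $\ma$ gives the histogram automorphism partition; and because auxiliary vertices in $\calX \cup \calY$ have distinguishing tail lengths (Proposition~\ref{prop:properties-GP}), the restriction is trivially obtained from $\cl(G_{\vec h})$ by a single pass. Feeding this partition into Proposition~\ref{prop:verification} yields $\fw_{\decspace}(\vec h)$, hence $D^{*}$.

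Next I would argue that selecting $\arg\max^{\rhd}_{d \in D^{*}} \rs^{*}(d)$ is a most equitable refinement of $\cor$. By~\citep[Lemma~4]{Xia2023:Most}, a refinement of an anonymous and neutral $\cor$ satisfies ANR at $P$ if and only if it chooses some element of $\cor(P) \cap \fw_{\decspace}(\vec h) = D^{*}$; thus when $D^{*} \neq \emptyset$ any fixed-point co-winner works for ANR at that single profile. The nontrivial part is global consistency: the chosen rule must make identical choices on every profile with the same histogram (anonymity) and equivariant choices on profiles whose histograms are isomorphic (neutrality). This is where $\rs^{*}$ (Algorithm~\ref{alg:decoding}) does the work: by Lemma~\ref{lem:connection-general}~(ii) it is a representative selection function, so for any $\vec h_1 \cong \vec h_2$ we have $[\rs^{*}(\vec h_1)](\vec h_1) = [\rs^{*}(\vec h_2)](\vec h_2)$, and the set $D^{*}$ together with the order $\rhd$ is permuted equivariantly across the orbit. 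Combined with Proposition~\ref{prop:chara-mevr} (instantiated with $\rs = \rs^{*}$ and $r^{*}(\sigma(P)) = \arg\max^{\rhd}_{d \in \sigma(D^{*})} d$), this yields a rule that is anonymous, neutral, resolute at every ANR-possible profile of $\cor$, and a refinement of $\cor$ on all profiles. The verification bit is then immediate: $\veri(P) = 1$ iff $D^{*} \neq \emptyset$ iff some refinement of $\cor$ satisfies ANR at $P$.

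Finally, for the complexity bound, I would note that Algorithm~\ref{alg:CLTB-general} calls $\cl$ only inside Algorithm~\ref{alg:decoding} and inside the computation of $\ap(G_{\vec h})$; the latter is a standard $\GI$-to-$\CL$ reduction using polynomially many $\cl$ calls (since graph automorphism partition is $\GI$-easy and $\GI \reducesto \CL$, as cited in Section~\ref{sec:anr-cert-simple}), and the former is a single $\cl$ call plus polynomial-time bookkeeping over $G_{\vec h}$ (whose size is $O(n + m^{2})$). All remaining operations — building $G_{\vec h}$, decoding labels, evaluating Proposition~\ref{prop:verification} on a partition, and taking an $\arg\max$ under $\rhd$ — are polynomial in $n$ and $m$. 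The main obstacle I expect is the second step: carefully arguing that the lexicographic choice is both well-defined and equivariant across the isomorphism orbit of $\hist(P)$, which relies essentially on the consistency property of $\rs^{*}$ delivered by Algorithm~\ref{alg:decoding} rather than on any ad hoc property of the decision space $\decspace$.
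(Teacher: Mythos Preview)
Your proposal is correct and follows essentially the same approach as the paper: the paper does not spell out a proof of Theorem~\ref{thm:CLTB-general} but treats it as the direct analogue of Theorem~\ref{thm:CLTB-M2}, obtained by swapping Lemma~\ref{lem:simple-M2} for Lemma~\ref{lem:connection-general} and otherwise invoking Propositions~\ref{prop:chara-mevr} and~\ref{prop:verification} together with~\citep[Lemma~4]{Xia2023:Most}. Your three-part decomposition (correctness of $D^{*}$, equivariance via $\rs^{*}$, and the polynomial $\cl$-call count) is exactly what that swap entails; the only minor quibble is your size estimate $O(n+m^{2})$ for $G_{\vec h}$, which undercounts the $\calV$ vertices (there can be $\Theta(n' m^{2})$ of them), but this does not affect the polynomiality claim.
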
 
As discussed earlier, using the quasipolynomial-time algorithm for graph canonical labeling by~\citet{Babai2019:Canonical} in Algorithm~\ref{alg:CLVR-general} and Algorithm~\ref{alg:CLTB-general}, we have the following corollary.
\begin{coro}
\label{coro:MERC-quasipoly}
Under common profiles and common decisions, $\merv$ and $\merev$ can be computed in quasipolynomial time.
\end{coro}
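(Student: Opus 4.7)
The plan is to obtain the corollary as a direct composition of Theorems~\ref{thm:CLVR-general} and~\ref{thm:CLTB-general} with the quasipolynomial-time canonical-labeling algorithm of~\citet{Babai2019:Canonical}. Those two theorems have already isolated the algorithmic content: they reduce the computation of a $\merv$ (respectively, a $\merev$) to polynomially many calls to a canonical-labeling oracle $\cl$, plus polynomial-time bookkeeping. So the only thing left to check is that plugging Babai's algorithm into the oracle position yields quasipolynomial total running time.

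First I would bound the size of the auxiliary graph $G_{\vec h}$ constructed in Definition~\ref{dfn:histogram->graph}. The vertex set decomposes into $\ma$, $\mn$, $\calV$, $\calT$, $\calX$, and $\calY$, of sizes at most $m$, $n$, $O(nm^2)$ (the term $(\ell-1)\ell/2$ for each of at most $n$ types of list preferences), $n$, $n+2$, and $n+3$ respectively. Hence $|V(G_{\vec h})| + |E(G_{\vec h})| = O(nm^2)$, which is polynomial in the input size. Likewise, constructing $G_{\vec h}$ from $\hist(P)$, extracting the restriction $\ap(G_{\vec h})|_{\ma}$, and executing Algorithm~\ref{alg:decoding} to obtain the representative-selection permutation are all polynomial-time operations, since by Proposition~\ref{prop:verification} the set $\fw_{\decspace}$ can be read off from the partition by checking singletons or unions of cells in $\mathrm{poly}(m)$ time.

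Next I would invoke Babai's algorithm as the oracle $\cl$ in Algorithms~\ref{alg:CLVR-general} and~\ref{alg:CLTB-general}. On inputs of the above size, each call runs in time $2^{O((\log(nm))^{c})}$ for a universal constant $c$. By Theorems~\ref{thm:CLVR-general} and~\ref{thm:CLTB-general}, the number of oracle calls is $\mathrm{poly}(n,m)$, so the total running time is $\mathrm{poly}(n,m) \cdot 2^{O((\log(nm))^{c})}$, which is quasipolynomial in the input size. This bound applies uniformly to both $\merv$ and $\merev$, giving the corollary.

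There is essentially no conceptual obstacle here; the corollary is a quantitative refinement of the two theorems rather than a separate argument. The only point that requires a careful sentence or two is ensuring that all parts of Algorithms~\ref{alg:CLVR-general} and~\ref{alg:CLTB-general} outside the oracle calls (constructing $G_{\vec h}$, computing $\ap$ via polynomially many $\cl$ calls as per the $\GI$-to-$\CL$ reduction, decoding via Algorithm~\ref{alg:decoding}, and then taking the $\rhd$-maximum over $\decspace$ or $D^*$) are polynomial-time, so that the oracle cost dominates and the composed complexity remains quasipolynomial.
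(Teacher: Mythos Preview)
Your proposal is correct and matches the paper's approach exactly: the paper simply states that plugging \citet{Babai2019:Canonical}'s quasipolynomial canonical-labeling algorithm into Algorithms~\ref{alg:CLVR-general} and~\ref{alg:CLTB-general} yields the corollary, and you flesh out the running-time accounting that the paper leaves implicit. One minor point of looseness: when you say ``the set $\fw_{\decspace}$ can be read off \ldots\ in $\mathrm{poly}(m)$ time,'' note that for $\decspace=\committee k$ the set $\fw_{\decspace}$ itself may be exponentially large, so what you really need (and what suffices) is that testing non-emptiness and extracting the $\rhd$-maximal element can be done in $\mathrm{poly}(m)$ time via a subset-sum style dynamic program over the cell sizes of $\ap$; this does not affect the overall quasipolynomial bound.
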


\subsection{Polynomial-Time Computable  Cases}
\label{sec:special-easy}
In this subsection, we present a few special cases of social choice settings under which $\merv$, $\merev$, and/or $\anrposs$ can be computed in polynomial time.  The first such case is bounded $m$. When $m$ is bounded above by a constant, the number of permutations over $\ma$ is a constant, which means that for any profile $P$, $\ap(P)$ can be computed in polynomial time according to Proposition~\ref{prop:verification}. Therefore, we have the following Theorem.

\begin{thm}
\label{thm:bounded-m}
For any bounded $m$, any $\decspace\in \commondec m$, and  any $\commonpref{m}$-profile,   $\merv$,  $\merev$, and $\anrposs$ can be computed in polynomial time. 
\end{thm}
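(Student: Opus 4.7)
\textbf{Proof proposal for Theorem~\ref{thm:bounded-m}.}
The plan is to exploit the fact that when $m$ is bounded by a constant, the symmetric group $\sgroup$ has constant size $m!$, the preference set $\commonpref m$ is of constant size, and every $\decspace \in \commondec m$ is of constant size. Consequently, every subroutine that would otherwise invoke $\GI$, $\GA$, or $\CL$ in the general algorithms of Sections~\ref{sec:simple-case} and~\ref{sec:general-easy} collapses to a brute-force enumeration of $O(1)$ objects, and the whole procedure becomes polynomial in the number of voters $n$.

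First, I would compute $\vec h \triangleq \hist(P)$ in time $O(n)$; this is a vector with a constant number of entries, each of magnitude at most $n$. Next, I would compute the stabilizer $\stab(\vec h) = \{\sigma \in \sgroup : \sigma(\vec h) = \vec h\}$ by enumerating all $\sigma \in \sgroup$ and, for each, checking the equality $\sigma(\vec h) = \vec h$ in time polynomial in $n$. From $\stab(\vec h)$, the automorphism partition $\ap(P)$ is immediate by Definition~\ref{dfn:ap-alternatives}: $a,b$ lie in the same part iff some $\sigma \in \stab(\vec h)$ satisfies $\sigma(a) = b$.

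Second, with $\ap(P)$ in hand, Proposition~\ref{prop:anr-possibility} decides $\anrposs$ by inspecting the partition (which has at most $m$ parts), checking either (i) whether $\ap(P)$ has at least $k$ singletons in the $\listset k$ case, or (ii) whether some union of parts has size exactly $k$ in the $\committee k$ case. Both tests run in constant time. To compute a $\merv$, I follow Proposition~\ref{prop:chara-mevr}: (a) build a representative selection function $\rs$ by enumerating all $\sigma \in \sgroup$ and returning one that maps $\vec h$ to the $\rhd$-maximum element of its orbit---this satisfies Definition~\ref{dfn:rs} because two isomorphic histograms share the same orbit and hence the same $\rhd$-maximum image; (b) at $\sigma(P)$ with $\sigma = \rs(P)$, compute $\fw_{\decspace}(\sigma(\vec h))$ by applying Proposition~\ref{prop:verification} to $\sigma(\ap(P))$, iterating over the constantly many $d \in \decspace$; (c) if the set is empty, output verification bit $0$ with any $d$; otherwise select a fixed-point decision $d^*$ by $\rhd$ and output $(\sigma^{-1}(d^*), 1)$. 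For $\merev$ one additionally intersects $\fw_{\decspace}(\sigma(\vec h))$ with $\sigma(D)$ before selecting $d^*$, as in Algorithm~\ref{alg:CLTB-M2}.

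There is no substantive obstacle: once the brute-force computation of $\stab(\vec h)$ and hence $\ap(P)$ is in place, the theorem is a direct application of the characterizations of Section~\ref{sec:characterizaion-MER}. The only point requiring a little care is verifying that enumerating $\sgroup$ and $\decspace$---and reading off fixed-point decisions via Proposition~\ref{prop:verification}---can be bundled into a single polynomial-time procedure; this is routine because all search spaces have constant cardinality when $m$ is fixed, and the only quantities growing with the input are the vote count $n$ and the bit-lengths of histogram entries.
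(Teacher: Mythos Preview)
Your proposal is correct and follows essentially the same approach as the paper: the paper's entire argument is the single observation that when $m$ is bounded the number of permutations over $\ma$ is a constant, so $\ap(P)$ can be computed in polynomial time, after which the characterizations of Section~\ref{sec:characterizaion-MER} apply. You have simply spelled out in full the brute-force enumeration of $\sgroup$ and the subsequent bookkeeping that the paper leaves implicit.
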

The next theorem extends the polynomial-time algorithm for $\merev$ under $\listset m$ preferences~\citep{Xia2023:Most} to $\listset {m-C}$-profiles, where $C$ is a constant.  

\begin{thm}
\label{thm:const-unranked}
For any constant $C\in \mathbb Z_{\ge 0}$, and  any $\commonpref{m}$-profile, and any $\listset {m-C}$-profile,  $\merv$,  $\merev$, and $\anrposs$  can be computed in $C!\text{poly}(mn)$ time.
\end{thm}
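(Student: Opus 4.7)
The plan is to exploit the rigidity of $\listset{m-C}$-preferences: a ranking of $m-C$ alternatives is determined by the list itself, so once the image $\sigma(R)$ of a vote $R\in\listset{m-C}$ under a permutation $\sigma\in \sgroup$ is fixed, $\sigma$ is pinned down on the $m-C$ ranked alternatives of $R$, leaving only the $C!$ bijections from the $C$ unranked alternatives of $R$ onto those of $\sigma(R)$.

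First I would compute the stabilizer of $\vec h \triangleq \hist(P)$ inside $\sgroup$ via this rigidity. Writing $\vec h = (h_1\times R_1,\ldots,h_{n'}\times R_{n'})$, any $\sigma$ with $\sigma(\vec h)=\vec h$ must map $R_1$ to some $R_{j'}$ with $h_{j'}=h_1$, of which there are at most $n'\le n$ choices; combined with the $C!$ choices for how to bijectively match the unranked parts, this yields $O(n\cdot C!)$ candidate permutations, each verifiable against $\sigma(\vec h)=\vec h$ in $\text{poly}(mn)$ time. The surviving set is exactly the stabilizer of $\vec h$, from which the orbit decomposition $\ap(P)$ follows in $\text{poly}(mn)$ time. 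Feeding $\ap(P)$ into Proposition~\ref{prop:anr-possibility} resolves $\anrposs$, and into Proposition~\ref{prop:verification} identifies the fixed-point decisions $\fw_{\decspace}(\vec h)$ for any $\decspace\in\commondec m$.

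For $\merv$ and $\merev$, I would construct a representative selection function $\rs$ by a parallel enumeration. Any canonical form of an $\listset{m-C}$-histogram (the highest-priority image under $\sgroup$ with respect to $\rhd$) has its top-priority type equal to $[1\succ 2\succ\cdots\succ(m-C)]$, which is achievable by some $\sigma$ that normalizes one of the $n'$ types $R_j$ to this list. Enumerating the $n'\cdot C!$ such candidate $\sigma$'s (one per choice of $R_j$ and of bijection from the unranked alternatives of $R_j$ onto $\{m-C+1,\ldots,m\}$), computing $\sigma(\vec h)$ for each, and selecting the priority-maximizing $\sigma^\ast$ gives a well-defined $\rs(\vec h)$: isomorphic histograms share a canonical form, so Definition~\ref{dfn:rs} is satisfied. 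Plugging $\rs$ and $\ap$ into the templates of Algorithms~\ref{alg:CLVR-general} and~\ref{alg:CLTB-general}, with the $\cl$ subroutine replaced by the direct $C!\cdot\text{poly}(mn)$ computation above, yields $\merv$ and $\merev$ within the claimed time budget.

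The main obstacle will be verifying that the $C!$ factor cleanly absorbs all the combinatorial slack and that no hidden source of blow-up is hiding in the reduction. For $C=0$ the construction collapses to the polynomial-time algorithm of~\citep{Xia2023:Most} for full rankings; for $C>0$, the key invariant is that once $\sigma(R_j)$ and a bijection on the unranked slots of $R_j$ are chosen, $\sigma$ is determined on all of $\ma$, so checking consistency against the remaining $n'-1$ types is a deterministic polynomial-time sweep. All remaining steps---orbit extraction, lex comparison of histograms under $\rhd$, and the applications of Propositions~\ref{prop:verification} and~\ref{prop:anr-possibility}---are polynomial in $mn$, so multiplying gives the overall $C!\cdot\text{poly}(mn)$ bound.
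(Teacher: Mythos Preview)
Your proposal is correct and follows essentially the same approach as the paper: both exploit that fixing the image of one $(m-C)$-list under $\sigma$ determines $\sigma$ on $m-C$ alternatives and leaves only $C!$ choices for the unranked ones, yielding $O(n\cdot C!)$ candidate permutations to enumerate for both the representative selection and the stabilizer/automorphism partition. Your write-up is in fact more explicit than the paper's (which only sketches the $\rs$ computation and dismisses $\merev$ and $\anrposs$ with ``computed in similar ways''), but the underlying idea is identical.
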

\begin{proof}
We first review the algorithm for $(\listset{m},\decspace)$ in~\citep{Xia2023:Most}. For any $\listset m$-profile $P$, it adopts the lexicographic representative selection function that computes a permutation $\sigma$ that maps $\hist(P)$ to a histogram with the highest lexicographic priority. Such computation can be done in polynomial time because a linear order $R$ with the highest multiplicity in $\hist(P)$ must be mapped to the linear order with the highest priority, w.lo.g.~assumed to be $1\succ2\succ\cdots\succ m$. Consequently, we only need to examine $O(n)$ permutations to see which one maps $\hist(P)$ to the histogram with the highest priority.

Our algorithm for $\merv$ follows a similar idea to compute the lexicographic representative selection function for any $\listset {m-C}$-profile $P$. Notice that $(m-C)$-list with the highest multiplicity in $\hist(P)$ must be mapped to the $(m-C)$-list  with the highest priority, i.e., $1\succ \cdots\succ m-C$. Therefore, for each such  $(m-C)$-list we have $C!$ ways to specify the permutation for the remaining $C$ alternatives. Then, the number of permutations we need to consider is at most $C!O(n)$, which means that the algorithm runes in $C!O(mn)$ time.

$\merev$ and $\anrposs$ can be computed in similar ways. This proves Theorem~\ref{thm:const-unranked}.
\end{proof}

The following two theorems follow after straightforward applications of Proposition~\ref{prop:verification}.
\begin{thm}
\label{thm:iota=1}
For  any $m\ge 2$  and any $\decspace\in \commondec m$, under  $ (\committee{1},\decspace)$ (which is the same as $(\listset{1},\decspace)$) and $(\committee{m-1},\decspace)$,  $\merv$,  $\merev$, and $\anrposs$ can be computed in polynomial time. 
\end{thm}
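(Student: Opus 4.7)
The plan is to handle both preference spaces by reducing everything to a one-dimensional histogram on $\ma$ and then invoking the characterizations already proved in the paper (Propositions~\ref{prop:chara-mevr}, \ref{prop:verification}, and~\ref{prop:anr-possibility}).

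First I would treat $\prefspace=\committee 1$. A $\committee 1$-histogram is simply a vector $(h_1,\dots,h_m)\in\mathbb{Z}_{\ge 0}^{m}$ with $h_a$ equal to the number of votes $\{a\}$. A permutation $\sigma$ is an automorphism of $\vec h$ iff $h_{\sigma(a)}=h_a$ for every $a\in\ma$. Consequently, the automorphism partition $\ap(\vec h)$ is exactly the partition of $\ma$ into level sets of the map $a\mapsto h_a$, and can be built in $O(m\log m)$ time by sorting the $h_a$'s. A representative selection function can be obtained by sorting alternatives in (say) decreasing order of $h_a$ with ties broken by the original lexicographic order $\rhd$; this produces a canonical histogram and the corresponding permutation $\sigma=\rs(\vec h)$ in polynomial time, and it satisfies the consistency requirement of Definition~\ref{dfn:rs} by construction.

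With $\ap(\vec h)$ and $\rs(\vec h)$ in hand, the three tasks are each polynomial-time:
\begin{itemize}
\item For $\anrposs$, apply Proposition~\ref{prop:anr-possibility} directly: either check that $\ap(\vec h)$ contains at least $k$ singletons (if $\decspace=\listset k$), or check that some union of blocks of $\ap(\vec h)$ has size exactly $k$ (if $\decspace=\committee k$); the latter reduces to a simple subset-sum over block sizes, but here it is already easy because the block sizes summing to $m$ are fixed and we only need to test whether a union of blocks has total size $k$, which can be done by iterating over the blocks in any canonical order.
\item For $\merv$, compute the fixed-point set $\fw_{\decspace}(\vec h)$ by Proposition~\ref{prop:verification}, select the element of highest priority under $\sigma^{-1}(\rhd)$, and return it together with the verification bit indicating whether $\fw_{\decspace}(\vec h)$ was empty; Proposition~\ref{prop:chara-mevr} then certifies that this rule is most equitable.
\item For $\merev$ of any anonymous and neutral irresolute rule $\cor$ with co-winner set $D=\cor(P)$, intersect $D$ with $\fw_{\decspace}(\vec h)$ and break ties in the same lexicographic manner; anonymity and neutrality of $\cor$ guarantee that $D$ is compatible with the isomorphism structure, so the intersection yields a most equitable refinement with verification.
\end{itemize}

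For $\prefspace=\committee{m-1}$ I would use the natural bijection $\phi:\committee{m-1}\to\committee 1$ defined by $\phi(S)=\ma\setminus S$. Because every $\sigma\in\sgroup$ commutes with complementation, $\sigma$ is an automorphism of a $\committee{m-1}$-histogram $\vec h$ iff it is an automorphism of the image histogram $\phi(\vec h)$, so $\ap(\vec h)=\ap(\phi(\vec h))$ and a representative selection for $\phi(\vec h)$ lifts to one for $\vec h$. Running the $\committee 1$ procedure on $\phi(\vec h)$ therefore yields polynomial-time algorithms for $\anrposs$, $\merv$, and $\merev$ under $\committee{m-1}$-preferences.

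There is no real obstacle, since the automorphism group of a $\committee 1$-histogram factors as a direct product of symmetric groups on the level sets and is thus trivially computable. The only minor point to be careful about is the $\decspace=\committee k$ case of $\anrposs$, where one must test whether some union of the blocks of $\ap(\vec h)$ has total size exactly $k$; this is a bounded-cardinality subset-sum problem over at most $m$ positive integers summing to $m$, and admits a straightforward $O(m^2)$ dynamic program, which preserves polynomial-time complexity.
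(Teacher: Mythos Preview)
Your proposal is correct and follows essentially the same approach as the paper: both reduce everything to computing $\ap(P)$ by grouping alternatives with equal plurality score (for $\committee 1$) or equal veto score / complement count (for $\committee{m-1}$). Your write-up is in fact more complete than the paper's, since you explicitly construct the representative selection function needed for $\merv$/$\merev$ and you handle the $\decspace=\committee k$ subset-sum check with an $O(m^2)$ dynamic program, whereas the paper's proof only asserts that computing $\ap(P)$ suffices.
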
\begin{proof}
In light of Proposition~\ref{prop:verification}, it suffices to prove that in all settings described in the theorem and any profile, $\ap(P)$ can be computed in polynomial time. Let $P$ be a $\committee{1}$- or $\listset 1$-profile. It follows that two alternatives are in the same set of $\ap(P)$ if and only if their plurality score are the same, which means that $\ap(P)$ can be computed in polynomial time. The proof for  $\committee{m-1}$-profiles is similar and the only difference is that we check alternatives' veto scores instead. 
\end{proof}

 \begin{thm}
\label{thm:m-committee}
For any $m\ge 2$ and any $(\prefspace,\decspace)\in\CS{m}$, the answer to $\anrposs[(\prefspace,\committee m)]$ is always YES, and the answer to $\anrposs[(\committee m,\decspace)]$ is always NO except when $\decspace = \committee m$.
\end{thm}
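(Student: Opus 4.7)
The plan is to prove both parts by direct application of Proposition~\ref{prop:anr-possibility}, exploiting the fact that $\committee m = \{\ma\}$ is a singleton (it contains only the full set $\ma$). The argument is short and essentially computational, so I do not expect a substantial obstacle beyond setting up notation cleanly.

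For the first part (answer is always YES when $\decspace = \committee m$), I would note that the decision space contains only the single decision $d = \ma$. Given any $\prefspace$-profile $P$, the automorphism partition $\ap(P)$ is a partition of $\ma$ into nonempty blocks, so the union of all blocks in $\ap(P)$ equals $\ma$, which has exactly $k = m$ alternatives. By Proposition~\ref{prop:anr-possibility}(ii) with $k = m$, $P$ is ANR-possible. So the answer is YES for every $P$ regardless of the choice of $\prefspace \in \commondec m$.

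For the second part (answer is always NO when $\prefspace = \committee m$ and $\decspace \ne \committee m$), the key observation is that $\prefspace = \committee m$ admits only one vote, namely $\ma$ itself. Hence for any $n \ge 1$, every $\prefspace$-profile $P$ has the same histogram $\vec h = (n \times \ma)$, and every permutation $\sigma \in \sgroup$ satisfies $\sigma(\vec h) = \vec h$. Consequently $\ap(P)$ is the trivial partition $\{\ma\}$, which has a single block of size $m \ge 2$ and no singletons. I would then split into the two possible forms of $\decspace$:
\begin{itemize}
\item If $\decspace = \listset k$ for some $k \ge 1$: Proposition~\ref{prop:anr-possibility}(i) requires at least $k \ge 1$ singletons in $\ap(P)$; none exist, so $P$ is not ANR-possible.
\item If $\decspace = \committee k$ for some $k < m$: Proposition~\ref{prop:anr-possibility}(ii) requires a union of blocks of $\ap(P)$ of size exactly $k$. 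The only available unions are $\emptyset$ (size $0$) and $\ma$ (size $m$), neither of which equals $k$ since $1 \le k < m$. So $P$ is not ANR-possible.
\end{itemize}
Combining these two subcases with the first part (which also covers $\decspace = \committee m$ by the YES answer) yields the statement. The main thing to be careful about is simply stating precisely that $\committee m$ has a unique element and that all permutations are automorphisms of the induced histogram; everything else is an immediate invocation of Proposition~\ref{prop:anr-possibility}.
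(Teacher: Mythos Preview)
Your proposal is correct and follows essentially the same approach as the paper: both invoke Proposition~\ref{prop:anr-possibility} directly, using that $\ap(P)=\{\ma\}$ when $\prefspace=\committee m$ (since every permutation fixes the unique vote $\ma$), and that the union of all blocks of any partition of $\ma$ has size $m$ for the $\decspace=\committee m$ direction. Your write-up is in fact more explicit than the paper's, which compresses the argument to a couple of sentences.
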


\begin{proof}
The proof follows after a straightforward application of Proposition~\ref{prop:anr-possibility}. Specifically, when $\prefspace = \committee m$, for any profile $P$, every permutation is a stabilizer, which means that $\ap(P) = \{\ma\}$. The only case for $P$ to be ANR-possible is the case in which $\decspace = \committee m$.
\end{proof}

\section{General Settings: $\GI$/$\GA$-Hardness 
}
\label{sec:general-hard}
Recall that Theorem~\ref{thm:bounded-m} states that for constant $m$, $\merv$, $\merev$, and $\anrposs$ are in $P$. This motivates us to consider variable $m$ and series of social choice settings, one for each $m$. In practice, the preference spaces  across $m$'s are often of the same kind, i.e., all being $\ell$-lists  or all being $\ell$-committees, while $\ell$ is a function of $m$.  The same holds for the decision space. Therefore, in this section we investigate the complexity of $\anrposs$ for series of Common Settings defined as follows.

\begin{dfn}[{\bf $\anrposs$ for series of Common Settings}{}]
Given two functions $\iota:\mathbb N\ra \mathbb N$ and $\kappa:\mathbb N\ra \mathbb N$, in $\anrposs[(\committee \iota,\committee\kappa)]$ problem, the input consists of $m\ge 2$ and an $\committee{\iota(m)}$-profile $P$; and the output is YES if and only if $P$ is ANR-possible under $(\committee{\iota(m)},\committee{\kappa(m)})$. $\anrposs[(\committee \iota,\listset\kappa)]$, $\anrposs[(\listset \iota,\committee\kappa)]$, and $\anrposs[(\listset \iota,\listset\kappa)]$ are defined similarly. 
\end{dfn}

Recall from Proposition~\ref{prop:verification} that to compute $\anrposs$, it suffices to compute $\ap(P)$. In light of the connection between $\commonpref m$-histograms and graphs (Lemma~\ref{lem:connection-general}), computing $\ap(P)$ can be reduced to first computing $\ap(G_{\hist(P)})$, then computing its restriction to $\ma$, and finally verifying the conditions in Proposition~\ref{prop:verification}. This procedure was formally presented in Proposition~\ref{prop:anr-possibility}. Because computing $\ap$ for graphs is GI-complete, $\anrposs$ is $\GI$-easy as formally stated and proved as follows. 
\begin{thm}
\label{thm:GI-easiness}
For any $\scsetting\in\CS{m}$,  $\anrposs[\scsetting]\reducesto\GI$. 
\end{thm}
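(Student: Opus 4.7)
The plan is to exhibit a polynomial-time Turing reduction from $\anrposs[\scsetting]$ to $\GI$ by chaining together three ingredients already established in the paper: (i) the graph encoding of $\commonpref m$-histograms from Definition~\ref{dfn:histogram->graph}, (ii) the identity $\ap(\vec h)=\ap(G_{\vec h})|_{\ma}$ from Lemma~\ref{lem:connection-general}(i), and (iii) the characterization of ANR-possibility in terms of the automorphism partition given by Proposition~\ref{prop:anr-possibility}.

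Given an $\prefspace$-profile $P$ with $\prefspace\in\commondec m$, I would first compute $\vec h\triangleq\hist(P)$ and build the graph $G_{\vec h}$ according to Definition~\ref{dfn:histogram->graph}; this is clearly polynomial in $|P|$ and $m$ since $G_{\vec h}$ has $O(m^2+n)$ vertices. Next, I would invoke the classical fact cited in the paper that computing the automorphism partition of an undirected unweighted graph is $\GI$-complete~\citep{Booth1977:Problems}: in particular, $\ap(G_{\vec h})$ can be produced with polynomially many queries to a $\GI$ oracle (the standard procedure fixes pairs of vertices and asks whether the resulting vertex-colored graphs are isomorphic; vertex colors can be simulated by attached gadgets so that the oracle truly is plain $\GI$). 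Restricting the resulting partition to the vertex set $\ma\subseteq V(G_{\vec h})$ then yields $\ap(P)$ by Lemma~\ref{lem:connection-general}(i).

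Finally, I would apply Proposition~\ref{prop:anr-possibility} to $\ap(P)$ to decide whether $P$ is ANR-possible. When $\decspace=\listset k$, it suffices to count the singletons of $\ap(P)$ and check that there are at least $k$; when $\decspace=\committee k$, I need to decide whether some union of blocks of $\ap(P)$ has size exactly $k$. The latter is a subset-sum instance on at most $m$ positive integers whose total is $m$, so a standard dynamic programming algorithm with state $(\text{block index},\text{target})\in[0,m]^2$ solves it in $O(m^2)$ time. All three stages together produce a polynomial-time Turing reduction to $\GI$.

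The only nontrivial step is the middle one, and the potential obstacle is making sure that the reduction to plain $\GI$ (as opposed to a colored or labeled variant) is really polynomial; this is resolved by the standard gadget technique for simulating vertex colors inside unlabeled $\GI$, together with the $\GI$-completeness of graph automorphism partition. With those in hand, the claim $\anrposs[\scsetting]\reducesto\GI$ follows immediately.
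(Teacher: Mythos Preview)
Your proposal is correct and follows essentially the same route as the paper's proof: build $G_{\hist(P)}$ via Definition~\ref{dfn:histogram->graph}, compute its automorphism partition with polynomially many $\GI$ queries, restrict to $\ma$ using Lemma~\ref{lem:connection-general}(i), and then test the partition. The only cosmetic difference is that you invoke Proposition~\ref{prop:anr-possibility} directly (with an explicit $O(m^2)$ subset-sum DP for the $\committee k$ case), whereas the paper's Algorithm~\ref{alg:cerificate} phrases the last step as a loop over $d\in\decspace$ checking Proposition~\ref{prop:verification}; your version is in fact tidier, since it is transparently polynomial even when $|\decspace|$ is superpolynomial (one tiny nit: $G_{\vec h}$ has $O(nm^2)$ vertices rather than $O(m^2+n)$ because of the $\calV$ paths, but this does not affect the argument).
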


\begin{proof} 
The theorem is proved by the polynomial-time Turing reduction in Algorithm~\ref{alg:cerificate}.
\begin{algorithm}[htp]
\caption{{\bf ($\anrposs$)} Given an $\commonpref m$-profile $P$ and any $\decspace\in \commonpref m$, output YES if $P$ is an ANR-possible profile; otherwise output NO.\label{alg:cerificate}}
\begin{algorithmic}[1] 
\STATE  Convert $P$ to $G_{\hist(P)}$ by Definition~\ref{dfn:histogram->graph}.
\STATE Compute $\ap(G_{\hist(P)})$ by polynomially many calls to $\GI$. Let $\ap(G_{\hist(P)})|_\ma$ denote its restriction to $\ma$.
\FOR {every $d\in\decspace$}
\STATE if $d$ and  $\ap(G_{\hist(P)})|_\ma$ satisfy conditions in Proposition~\ref{prop:verification}, then {\bf return} YES.
\ENDFOR
\STATE {\bf return} NO. 
\end{algorithmic}
\end{algorithm}

\ 
\end{proof}

Therefore, in the rest of the section we focus on proving $\GI$-hardness results.


\begin{thm}[{\bf\boldmath $\GI$-C: $\anrposs[(\committee \iota,\committee \kappa)]$}{}]
\label{thm:GI-hardness-MM}
For any  $\iota$ and $\kappa$ such that for all $m\ge 3$, $\iota(m)\ge 2$   and $\kappa(m) \le m-1$,  $\anrposs[(\committee \iota,\committee \kappa)]$ is GI-complete. 
\end{thm}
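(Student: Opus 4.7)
The plan is to prove $\GI$-hardness via a polynomial-time Turing reduction from $\anrposs[(\committee 2,\committee 1)]$, which is $\GI$-hard by Theorem~\ref{thm:GIC-C2C1}. The $\GI$-easiness direction is immediate from Theorem~\ref{thm:GI-easiness}.

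First, I would take any simple $\committee 2$-profile $P_0$ over $[N]$ (whose histogram corresponds to an undirected graph $H$ via Definition~\ref{dfn:M2-hist-graph}) and build a $\committee{\iota(m)}$-profile $P'$ over $[m]$ alternatives, for a polynomially chosen $m = m(N)$ lying in the $\GI$-hard regime (where $2 \le \iota(m) \le m-2$ and $\kappa(m)\le m-1$). I would partition the alternative set as $\ma = [N] \cup Z \cup W$, where $Z$ is a padding block of size $\iota(m) - 2$ and $W$ collects the remaining alternatives. For each edge $\{a,b\}$ of $H$, I would add the vote $\{a,b\} \cup Z$ to $P'$. I would then add a small number of auxiliary votes supported on $Z \cup W$ (for instance, a distinguishing vote that gives $Z$-vertices a multiplicity different from every possible $[N]$-vertex degree) so as to guarantee: (i) no automorphism of $P'$ mixes $[N]$, $Z$, and $W$ across these blocks; (ii) the restriction of $\ap(P')$ to $[N]$ coincides with $\ap(P_0)$; and (iii) the classes of $\ap(P')$ inside $Z$ and $W$ have prescribed sizes.

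Next, I would apply Proposition~\ref{prop:anr-possibility} to both profiles. For $P_0$ with $\decspace = \committee 1 = \listset 1$, ANR-possibility is equivalent to the existence of a singleton in $\ap(P_0)$; for $P'$ with $\decspace = \committee{\kappa(m)}$, ANR-possibility is equivalent to the existence of a union of classes of $\ap(P')$ of total size exactly $\kappa(m)$. By engineering $W$ (for example, as a single class of size $\kappa(m)-1$ together with additional ``blocker'' alternatives that rule out any other sum reaching $\kappa(m)$), the second condition can be reduced to ``$\ap(P_0)$ contains a singleton in $[N]$,'' i.e., $P_0$ is ANR-possible. Running this construction would then complete the reduction.

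The main obstacle is the design of the gadget $W$: it must rule out every unintended subset sum reaching $\kappa(m)$ uniformly in the adversarially chosen sizes of the $[N]$-classes of $\ap(P_0)$, and it must do so for every hypothesis-compliant $(\iota, \kappa)$. I expect this will require a case analysis based on the relative magnitudes of $\iota(m)$, $\kappa(m)$, and $m$ (in particular separating $\iota(m)=2$, $\iota(m)=3$, and $\iota(m)\ge 4$, and treating $\kappa(m)=1$ separately from $\kappa(m)\ge 2$), and possibly a number-theoretic choice of sizes for several $W$-classes so that no unintended combination of class sizes sums to $\kappa(m)$.
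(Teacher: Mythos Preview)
Your reduction starts from $\anrposs[(\committee 2,\committee 1)]$ and tries to preserve $\ap(P_0)$ on $[N]$ verbatim, then engineer $W$ so that the subset-sum criterion of Proposition~\ref{prop:anr-possibility}(ii) collapses to ``$\ap(P_0)$ has a singleton.'' The obstacle you flag is real, and I do not see how to close it along this route: the orbit sizes on $[N]$ are adversarial, and when $k=\min(\kappa(m),m-\kappa(m))$ is a small constant (say $k=3$), a single $[N]$-orbit of size $3$ already hits $k$ and produces a false YES even though $\ap(P_0)$ has no singleton. No choice of $W$-class sizes can block this, because the offending subset lives entirely inside $[N]$; your number-theoretic tuning of $W$ only filters sums that \emph{involve} $W$. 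So the gap is not a routine case analysis but the crux of the argument.

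The paper avoids this by reducing directly from $\GI$ and \emph{controlling} the orbit sizes rather than inheriting them. Given $(G_1,G_2)$ on $m$ vertices, it chooses $m^*=(6m+1)^2$, sets $k=\min(\kappa(m^*),m^*-\kappa(m^*))$, and for each $i\in[m]$ builds a graph from one tagged copy of $G_1$ together with $k$ (Case~1: $k\le 6m$) or $2$ (Case~2: $k>6m$) tagged copies of $G_2$, plus one or two auxiliary cycle/clique components. If some isomorphism $G_1\to G_2$ sends $1\mapsto i$, every non-auxiliary orbit has size divisible by $k{+}1$ (resp.\ by $3$), and the auxiliary sizes are chosen so that no union of orbits has size $k$; if not, the tagged vertices in the $G_2$-copies form an orbit of size exactly $k$ (resp.\ combine with an auxiliary class of size $k{-}1$ or $k{-}2$ to reach $k$). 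Your $Z$-padding idea for passing from $\committee 2$ to general $\committee\iota$ is close to what the paper does at the very end, but the decisive step---forcing divisibility of all ``uncontrolled'' orbit sizes---is what your plan lacks. If you want to salvage your route, you would have to blow up each alternative of $P_0$ into a block of size exceeding $k$ that is forced to stay in one orbit; at that point you are essentially reproducing the paper's trick.
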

\begin{proof}Without loss of generality, we can assume that $\kappa(m) = m-\text{poly}(m)$. This is because, otherwise, we consider the problem of $m-\kappa(m)$, and according to Proposition~\ref{prop:verification}, a $\commonpref m$-profile $P$ is ANR-possible under $\decspace =\committee k$ if and only if it is ANR-possible under $\decspace =\committee {m-k}$.   We first present the hardness of $\anrposs[(\committee 2, \committee \kappa)]$ by a reduction from $\GI$ to illustrate the idea, then extend the proof to other cases.

\myparagraph{$\GI\reducesto\anrposs[(\committee 2, \committee \kappa)]$.} Given a $\GI$ instance $(G_1,G_2)$ over vertices $[m]$, let $m^* = (6m+1)^2$, $k = \min(\kappa(m^*),m^*- \kappa(m^*))$. We construct $m-1$ instances of $\anrposs[(\committee 2, \committee \kappa)]$ to solve $\GI$ in two cases, depending on the value of $k$.  

{\bf\boldmath Case 1: $k\le 6m$.} For every $i\in [2,m]$, we construct a graph $G_i^*$ that contains the following vertices and edges, illustrated in Figure~\ref{fig:app:GI-hardness-committee-dec-small}.
\begin{itemize}
\item A subgraph $G_1'$ that is obtained from $G_1$ by adding a cycle of $m$ new vertices, and all such vertices are connected to vertex $1$ in $G_1'$.
\item $k$ identical subgraphs $G_2^{i,1}, \ldots, G_2^{i,k}$, each of which is obtained from $G_2$ by adding a cycle of $m$ new vertices, and all such vertices are connected to vertex $i$.
\item A cycle of the remaining $m^* - 2m(k+1)$  vertices (denoted by $\calX$).
\end{itemize}
\begin{figure}[htp]
\centering 
\includegraphics[width = .9\textwidth]{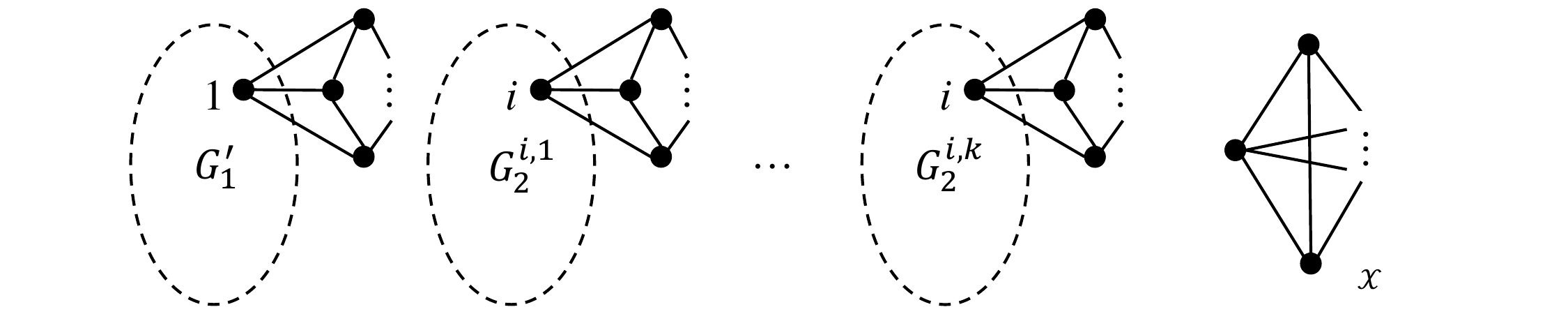}
\caption{$G_i^*$ for $k\le 6m$. \label{fig:app:GI-hardness-committee-dec-small}}
\end{figure}
Then, we convert $G_i^*$ to an $\committee 2$-profile, denoted by $P_i^*$, by Definition~\ref{dfn:M2-hist-graph}, and run $\anrposs[(\committee 2, \committee \kappa)]$ on $P_i^*$. If the answer to any of the $m-1$ instances is NO, then we output YES to the $\GI$ instance; otherwise, i.e., the answers to all  $m-1$ instances are YES, we output NO to the $\GI$ instance.

Clearly, the reduction runs in polynomial time. We use $j@G_1',j@G_2^{i,1},\ldots, j@G_2^{i,k}$ to denote the vertex labeled $j$ in the subgraph obtained from $G_1',G_2^{i,1},\ldots, G_2^{i,k}$,  respectively.  To verify the correctness of the reduction, we make the following observations on $P_i^*$. 

\begin{itemize}
\item If there is an isomorphism from $G_1$ to $G_2$ that maps $1$ to $i$, then $\{1@G_1', i@G_2^{i,1},\ldots, i@G_2^{i,k} \}$ is a set in $\ap(P_i^*)$, and the size of every set in $\ap(P_i^*)$, possibly except $\calX$, is divisible by $k+1$. Because $k\le 6m$, we have $|\calX| = m^* - 2m(k+1)\ge (6m+1)^2-2m(6m+1)>k$, which means that every set in $\ap(P^*)$ contains more than $k$ vertices, hence the $\anrposs$ instance $P_i^*$ is a NO instance according to Proposition~\ref{prop:anr-possibility}.
 
\item If there is no isomorphism from $G_1$ to $G_2$ that maps $1$ to $i$, then $\{i@G_2^{i,1},\ldots, i@G_2^{i,k} \}$ is a set (of size $k$) in $\ap(P_i^*)$. Therefore,  the $\anrposs$ instance $P_i^*$ is a YES instance according to Proposition~\ref{prop:anr-possibility}.
\end{itemize}

{\bf \boldmath Case 2: $k> 6m$.} For every $i\in [2,m]$, we construct a graph $G_i^*$ that contains the following vertices and edges, illustrated in Figure~\ref{fig:app:GI-hardness-committee-dec}.
\begin{itemize}
\item A subgraph $G_1'$ that is obtained from $G_1$ by adding a cycle of $m$ new vertices, and all such vertices are connected to vertex $1$ in $G_1'$.
\item Two identical subgraphs $G_2^{i,1}$ and $G_2^{i,2}$, each of which is obtained from $G_2$ by adding a cycle of $m$ new vertices, and all such vertices are connected to vertex $i$ in $G_2^{i,1}$ and $G_2^{i,2}$, respectively.
\item If $3\nmid m^*-6m-k+1$, then $G_i^*$ contains a cycle of $k-1$ new vertices (denoted by $\calX$) and a complete graph of $m^*-6m-k+1$ new vertices (denoted by $\calY$); otherwise $G_i^*$ contains a cycle of $k-2$ new vertices (denoted by $\calX$) and a complete graph of $m^*-6m-k+2$ new vertices (denoted by $\calY$).
\end{itemize}
\begin{figure}[htp]
\centering 
\includegraphics[width = .9\textwidth]{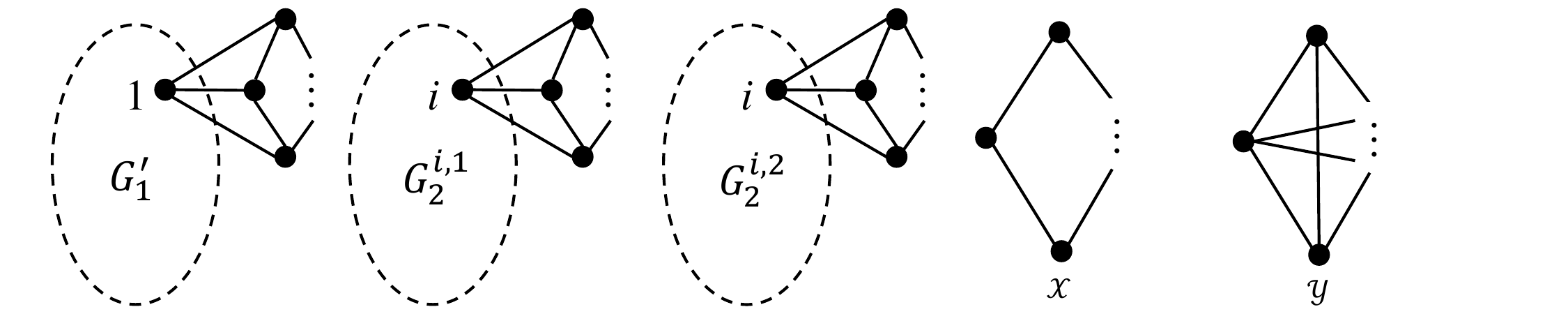}
\caption{$G_i^*$ for $k> 6m$. \label{fig:app:GI-hardness-committee-dec}}
\end{figure}
Then, we convert $G_i^*$ to an $\committee 2$ profile, denoted by $P_i^*$, by Definition~\ref{dfn:M2-hist-graph}, and run $\anrposs[(\committee 2, \committee \kappa)]$ on $P_i^*$. If the answer to any of the $m-1$ instances is NO, then we output YES to the $\GI$ instance; otherwise we output NO to the $\GI$ instance.

Clearly the reduction runs in polynomial time. To verify its correctness, we make the following observations about $P_i^*$. 

\begin{itemize}
\item If there is an isomorphism from $G_1$ to $G_2$ that maps $1$ to $i$, then $\calX$ and $\calY$ are two sets in $\ap(P^*)$, and   the size of every set in $\ap(P^*)$ is divisible by $3$.  Note that the construction guarantees that $3\nmid (k-|\calX|)$, $3\nmid (k-|\calY|)$, and we $k\le m^*/2$, which means that $|\calX|  + |\calY|>k$. Therefore, no subvector of $\apv(P^*)$ sums up to $k$, which means that the $\anrposs$ instance is a NO instance according to Proposition~\ref{prop:anr-possibility}.
 
\item If there is an isomorphism from $G_1$ to $G_2$ that maps $1$ to $i$, then $\{1@G_1'\}$ and $\{ i@G_2^{i,1}, i@G_2^{i,2} \}$ are two sets in $\ap(P^*)$, and $\calX$ and $\calY$ are two sets in $\ap(P^*)$. Therefore,  the subvector corresponds to $\calX$ and $\{1@G_1'\}$ or $\{ i@G_2^{i,1}, i@G_2^{i,2} \}$ sums up to $k$, which means that the $\anrposs$ instance is a YES instance according to Proposition~\ref{prop:anr-possibility}.
\end{itemize}

\myparagraph{$\GI\reducesto\anrposs[(\committee \iota, \committee \kappa)]$.} We first prove that, without loss of generality, we can assume that $\iota(m) = m-\text{poly}(m)$. This is because for any  $\committee \ell$ profile $P$,  let $\bar P$ denote the profile obtained from $P$ by taking the complements of all votes. For example, let $m=3$ and $P=(\{1,2\}, \{2,3\})$, then $\bar P = (\{3\}, \{1\})$. It is not hard to see that  $\ap(P) = \ap(\bar P)$. Therefore, to check wether an $\committee \ell$-profile $P$ is ANR-possible, it is equivalent to check the $\committee {m-\ell}$-profile $\bar P$ is ANR-possible.

 Let $m^* = m^{3/\epsilon}$ and let  $k = \min(\kappa(m^*),m^*- \kappa(m^*))$. We construct $m-1$ instances of $\anrposs[(\committee 2, \committee \kappa)]$ to simulate the two cases in the proof of the reduction for $\committee 2$ profiles for sufficiently large $m$.

{\bf\boldmath Case 1: $k\le 6m$.} We partition $[m^*]$ into the following sets.
$${1,\ldots, 2mk},\underbrace{2mk+1,\ldots, m^*}_{\calX}$$ 
Then, define $P^* = P_1\cup P_2$ as follows.
\begin{itemize}
\item Let $\hat G_i$ denote the subgraph of $G_i^*$ (defined in Case 1 of the $(\committee 2,\committee \kappa)$ part) that consists of $G_1'$, $G_2^{i,1},\ldots, G_2^{i,k}$.  For every $\{a,b\}\in \hat G_i$, there are $|\calX|$ votes in $P_1$
$$\{\forall i\le |\calX|, \sigma_{\calX}^i(\{a,b,2mk+1,\ldots, 2mk+\ell-2\})\}$$
\item $P_2$ consists of  
$$\{\forall i\le |\calX|, \sigma_{\calX}^i(\{2mk+1,\ldots, 2mk+\ell\})\}$$
\end{itemize}
$P^*$ is well defined because $2mk+\ell<m^*$. $P_1$ guarantees that the restriction of $\ap(P^*)$ to $[2mk]$ is the same as $\ap(\hat G_i)$, and $P_2$ guarantees that $\calX$ is a set in $\ap(P^*)$. The remaining proof is the same as that of Case 1 for the proof of the $\anrposs[(\committee 2, \committee \kappa)]$ case.

{\bf\boldmath Case 2: $k> 6m$.} We partition three sets, $[m^*]$ into $[6m]\cup\calX\cup\calY$, as follows.
\begin{align*}
\text{if $3\nmid m^*-6m-k+1$ , }&\underbrace{1,\ldots, 6m}_{[6m]},\underbrace{6m+1,\ldots, 6m+k-1}_{\calX}, \underbrace{6mk+k,\ldots, m^*}_{\calY}\\
\text{if $3\mid m^*-6m-k+1$ , }&\underbrace{1,\ldots, 6m}_{[6m]},\underbrace{6m+1,\ldots, 6m+k-2}_{\calX}, \underbrace{6m+k-1,\ldots, m^*}_{\calY}
\end{align*}  
Then, define $P^* = P_1\cup P_2$ as follows.
\begin{itemize}
\item Let $\tilde G_i$ denote the subgraph of $G_i^*$ (defined in Case 2 of the $(\committee 2,\committee \kappa)$ part) that consists of $G_1'$, $G_2^{i,1},G_2^{i,2}$.  For every edge $\{a,b\}\in \tilde G_i$, there are $|\calX|\cdot|\calY|$ votes in $P_1$
$$\{\sigma_{\calX}^i\circ \sigma_{\calY}^j(\{a,b,6m+1,\ldots, 6m+\ell-2\}): \forall i\le |\calX|, \forall j\le|\calY| \}$$
\item $P_2$ consists of $|P_1|+1$  copies of 
$$\{ \sigma_{\calX}^i\circ \sigma_{\calY}^j(\{6m+1,\ldots, 6m+\ell-1,m^*\}): \forall i\le |\calX|, \forall j\le|\calY|\}$$
\end{itemize}
$P_1$ guarantees that the restriction of $\ap(P^*)$ to $[2mk]$ is the same as $\ap(\tilde G_i)$, and $P_2$ guarantees that $\calX$ and $\calY$ are two elements in $\ap(P^*)$. The remaining proof is the same as that of Case 2 for the proof of the $\anrposs[(\committee 2, \committee \kappa)]$ case.
\end{proof}

Theorem~\ref{thm:const-unranked} motivates us to study list preferences in which a non-constant number of alternatives are unranked. We will primarily focus on such settings where the number of unranked alternatives is $\Omega(m^\epsilon)$ for some $\epsilon>0$.

\begin{thm}[{\bf\boldmath $\GI$-C: $\anrposs[(\committee \iota,\listset \kappa)]$}{}]
\label{thm:GI-hardness-ML}
For any  $\iota$ and $\kappa$ such that for all $m\ge 3$, $\iota(m)\ge 2$ and $\kappa(m) = m-\Omega(m^\epsilon)$ for some constant $\epsilon>0$,  $\anrposs[(\committee \iota,\listset \kappa)]$ is GI-complete.
\end{thm}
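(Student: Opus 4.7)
The plan is to follow the template of Theorem~\ref{thm:GI-hardness-MM} but to adapt the combinatorial threshold to the singleton-counting condition of Proposition~\ref{prop:anr-possibility}(i). The $\GI$-easy direction is immediate from Theorem~\ref{thm:GI-easiness}. For $\GI$-hardness, I would give a Turing reduction: given a $\GI$ instance $(G_1,G_2)$ on $[m]$, first choose $m^{*}=\mathrm{poly}(m)$ so that $N:=m^{*}-\kappa(m^{*})=\Omega((m^{*})^{\epsilon})$ exceeds $4m+\iota(m^{*})+C$ for a suitable constant $C$ (WLOG $\iota(m^{*})\le m^{*}/2$ by the complementation trick used in Theorem~\ref{thm:GI-hardness-MM}), then build profiles $P_{1}^{*},\ldots,P_{m}^{*}$ over $[m^{*}]$, and output YES for $\GI$ iff some $P_{i}^{*}$ is \emph{not} ANR-possible.

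Each $P_{i}^{*}$ is described through its underlying ``vote graph'' $H_{i}$ on the alternatives, using the cyclic-shift encoding of Case~1 of Theorem~\ref{thm:GI-hardness-MM} to convert each graph edge $\{a,b\}$ of $H_{i}$ into committee votes $\{a,b\}\cup X$ where $X$ is a fixed padding set of size $\iota(m^{*})-2$. The graph $H_{i}$ has three parts: (a)~the disjoint union $G_{1}^{*}\sqcup G_{2}^{i}$ using $4m+2$ alternatives, where $G_{1}^{*}$ (resp.\ $G_{2}^{i}$) is $G_{1}$ (resp.\ $G_{2}$) with a length-$(m+1)$ cycle attached at vertex~$1$ (resp.\ $i$); (b)~an asymmetric ``path+triangle+pendant'' gadget $p - c_{N} - \cdots - c_{0} - t_{1} - t_{2} - t_{3}$ (with $\{t_{1},t_{2},t_{3}\}$ a triangle and $p$ a pendant) on $\kappa(m^{*})+1$ alternatives, whose automorphism partition contains exactly $\kappa(m^{*})-1$ singletons and one $2$-orbit $\{t_{2},t_{3}\}$; and (c)~a long cycle on the remaining $N-\iota(m^{*})-4m-O(1)$ alternatives, forming one large orbit. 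By Lemma~\ref{lem:connection-general}(i) combined with the standard analysis of the cyclic-shift trick, $\ap(P_{i}^{*})|_{\ma}$ equals the graph automorphism partition of $H_{i}$.

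For correctness, the classical disjoint-union argument gives $G_{1}^{*}\cong G_{2}^{i}$ iff some isomorphism $G_{1}\to G_{2}$ sends $1$ to $i$. If such an isomorphism exists, the component-swap is an automorphism of $G_{1}^{*}\sqcup G_{2}^{i}$, so every vertex in part~(a) has orbit size $\ge 2$; the only singletons in $\ap(P_{i}^{*})$ are then the $\kappa(m^{*})-1$ coming from part~(b), which is strictly less than $\kappa(m^{*})$. If no such isomorphism exists, all automorphisms preserve each component, and at least vertex~$1$ of $G_{1}^{*}$ and vertex~$i$ of $G_{2}^{i}$ are singletons (since each is uniquely distinguished as the sole vertex of its component attached to the appended cycle), giving at least $\kappa(m^{*})-1+2\ge\kappa(m^{*})$ singletons. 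Proposition~\ref{prop:anr-possibility}(i) then says that $P_{i}^{*}$ is ANR-possible iff no isomorphism $G_{1}\to G_{2}$ sends $1$ to $i$, which closes the reduction.

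The hard part will be the bookkeeping for part~(b): we need a gadget on precisely the right number of vertices whose graph automorphism partition has precisely $\kappa(m^{*})-1$ singletons, while also making the sizes of parts (a), (b), (c) and the padding set $X$ add up to exactly $m^{*}$ for every $i$. The ``path+triangle+pendant'' gadget above works verbatim whenever $\kappa(m^{*})\ge 4$ and $N$ is large enough, but small variants (e.g., a triangle with a pendant edge, or a short path with one asymmetric decoration) are needed for the boundary cases $\kappa(m^{*})\in\{1,2,3\}$ and for $\iota(m^{*})$ close to $m^{*}/2$ where the slack between the various sizes becomes tight. These corner cases are routine but must be enumerated carefully in the formal proof.
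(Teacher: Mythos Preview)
Your approach differs from the paper's in two main ways. First, you reduce directly from $\GI$ via a Turing reduction (one instance $P_i^*$ per target vertex $i$), whereas the paper gives a single many-one reduction from $\anrposs[(\committee 2,\committee 1)]$, which is already known to be $\GI$-complete by Theorem~\ref{thm:GIC-C2C1}. Second---and this is the crucial technical point---you manufacture the required $\kappa(m^*)-1$ singletons via the \emph{graph structure} of a path-plus-triangle gadget, whereas the paper manufactures its $\kappa(m^*)$ singletons via \emph{vote multiplicities}: it adds an auxiliary block $P_2$ of votes (built using Proposition~\ref{prop:existence-different-votes}) on a designated set $\calX$ of size $\kappa(m^*)$ so that each alternative in $\calX$ receives a distinct total approval count, making them pairwise distinguishable under any automorphism regardless of how they appear elsewhere.

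This second difference exposes a genuine gap in your proposal. Your bookkeeping asks that $N=m^*-\kappa(m^*)$ exceed $4m+\iota(m^*)+C$, so that the padding set $X$ of size $\iota(m^*)-2$ (needed to inflate $2$-edges to $\iota(m^*)$-committees) fits inside part~(c). But this inequality can fail for \emph{every} $m^*$: take $\iota(m)=\lfloor m/2\rfloor$ and $\kappa(m)=m-\lceil m^{1/2}\rceil$ (so $\epsilon=1/2$); then $N(m^*)\approx(m^*)^{1/2}$ while $\iota(m^*)\approx m^*/2$, and no polynomial choice of $m^*$ works. Nor can you rescue this by moving the fixed padding $X$ into gadget~(b): since $X$ appears identically in every vote, any permutation of $X$ is an automorphism of the profile, collapsing $X$ into a single orbit and destroying the singleton structure you need from~(b). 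The paper's multiplicity trick is exactly what sidesteps this: the padding lives inside the large block $\calX\cup\calY$ (of size $m^*-m>\iota(m^*)$), and the separate block $P_2$ forces the $\calX$ alternatives to have pairwise distinct approval counts---hence to be singletons---independently of their uniform role in the padding. Your invocation of Lemma~\ref{lem:connection-general}(i) is also misplaced: that lemma concerns the specific graph $G_{\vec h}$ of Definition~\ref{dfn:histogram->graph}, not your $H_i$, so the claim $\ap(P_i^*)=\ap(H_i)$ would need its own argument along the lines of the paper's Proposition~\ref{prop:properties-P-star}.
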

\begin{proof} As in the proof of Theorem~\ref{thm:GI-hardness-MM}, without loss of generality, we can assume that $\iota(m) = m-\text{poly}(m)$. 
We prove the hardness by presenting a Turing reduction from  $\anrposs[(\committee 2,\committee 1)]$, which is GI-complete (Theorem~\ref{thm:GIC-C2C1}). For any $\anrposs[(\committee 2,\listset 1)]$ instance $P$ over alterantives $[m]$, we choose the smallest $m^*$ such that $m^* - \iota(m)> m$ and $m^* - \kappa(m)> m$. It follows that $m^* = O(m^{1/\epsilon})$, which is polynomial in $m$. Let $\ell \triangleq \iota(m)$ and $k \triangleq \kappa(m)$.  For convenience, we partition $[m^*]$ into the following three sets $\ma\cup\calX\cup\calY$:
$$\underbrace{1,\ldots, m}_{\ma},\underbrace{m+1,\ldots, m+k}_{\calX},\underbrace{m+k+1, \ldots, m^*}_{\calY}$$ 
Let $\sigma_{\calX}$ and $\sigma_{\calY}$ denote arbitrary cyclic permutations over $\calX$ and $\calY$, respectively. We construct a profile $P^*=P_1\cup P_2
\cup P_3$, defined as follows.
\begin{itemize}
\item For every $\{a,b\}\in P$, there are $|\calY|$ votes in $P_1$
$$\{\forall i\le |\calY|, \sigma_{\calY}^i(\{a,b,m+1,\ldots, m+\ell-2\})\}$$
\item $P_2$ consists of votes that only approve subsets of $\calX\cup\calY$ and guarantees that alternatives in $\calX$ receive  different number of votes in $P_1\cup P_2$, while alternatives in $\calY$ receive the same number of votes in $P_1\cup P_2$. This is based on the construction guaranteed by the following proposition.
\begin{prop}
\label{prop:existence-different-votes}
For any $m'\in\mathbb N$ and $\ell'<m'$, there exists a profile $P'$ of no more than $(m')^2$ such that all alternatives receive different votes.
\end{prop}
\begin{proof}
The prove is done by induction on $\ell'$. Clearly, when $\ell'=1$ the proposition holds for all $m'$. Suppose the proposition holds for $\ell'=t$. When $\ell'=t+1$, we let all votes contain alternative $1$, and start with $\lceil \frac{m'-1}{m'-t-1}\rceil$ votes such that each alternative, except $1$, is not approved in at least one vote. We then rename the remaining alternatives in the decreasing order of votes, and apply the induction hypothesis on alternatives $\{2,\ldots,m'\}$ for $\ell'=t$. Let $P''$ denote the votes. We then add $1$ to all votes in $P''$ and merge the result to $P'$. It is not hard to verify that all alternatives receive different numbers of votes in $P'$, and $|P'|=\lceil \frac{m'-1}{m'-t-1}\rceil+|P''|<m'-1+(m'-1)^2<(m')^2$. This completes the proof.
\end{proof}
Let $P'$ denote the profile guaranteed by Proposition~\ref{prop:existence-different-votes} on $\{m+1,\ldots,m^*\}$ and $\ell$, and w.l.o.g.~let alternative $m+1$ receives the most votes, $m+2$ receives the second most votes, etc. Then, we let
$$P_2\triangleq \bigcup\nolimits_{i\le |\calY|} \sigma_{\calY}^i(P')$$
It follows that alternatives in $\calX$ receive different numbers of votes, alternatives in $\calY$ receive the same number of votes, and the former are larger than the latter.
\item $P_3$ is used to guarantee that alternatives in $\calY$ receive more votes than any alternative in $[m]$. It consists of the following $n\cdot|\calX|\cdot|\calY|$ votes 
$$n\times \left \{\forall i\le   |\calY|, \forall j\le   |\calX|,   \sigma_{\calY}^i\circ \sigma_{\calX}^j(\{m+1,\ldots,m+\ell-1,m^*\})\right
\}$$
\end{itemize}
The construction guarantees that $\ap(P^*)$ consists of every alternative in $\calX$ is a singleton, $\calY$, and sets in $\ap(P)$.
\begin{prop}
\label{prop:properties-P-star}
$\ap(P^*)= \calX\cup \{\calY\}\cup \ap(P)$. 
\end{prop}
\begin{proof}
We have the following observations about $P^*$. 
\begin{itemize}
\item[] (i) Alternatives in $\calX$ receive different numbers of votes, which are higher than that of any other alternative.
\item[] (ii) Alternatives in $\calY$ are ``similar'', because $\sigma^i$ (for all $i\le |\calY|$) are applied in  $P_1$, $P_2$, and $P_3$. 
\item[] (iii) Only $P_1$ involves $\ma$, and $P_1$ is ``similar'' to $P$ after removing all alternatives that are not in $\ma$. 
\end{itemize}
(i) implies that alternatives in $\calX$ are singletons in $\ap(P^*)$. (ii) implies that alternatives in $\calY$ are in the same set---more formally, this holds because $\sigma_{\calY}(\hist(P^*)) = \hist(P^*)$. Because every alternative in $\calY$ receives more votes than any alternative in $\ma$, no stabilizer $\sigma^*$ of $\hist(P^*)$ maps an alternative in $\ma$ to an alternative in $\calY$. This means that $\calY\in \ap(P^*)$. Following a similar reasoning, every stabilizer $\sigma^*$ of $\ap(P^*)$ maps $\ma$ to $\ma$, and its restriction to $\ma$ is a stabilizer of $\hist(P)$. On the other hand, for any stabilizer $\sigma$ of $\hist(P)$, we extend it to a permutation $\sigma^*$ over $[m^*]$ by letting (1) for every $a\in\ma$, $\sigma^*(a)=\sigma(a)$; (2) for every $x\in\calX$, $\sigma^*(x) = x$; and (3) for every $y\in\calY$, $\sigma^*(y) = \sigma_{\calY}(y)$. It follows that $\sigma^*$ is a stabilizer of $\hist(P^*)$. This completes the proof.
\end{proof} 
According to Proposition~\ref{prop:properties-P-star}, $\ap(P^*)$ has at least $k$ singletons ($\calX$ plus singletons in $\ma$) if and only if $\ap(P)$ has at least one singleton. Therefore, according to Proposition~\ref{prop:anr-possibility}, $P^*$ is a YES instance of $\anrposs[(\committee \iota,\listset \kappa)]$ if and only $P$ is a YES instance of of $\anrposs[(\committee 2,\listset 1)]$.  Also, notice that $|P^*| =\text{poly}(|P|)$. This proves $\anrposs[(\committee 2,\listset 1)]\reducesto\anrposs[(\committee \iota,\listset \kappa)]$.
\end{proof}


\begin{thm}[{\bf\boldmath $\GI$-C: $\anrposs[(\listset \iota,\committee \kappa)]$}{}]
\label{thm:GI-hardness-LM}
For any  $\iota$ and $\kappa$ such that for all $m\ge 3$, $2\le \iota(m)\le m-\Omega(m^\epsilon)$ for some constant $\epsilon>0$,   and $\kappa(m) \le m-1$,  $\anrposs[(\listset \iota,\committee \kappa)]$ is GI-complete. 
\end{thm}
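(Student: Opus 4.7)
The plan is to combine the construction strategy of Theorem~\ref{thm:GI-hardness-MM} with the list-preference encoding ideas suggested by Theorem~\ref{thm:GI-hardness-ML}. GI-easiness is immediate from Theorem~\ref{thm:GI-easiness}. For GI-hardness, I will give a polynomial-time Turing reduction from the GI problem that, given graphs $G_1, G_2$ over $[m]$, first constructs the family of graphs $G_i^*$ used in the proof of Theorem~\ref{thm:GI-hardness-MM} (over a polynomially larger vertex set of size $m^* = \Theta(m^{1/\epsilon})$, chosen so that $m^* - \iota(m^*) \ge m + m^{\Omega(1)}$ and $m^* - \kappa(m^*) \ge 1$). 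Each $G_i^*$ is then converted into a $\listset{\iota(m^*)}$-profile $P_i^*$ whose automorphism partition, restricted to the original vertex set, coincides with $\ap(G_i^*)$, so that the subset-sum condition of Proposition~\ref{prop:anr-possibility}(ii) can be transferred verbatim from Theorem~\ref{thm:GI-hardness-MM}.

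The key step is the edge-to-list encoding. Partition $[m^*]$ into the original alternatives $\ma = [m]$, a ``distinguishing'' set $\calX$, and a ``padding'' set $\calY$. For every edge $\{a,b\}$ in $G_i^*$, include the $2|\calY|$ list votes
\[
[a \succ b \succ \tau^j(y_1) \succ \cdots \succ \tau^j(y_{\iota-2})]\quad\text{and}\quad [b \succ a \succ \tau^j(y_1) \succ \cdots \succ \tau^j(y_{\iota-2})]
\]
for $j = 0, \ldots, |\calY|-1$, where $\tau$ is a cyclic permutation of $\calY$ and $y_1, \ldots, y_{\iota-2}$ are fixed representatives in $\calY$. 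Taking both orderings encodes the unordered edge $\{a,b\}$, and cycling the padding forces $\tau$ to be a stabilizer of $\hist(P_i^*)$, making $\calY$ a single block in $\ap(P_i^*)$. I then add auxiliary votes analogous to $P_2, P_3$ from Theorem~\ref{thm:GI-hardness-ML} (using Proposition~\ref{prop:existence-different-votes}) that give each alternative in $\calX$ a unique positional fingerprint, and that dominate the positional scores of every alternative in $\ma$, so that no automorphism of $\hist(P_i^*)$ can mix $\ma$, $\calX$, and $\calY$. The resulting analysis, mirroring Proposition~\ref{prop:properties-P-star}, will give $\ap(P_i^*) = \{\text{singletons from }\calX\} \cup \{\calY\} \cup \ap(G_i^*)$. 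Finally, I tune $|\calX|$ and $|\calY|$ exactly as in Case 1 and Case 2 of Theorem~\ref{thm:GI-hardness-MM} (with the complement trick $\kappa \leftrightarrow m^* - \kappa$ when convenient) so that some union of blocks in $\ap(P_i^*)$ has size exactly $\kappa(m^*)$ if and only if there is an isomorphism from $G_1$ to $G_2$ sending $1$ to $i$.

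The main obstacle I expect is the first step: certifying that the list encoding does not introduce spurious automorphisms beyond those inherited from $G_i^*$. In particular, I must rule out (i) permutations that swap the two elements of an encoded edge without being induced by a graph automorphism (ruled out because a single swap $a \leftrightarrow b$ would alter other edges incident to $a$ or $b$), and (ii) mappings from $\ma$ into $\calX \cup \calY$ (ruled out by positional-score fingerprinting). The second obstacle is arithmetic: choosing $|\calX|, |\calY|$ so that the subset-sum condition for a $\kappa(m^*)$-sized union is satisfied exactly along the ``singleton-in-graph'' pathway, which will require a case split depending on whether $\kappa(m^*) \le 6m$ or $\kappa(m^*) > 6m$, parallel to the two cases of Theorem~\ref{thm:GI-hardness-MM}.
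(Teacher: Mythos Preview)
Your plan is the paper's: adapt the reduction of Theorem~\ref{thm:GI-hardness-MM} to list preferences by replacing each committee vote with list votes. The paper's own proof is a one-sentence remark that one simply ``ranks the alternatives alphabetically''; your two-orderings-plus-cycled-padding encoding is a more explicit way to guarantee that every automorphism of $G_i^*$ lifts to an automorphism of the list histogram, and your concern about spurious automorphisms is exactly the point that deserves attention when passing from sets to lists.

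One bookkeeping issue to straighten out: you write the partition as $[m^*]=\ma\cup\calX\cup\calY$ with $\ma=[m]$, but the graphs $G_i^*$ you invoke from Theorem~\ref{thm:GI-hardness-MM} already occupy all of $[m^*]$ and already carry the block-size arithmetic needed for the $\committee\kappa$ subset-sum test. So the padding must sit \emph{outside} $V(G_i^*)$ in a larger ambient set, and the separate $\calX$-singleton gadget you import from Theorem~\ref{thm:GI-hardness-ML} is superfluous here (Proposition~\ref{prop:anr-possibility}(ii) asks for a union of blocks of size exactly $k$, not for $k$ singletons). Once you encode $G_i^*$ faithfully over $V(G_i^*)$ and pad with a single extra block whose size is chosen not to interfere with the subset sums, the case analysis of Theorem~\ref{thm:GI-hardness-MM} transfers verbatim.
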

\begin{proof}
The construction is similar to the proof of the Theorem~\ref{thm:GI-hardness-MM}. The only difference is that when defining $P^*$, we rank alternatives in alphabetically. 
\end{proof}

\begin{thm}[{\bf\boldmath $\GI$-C: $\anrposs[(\listset \iota,\listset \kappa)]$}{}]
\label{thm:GI-hardness-LL}
For any  $\iota$ and $\kappa$ such that for all $m\ge 3$, $2\le \iota(m)\le m-\Omega(m^\epsilon)$   and $\kappa(m) = m-\Omega(m^\epsilon)$ for some constant $\epsilon>0$,  $\anrposs[(\listset \iota,\listset \kappa)]$ is GI-complete. 
\end{thm}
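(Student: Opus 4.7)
GI-easiness follows from Theorem~\ref{thm:GI-easiness}, so the task is GI-hardness. My plan is a polynomial-time Turing reduction from $\anrposs[(\committee 2,\committee 1)]$, which is GI-complete by Theorem~\ref{thm:GIC-C2C1}. The reduction combines the three-block profile construction $P_1\cup P_2\cup P_3$ over $\ma\cup \calX\cup\calY$ from the proof of Theorem~\ref{thm:GI-hardness-ML} (which handles $\listset\kappa$ decisions by engineering the singleton count of $\ap(P^*)$) with the alphabetical-ranking adaptation indicated in the proof of Theorem~\ref{thm:GI-hardness-LM} to upgrade committee votes into $\listset\ell$-votes.

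Given an input $\committee 2$-profile $P$ on $[m]$, I pick the smallest $m^*$ satisfying $m^*-\iota(m^*)\ge m$ and $m^*-\kappa(m^*)\ge m$; the hypotheses give $m^*=O(m^{1/\epsilon})$. Set $\ell\triangleq\iota(m^*)$, $k\triangleq\kappa(m^*)$, and partition $[m^*]=\ma\cup\calX\cup\calY$ with $|\ma|=m$, $|\calX|=k-1$, $|\calY|=m^*-m-k+1$; let $\sigma_\calY$ be a cyclic permutation of $\calY$. I construct the $\listset\ell$-profile $P^*=P_1\cup P_2\cup P_3$ as follows. In $P_1$, each committee $\{a,b\}\in P$ is replaced by both ranked extensions $[a\succ b\succ p_1\succ\cdots\succ p_{\ell-2}]$ and $[b\succ a\succ p_1\succ\cdots\succ p_{\ell-2}]$, with padding $(p_1,\ldots,p_{\ell-2})$ drawn from $\calY$ and duplicated under all $|\calY|$ cyclic shifts $\sigma_\calY^i$ so the $\calY$-alternatives appear symmetrically. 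In $P_2$, I place alphabetically-ranked $\listset\ell$-votes confined to $\calX\cup\calY$, built via Proposition~\ref{prop:existence-different-votes}, so that alternatives in $\calX$ receive pairwise distinct score signatures while alternatives in $\calY$ remain pairwise symmetric under $\sigma_\calY$. In $P_3$, I add enough extra alphabetically-ranked $\listset\ell$-votes on $\calX\cup\calY$ so that every alternative in $\calY$ strictly dominates every alternative in $\ma$ in total score.

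The key claim, proved analogously to Proposition~\ref{prop:properties-P-star}, is that $\ap(P^*)=\{\{x\}:x\in\calX\}\cup\{\calY\}\cup\ap(P)$. Granting this, the number of singletons in $\ap(P^*)$ equals $|\calX|$ plus the number of singletons in $\ap(P)$, so $\ap(P^*)$ has at least $k$ singletons iff $\ap(P)$ has at least one singleton, iff by Proposition~\ref{prop:anr-possibility}(i) the original profile $P$ is ANR-possible under $\committee 1$. Applying Proposition~\ref{prop:anr-possibility}(i) to $P^*$ in the setting $(\listset\ell,\listset k)$ then shows that a single call to $\anrposs[(\listset\iota,\listset\kappa)]$ on $P^*$ decides the input instance, completing the Turing reduction in polynomial time.

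The main obstacle is the key claim, and in particular forbidding unintended automorphisms that cross the three blocks or collapse $\ma$-alternatives that $\ap(P)$ distinguished. Including both $[a\succ b\succ\cdots]$ and $[b\succ a\succ\cdots]$ in $P_1$ is essential: without this symmetrization the alphabetical ranking would artificially break the $a\leftrightarrow b$ symmetry of the original committee $\{a,b\}$ at the histogram level, and the restriction to $\ma$ of the stabilizer of $\hist(P^*)$ could fail to coincide with the stabilizer of the $\committee 2$-histogram of $P$. Given the symmetrization, the padding-plus-cyclic-shift trick preserves $\calY$-symmetry; the distinct signatures in $P_2$ force $\calX$-alternatives to be singletons; and the score dominance in $P_3$ prevents any stabilizer from moving an $\ma$-alternative into $\calY$ or $\calX$, so $\calY$ remains a single orbit, $\ma$ is setwise invariant, and the $\ma$-restriction correspondence goes through.
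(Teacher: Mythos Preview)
Your approach is the paper's: take the three-block $\ma\cup\calX\cup\calY$ construction from Theorem~\ref{thm:GI-hardness-ML}, convert the committee votes to $\listset\ell$-votes, and conclude via Proposition~\ref{prop:anr-possibility}(i); the paper's own proof is a two-line pointer to exactly this. Your write-up is in fact more careful than that pointer: the symmetrization (including both $[a\succ b\succ\cdots]$ and $[b\succ a\succ\cdots]$ for each edge) is genuinely needed so that automorphisms of the source $\committee 2$-histogram lift to $\hist(P^*)$---a purely alphabetical ranking, as the paper literally says, would let $\ap(P^*)|_{\ma}$ acquire spurious singletons and break the reduction on NO instances---and your choice $|\calX|=k-1$ repairs an off-by-one in the singleton-count equivalence; the only loose end is to ensure your $m^*$ also forces $|\calY|\ge\ell-2$ so the $\calY$-padding in $P_1$ is well-defined, which is easily arranged while keeping $m^*$ polynomial in $m$.
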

\begin{proof}
The construction is similar to the proof of the Theorem~\ref{thm:GI-hardness-ML}. The only difference is that when defining $P^*$, we rank alternatives in the $k$-committees alphabetically before applying $\sigma_{\calX}$ and/or $\sigma_{\calY}$. 
\end{proof}

\begin{thm}[{\bf\boldmath $\GA$-H and $\GA$-C}{}]
\label{thm:GA-H} 
For any function $\iota$ and $\kappa$ such that there exists $0<\epsilon$ such that for all $m\ge 3$, $2\le \iota(m)\le m-\Omega(m^\epsilon)$ and $\kappa(m) = \Omega(m^\epsilon)$ for some constant $\epsilon>0$, 
$$\GA\reducesto\anrposs[(\committee \iota, \listset \kappa)] \text{, and } \GA\reducesto\anrposs[(\listset \iota, \listset \kappa)] $$
Moreover,  $\anrposs[(\commonpref m, \listset {m})]\reducesto \GA \text{, and }\anrposs[(\commonpref m, \listset {m-1})]\reducesto \GA$.
\end{thm}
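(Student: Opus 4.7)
The theorem consists of two $\GA$-hardness reductions and two $\GA$-easiness reductions. Both parts rest on Lemma~\ref{lem:connection-general}(i) (relating $\ap$ of a histogram to $\ap$ of its graph $G_P$) combined with the singleton-count criterion of Proposition~\ref{prop:anr-possibility}(i).

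For the $\GA$-hardness reductions, given a graph $G$ on $m$ vertices, the plan is to choose $m^* = \Theta(m^{1/\epsilon})$ large enough that $k := \kappa(m^*) \geq m$; this is possible because $\kappa(m) = \Omega(m^\epsilon)$. First convert $G$ to the simple $\committee 2$-profile $P_G$ of Definition~\ref{dfn:M2-hist-graph}, so that $\ap(P_G) = \ap(G)$ by Lemma~\ref{lem:simple-M2}(i). Partition $[m^*] = \ma \cup \calX \cup \calY$ with $|\ma| = m$, $|\calX| = k - m$, and $|\calY| = m^* - k$, then embed $P_G$ into a $\committee{\iota(m^*)}$- or $\listset{\iota(m^*)}$-profile $P^*$ by adapting the cyclic-shift constructions from the proofs of Theorem~\ref{thm:GI-hardness-ML} (committee preferences) and Theorem~\ref{thm:GI-hardness-LL} (list preferences): pad each edge of $G$ with alternatives drawn from $\calX \cup \calY$ and rotate over $\calY$ to preserve its symmetry, and use Proposition~\ref{prop:existence-different-votes} on $\calX$ to give each of its elements a unique approval signature. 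An analog of Proposition~\ref{prop:properties-P-star} should then yield $\ap(P^*) = \ap(G) \cup \calX \cup \{\calY\}$, with every element of $\calX$ a singleton orbit and $\calY$ a single orbit.

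The equivalence then follows by counting singletons. If $\GA(G) = \mathrm{NO}$ then $\ap(G)$ consists of $m$ singletons, so $\ap(P^*)$ has $m + |\calX| = k = \kappa(m^*)$ singletons and $P^*$ is ANR-possible by Proposition~\ref{prop:anr-possibility}(i). If $\GA(G) = \mathrm{YES}$ then any non-trivial automorphism of $G$ displaces at least two vertices, so $\ap(G)$ has at most $m - 2$ singletons and $\ap(P^*)$ has at most $k - 2 < \kappa(m^*)$ singletons, making $P^*$ not ANR-possible. Negating the $\anrposs$ answer yields $\GA(G)$, giving a valid polynomial-time Turing reduction. The main obstacle I expect is verifying the analog of Proposition~\ref{prop:properties-P-star} in the regime where $\kappa(m^*)$ can be merely polynomial in $m^*$ rather than close to $m^*$: the padding choices must not accidentally merge orbits across $\ma$, $\calX$, and $\calY$, nor break the intended symmetry of $\calY$, which requires re-doing the case analysis behind Proposition~\ref{prop:properties-P-star} for both preference types.

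For the $\GA$-easiness direction, given a $\commonpref m$-profile $P$, the reduction outputs the graph $G_P$ of Definition~\ref{dfn:histogram->graph} and returns the negation of $\GA(G_P)$. By Lemma~\ref{lem:connection-general}(i), $\ap(P) = \ap(G_P)|_\ma$. The key claim to establish is that any automorphism $\mu$ of $G_P$ with $\mu|_\ma = \mathrm{id}$ is itself the identity: with $\ma$ fixed, $x$ is forced (unique vertex adjacent to every element of $\ma$), then its length-$(n{+}1)$ tail, then $y$ and its length-$(n{+}2)$ tail, then each $R_j \in \calN$ via its pattern of connections to $\ma$ (directly for committees, through $\calV$ for lists), and finally each tail in $\calT$ and each path in $\calV$ is pinned down by the degrees at its endpoints. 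Hence $G_P$ is rigid iff $\ap(P)$ has $m$ singletons. For $\decspace = \listset m$ this is exactly the ANR-possibility condition, and for $\decspace = \listset{m-1}$ the same equivalence holds because any partition of $\ma$ with at least $m-1$ singletons must have $m$ singletons (the one remaining element can only form its own orbit). Negating the $\GA$ answer completes both reductions.
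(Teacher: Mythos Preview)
Your proposal is correct and follows essentially the same strategy as the paper: for hardness, build a profile whose automorphism partition decomposes as $\ap(G)$ together with exactly $\kappa(m^*)-m$ extra singleton orbits and one non-singleton orbit, then count singletons via Proposition~\ref{prop:anr-possibility}(i); for easiness, convert $P$ to $G_{\hist(P)}$ and negate the answer to $\GA$.

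The organizational differences are minor. For the $(\committee 2,\listset\kappa)$ base case the paper builds an explicit graph $G^*$ by attaching to $G$ a distinguished vertex $x$ (joined to every vertex of $G$) with a pendant $x'$, an asymmetric tree $\calY$ of size $\kappa(m^*)-m-2$ whose vertices are all forced to be singleton orbits, and a cycle $\calZ$ of size $m^*-\kappa(m^*)$ forming a single orbit; it then lifts to general $\iota$ via the padding machinery of Theorems~\ref{thm:GI-hardness-ML} and~\ref{thm:GI-hardness-LL}. You instead invoke that padding machinery directly, with $\calX$ playing the role of the paper's $\{x,x'\}\cup\calY$ and your $\calY$ playing the role of the paper's $\calZ$. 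Both routes land on the same orbit count, and you correctly identify the negation at the end (the paper's text contains a sign slip in the final equivalence). Your easiness argument is also more explicit than the paper's one-line version: the rigidity claim you spell out---that any automorphism of $G_{\hist(P)}$ fixing $\ma$ pointwise must be the identity---is exactly what is needed to bridge $\GA(G_{\hist(P)})$ and the all-singletons condition on $\ap(P)$, and your observation that $m-1$ singletons forces $m$ singletons is precisely why the $\listset_{m-1}$ case coincides with the $\listset_m$ case.
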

\begin{proof} We present the $\GA$-hardness for $(\committee 2, \listset \kappa)$ and comment on how to extend the proof to other cases. 

\myparagraph{Hardness for $(\committee 2, \listset \kappa)$.} We provide a polynomial-time Turing reduction that solves a $\GA$ instance by converting it to an $\anrposs[(\committee 2, \listset \kappa)]$ instance. In fact, the reduction is a polynomial-time many-one reduction. For any $\GA$ instance $G$ with $m$ vertices, let $m^*\in\mathbb N$ be the smallest number such that $\kappa(m^*)\ge 2m+5$. We construct a graph $G^*$ from $G$ by adding the following new vertices and new edges, as illustrated in Figure~\ref{fig:ex-GA}.

\begin{figure}[htp]
\centering 
\includegraphics[width = .9\textwidth]{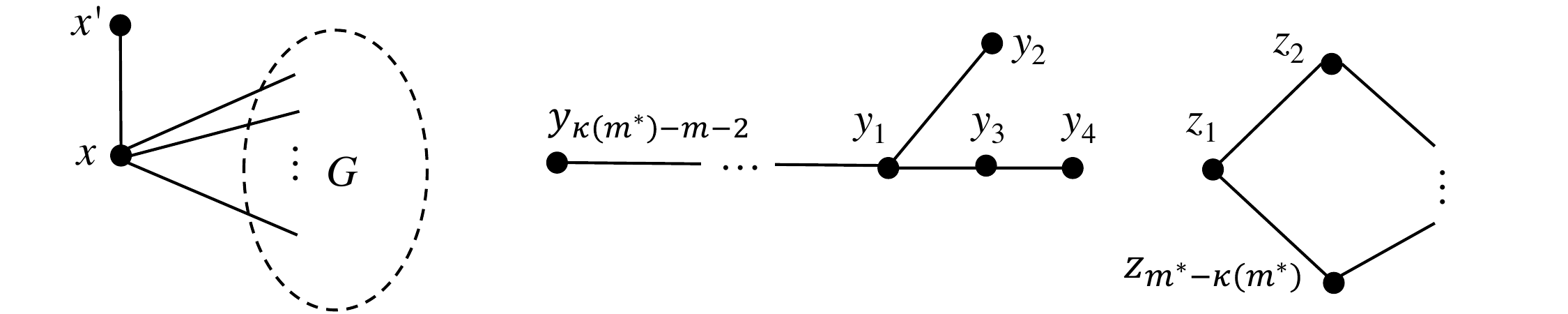}
\caption{Construction of $G^*$ from $G$. \label{fig:ex-GA}}
\end{figure}

\begin{itemize}
\item New vertices: $x,x'$. New edges: $x$ is connected to $x'$ and all vertices in $G$.
\item New vertices: $\calY = \{y_1,\ldots,y_{\kappa(m^*)-m-2}\}$. New edges: $\{y_1,y_2\},\{y_1,y_3\}, \{y_3,y_4\},\{y_1,y_5\},\{y_5,y_6\}$, $\ldots, \{y_{\kappa(m^*)-m-3},y_{\kappa(m^*)-m-2}\}$
\item New vertices: $\calZ = \{z_1,\ldots,z_{m^*-\kappa(m^*)}\}$. New edges: a cycle among these vertices.
\end{itemize}
Then, we construct a profile $P^*$ such that $\hist (P^*) = \vec h_{G^*}$, and run $\anrposs[(\committee 2, \listset \kappa)]$ on $P^*$. If the answer is YES, then we output YES to the $\GA$ instance; otherwise, we output NO.

Because $m^* = O(m^{1/\epsilon})$, the reduction takes polynomial time. To see its correctness, we note that for every automorphism $\sigma$ of $G^*$, we have $\sigma(x) = x$; $\sigma(x') = x'$; for every $y\in \calY$, $\sigma(y)\in\calY$; and for every $z\in\calZ$, $\sigma(z)\in\calZ$. Moreover, every $y\in \calY$ is a singleton in $\ap(G^*)$ and all vertices in $\calZ$ constitutes a set in $\ap(G^*)$. It follows that the number of singletons in $\ap(G^*)$ is the same as the number of singletons in $\ap(G)$ plus $\kappa(m^*)-m$ (i.e., $\calZ$, $x$, and $x'$). Consequently, according to Proposition~\ref{prop:anr-possibility}, the answer to $\anrposs$ is YES if and only if the number of singletons in $\ap(P^*)=\ap(G^*)$ is no more than $\kappa(m^*)-1$, which is equivalent to the number of singletons in $\ap(G)$ being no more than $m-1$, i.e., $G$ has a non-trivial automorphism.

\myparagraph{Hardness for $(\committee \iota, \listset \kappa)$ and $(\listset \iota, \listset \kappa)$.} The proof is done by showing a polynomial-time Turing reduction from the $(\committee 2, \listset \kappa)$ case  in the same way as in the  proof of Theorem~\ref{thm:GI-hardness-ML} and the proof of Theorem~\ref{thm:GI-hardness-LM}. The construction guarantees that $\ap(G)$ in the $(\committee 2, \listset \kappa)$ case is preserved, alternatives in $\calY$ are singletons, and alternatives in $\calZ$ are non-singletons.

\myparagraph{$\GA$-easiness.} For any instance $P$, we convert it to $G_{\hist(P)}$ using Definition~\ref{dfn:histogram->graph}. Then, we ask $\GA$ on $G$---if the answer is YES, then the answer to $P$ is NO; otherwise, the answer to $P$ is YES. 
\end{proof}

\section{Proof of the Correctness of Theorem~\ref{thm:GI-hardness-MM}$^-$}
\label{app:proof-thm:GIC-C2C1}
Clearly, the reduction is a polynomial-time Turing reduction. To see its correctness, suppose the $\GI$ instance is a YES instance. Then, there exists a permutation $\mu$ over $[m]$ such that $\mu(G_1) = G_2$. Then, $P_{\mu(1)}$ is a NO instance of $\anrposs[(\committee 2,\committee 1)]$ because $P_{\mu(1)}$ does not have a fixed-point decision (which is equivalent to $\ap(P_{\mu(1)})$  not  containing a singleton according to Proposition~\ref{prop:verification}): notice that  under the isomorphism $\mu$, where for each $i\in [m]$, $i$ in $G_1^*$ is mapped to $\mu(i)$ in the $G_2^{\mu(1)}$ part of the graph; and   the cycle of length $m+1$ in $G_1^*$ is mapped the cycle of length $m+1$ in the $G_2^{\mu(1)}$ part of the graph.

Conversely, suppose there exists $i^*\in [m]$ such that $P_{i^*}$ is a NO instance of $\anrposs[(\committee 2,\committee 1)]$. This means that alternative $1$ in $G_1^*$ is not a fixed-point decision, or equivalently, not a singleton in $\ap(P_{i^*})$ according to Proposition~\ref{prop:verification}. Due to the cycle attached to it, the only other alternative it can be mapped to is alternative $i^*$ in the $G_2^{i^*}$ part of the graph. Therefore, there exists a permutation $\sigma$ over the alternatives (all vertices in $G_1^*$ and all vertices in $G_2^{i^*}$) such that $i$ in $G_1^*$ is mapped to $i^*$ in $G_2^{i^*}$.   It follows that each vertex in $G_1^*$ must be mapped to a vertex in $G_2^{i^*}$ while maintaining the connectivity, which means that  $\sigma$ can be viewed as an isomorphism between $G_1$ and $G_2$, and hence the $\GI$ instance is a YES instance.

\end{document}